\documentclass{article}
\usepackage[utf8]{inputenc}
\usepackage[margin=1in]{geometry}

\usepackage{multirow}
\usepackage{booktabs}
\usepackage{amssymb,amsmath,amsthm,centernot,graphicx}
\usepackage{dsfont}
\usepackage[sort]{natbib}
\usepackage{enumitem}
\usepackage[size = small]{caption}
\usepackage[colorlinks = true, citecolor = blue]{hyperref}
\usepackage[capitalise]{cleveref}

\usepackage{refcount}
\usepackage{fmtcount}
\usepackage{mathtools}
\usepackage{algcompatible}
\usepackage{algorithm}

\newcommand{\E}{\mathbb{E}}

\newcommand{\R}{\mathbb{R}}

\newcommand{\F}{\mathcal{F}}

\renewcommand{\P}{\mathbb{P}}

\newcommand{\full}{\textup{full}}
\newcommand{\msm}{\mathsf{MSM}}
\newcommand{\Pobs}{P}
\newcommand{\Pfull}{P_{\full}}
\newcommand{\indep}{\rotatebox[origin=c]{90}{$\models$}}
\renewcommand{\d}{\textup{d}}
\newcommand{\T}{1}
\newcommand{\C}{0}
\newcommand{\Var}{\textup{Var}}
\newcommand{\sign}{\operatorname{sign}}
\newcommand{\ATE}{\textup{ATE}}
\newcommand{\ATT}{\textup{ATT}}

\DeclareMathOperator*{\argmin}{argmin}

\newcommand{\eg}{\textit{e.g.}}
\newcommand{\ie}{\textit{i.e.}}

\newtheorem{condition}{Condition}
\newtheorem{definition}{Definition} 
\newtheorem{theorem}{Theorem} 
\newtheorem{corollary}{Corollary} 
\newtheorem{lemma}{Lemma} 
\newtheorem{proposition}{Proposition} 
\allowdisplaybreaks

\title{Doubly-Valid/Doubly-Sharp Sensitivity Analysis for\\Causal Inference with Unmeasured Confounding}
\date{First version: December 21, 2021\qquad This version: July 22, 2022}
\newcommand*\samethanks[1][\value{footnote}]{\footnotemark[#1]}
\author{Jacob Dorn\thanks{Alphabetical Order} \\ Princeton University
\and Kevin Guo\samethanks \\ Stanford University
\and Nathan Kallus\samethanks \\ Cornell University}

\begin{document}
\maketitle

\begin{abstract}
We consider the problem of constructing bounds on the average treatment effect (ATE) when unmeasured confounders exist but have bounded influence.  Specifically, we assume that omitted confounders could not change the odds of treatment for any unit by more than a fixed factor.  We derive the sharp partial identification bounds implied by this assumption by leveraging distributionally robust optimization, and we propose estimators of these bounds with several novel robustness properties.  The first is \emph{double sharpness}:  our estimators consistently estimate the sharp ATE bounds when one of two nuisance parameters is misspecified and achieve semiparametric efficiency when all nuisance parameters are suitably consistent.  The second is \emph{double validity}: even when most nuisance parameters are misspecified, our estimators still provide valid but possibly conservative bounds for the ATE and our Wald confidence intervals remain valid even when our estimators are not asymptotically normal. As a result, our estimators provide a highly credible method for sensitivity analysis of causal inferences.
\end{abstract}

\section{Introduction}

Investigating causal relationships using only observational data is often necessary when experimentation is infeasible.  In the absence of a natural experiment, estimation often proceeds under the ``unconfoundedness" assumption that all relevant confounders have been measured.  Since this assumption is fundamentally untestable, it is imperative to conduct sensitivity analyses that explore how unobserved confounders might affect our inferences.

We consider sensitivity analysis under a nonparametric relaxation of unconfoundedness known as the \emph{marginal sensitivity model} (MSM).  This assumption allows for the existence of arbitrarily many unmeasured confounders, but posits that within each stratum of the observed covariates, measuring these confounders would not change the odds of receiving treatment by more than some factor. The posited factor can be varied by the data analyst to assess the robustness of conclusions to different magnitudes of unobserved confounding.  The MSM was introduced by \citet{tan2006} and has been applied in \citet{kallus2018interval, zsb2019, kallus2020confoundingrobust, kallus_zhou2020, lee2021causal, rosenman2020combining, dornGuo2021sharp, rosenman2021designing,soriano2021interpretable, jin2021sensitivity, yin2021conformal, nie2021covariate}, among others. 

This paper draws a connection between sensitivity analysis under the MSM and classical results on distributionally robust optimization.  In particular, we show that the problem of computing sharp upper and lower bounds on counterfactual means and treatment effects in the MSM reduces to the dual problem defining the \emph{conditional value at risk} of a probability distribution \citep{rockafellar2000}.  Using this observation, we derive new ``outcome regression" formulae for the optimal bounds on the average treatment effect (ATE) under the MSM.  These formulae complement the ``propensity weighting" formulae given in \cite{dornGuo2021sharp, jin2021sensitivity}. 

Based on this characterization, we propose new estimators for the sharp ATE bounds which we call the Doubly-Valid/Doubly-Sharp (DVDS) estimators.  These estimators combine three nuisances --- a quantile regression, a propensity model, and a transformed-outcome regression 
--- to yield a number of desirable robustness properties:

\begin{enumerate}
    \item \textbf{Double sharpness}:\quad
    The DVDS estimators have multiple opportunities to converge to the sharp bounds on the ATE.  In particular, if the quantile regression is well-specified, consistency holds as long as \emph{either} the propensity score \emph{or} the transformed-outcome regression is estimated consistently.  This partial double robustness also implies that the DVDS estimators are locally robust: even if all first-stage nuisance functions are consistently estimated at slow (nonparametric) rates of convergence, the DVDS estimators will still converge to the sharp ATE bounds at fast parametric rates.  Moreover, they will be asymptotically normal with semiparametrically efficient variance, so simple Wald confidence intervals are asymptotically calibrated and optimal.
    
    \item \textbf{Double validity}:\quad 
    Even when all opportunities for the DVDS estimators to consistently estimate the sharp ATE bounds fail, there are still multiple opportunities for these estimators to converge to \emph{valid} (but possibly conservative) bounds on the ATE.  This happens even when the quantile regression is misspecified, as long as one of the other two nuisances is well-specified.  This special structure also implies a novel inference result:  if both the propensity score and transformed-outcome regression are consistently estimated at nonparametric rates, then Wald confidence intervals remain asymptotically valid even though the DVDS point estimators may fail to be asymptotically normal or even consistent. Typically, Wald intervals based on such poorly-behaved estimators do not have any guarantees.
\end{enumerate}

No efficient or doubly-robust estimators were previously known under the MSM.  Moreover, all previously-proposed confidence intervals under the MSM \citep[e.g.][]{tan2006, zsb2019, dornGuo2021sharp, soriano2021interpretable} have coverage guarantees only when first-stage nuisance functions are estimated by parametric models.  In contrast, our method's estimation and inference can be valid even when nuisance functions are estimated by black-box machine learning techniques, possibly even inconsistently.  

Double validity is --- to our knowledge --- a never-before-seen robustness property that uniquely enables credible sensitivity analysis.  It is a refinement of the single validity guarantee in \citet{dornGuo2021sharp}.  Their sensitivity analysis estimates sharp bounds when both the propensity score and quantile regression are well-specified and estimates valid bounds when only the propensity score is well-specified.  Our estimator offers additional protection by having an honest shot at valid inference when the quantile regression estimator is inconsistent and at valid estimates when the propensity score is inconsistent as well.  These validity guarantees are particularly relevant as quantile regression may be more challenging than the usual propensity score and outcome regression problems.

Sensitivity analysis using DVDS estimators is especially convenient when the outcome variable is binary.  The binary-outcome DVDS estimators have closed-form expressions in terms of the outcome regression and the propensity score, meaning they can be computed ``for free" alongside the usual Augmented Inverse Propensity Weighting (AIPW) estimator of the ATE \citep{RRZDoubleRobust}.  Moreover, the validity of the DVDS confidence intervals holds under the same product-of-rates conditions typically used in the analysis of the AIPW estimator. 

The rest of this paper is organized as follows:  \cref{sec:problem_statement} introduces the sensitivity assumption; \cref{sec:partial_id} characterizes the sharp partial identification region; \cref{sec:estimator} defines the DVDS estimators; Sections \ref{sec:point_estimation_guarantees}, \ref{sec:InferenceOnBounds} present our statistical guarantees for point estimates and confidence intervals, respectively;  \cref{sec:experiments} illustrates the DVDS estimators empirically; and \cref{sec:discussion} concludes.  All proofs are deferred to the appendix.

\section{Sensitivity assumption and problem statement} \label{sec:problem_statement}

In this section, we describe the mathematical setting of this paper and introduce the MSM.

We consider the Neyman-Rubin potential outcomes model with a binary treatment \citep{neyman, rubin1974}.  We assume $n$ units $(X_i, Y_i(0), Y_i(1), Z_i, U_i)$ are sampled independently from a distribution $\Pfull$.  Here, $(X_i, U_i) \in \R^d \times \R^k$ is a vector of confounders, $( Y_i(0), Y_i(1) )$ are real-valued potential outcomes, and $Z_i \in \{ 0, 1 \}$ is a binary treatment.  However, the data analyst only observes $(X_i, Y_i, Z_i)$ where $Y_i = Y_i(Z_i)$.  The confounders $U_i$ are also never observed, and their dimension $k$ may be unknown.  We use $\Pobs$ to denote the distribution of the observables, so that $(X_i,Z_i,Y_i)\sim \Pobs$ are independent and identically distributed. We omit the subscript $i$ to denote a generic draw from $\Pfull$ or $\Pobs$.

Our goal is to use the observed data to draw inferences about the average treatment effect $\psi_{\ATE}(\Pfull) = \E_{\Pfull}[Y(1) - Y(0)]$ and the counterfactual means $\psi_z(\Pfull) = \E_{\Pfull}[Y(z)], z \in \{ 0, 1 \}$.  Extensions to the average treatment effect on the treated are presented in the Appendix.

Since we allow for unmeasured confounders, these quantities are not point-identified (meaning, not functions of $\Pobs$ alone) and cannot be consistently estimated.  Nevertheless, they may be bounded if we impose some assumptions on the unmeasured confounders $U_i$.  In this paper, we assume that measuring unobserved confounders could not change the odds of treatment by more than some factor $\Lambda \geq 1$.  Formally, this restricts the distribution $\Pfull$ to the \emph{marginal sensitivity model} \citep{tan2006}.

\begin{definition}[Marginal sensitivity model]
\label{definition:msm}
Given an observed-data distribution $\Pobs$ and an odds ratio bound $\Lambda \geq 1$, the \emph{marginal sensitivity model} $\msm(\Pobs, \Lambda)$ is the set of distributions $\Pfull$ on $(X,Y(0),Y(1),Z,U)$ with $U\in\R^k$ for some $k$ satisfying the following properties:
\begin{enumerate}[label=(\alph*), itemsep=-1ex, topsep=-1ex]
    \item \emph{Unconfoundedness}. $( Y(0), Y(1) ) \, \indep \, Z \mid (X, U)$ under $\Pfull$. \label{condition:latent_unconfoundedness}
    \item \emph{Bounded odds ratio}. Measuring $U$ does not change the odds of treatment by more than a factor of $\Lambda$:
    \begin{align}
        \Lambda^{-1} \leq \frac{\Pfull(Z = 1 \mid X, U)}{\Pfull(Z = 0 \mid X, U)} \bigg/ \frac{\Pfull(Z = 1 \mid X)}{\Pfull(Z = 0 \mid X)} \leq \Lambda,\quad\text{$\Pfull$-almost surely.} \label{eq:odds_ratio}
    \end{align} \label{condition:bounded_odds_ratio}
    \item \emph{Matches observable law}. The distribution of $(X, Y(Z), Z)$ induced under $\Pfull$ is equal to $\Pobs$. \label{condition:observable_implications}
\end{enumerate}
\end{definition}

The identifying power of this assumption come from the bounded odds ratio condition \ref{condition:bounded_odds_ratio}.  When $\Lambda = 1$, this condition implies that measuring $U_i$ would not change the odds of treatment at all, \ie, the observed data is already unconfounded. As $\Lambda \rightarrow \infty$, stronger and stronger confounding is allowed.  Condition \ref{condition:latent_unconfoundedness} is innocuous as it may always be satisfied by taking $U = ( Y(0), Y(1) )$.  We frame the assumption in terms of an unobserved variable $U$ since it may be helpful to have some specific confounders in mind when choosing $\Lambda$.  Meanwhile, any distribution not satisfying \ref{condition:observable_implications} evidently cannot be the true complete-data distribution.

For any finite $\Lambda$, the assumption $\Pfull \in \msm(\Pobs, \Lambda)$ restricts the estimands $\psi_0, \psi_1, \psi_{\textup{ATE}}$ to a closed interval known as the \emph{partially identified set} \citep[Theorems 1 and 2]{dornGuo2021sharp}:  
\begin{align}
[\psi_{\diamond}^-,~ \psi_{\diamond}^+] &= \left[ \inf_{\Pfull \in \msm(\Pobs, \Lambda)} \psi_{\diamond}(\Pfull),~ \sup_{\Pfull \in \msm(\Pobs, \Lambda)} \psi_{\diamond}(\Pfull) \right] \quad \text{for} \quad \diamond \in \{ 0, 1, \textup{ATE} \}. \label{eq:partially_identified_set}
\end{align}
The quantities $\psi_{\diamond}^-, \psi_{\diamond}^+$ are called the sharp upper and lower bounds for $\psi_{\diamond}$, as they are the smallest upper bound and largest lower bound on $\psi_{\diamond}$ that can be derived from observations from $\Pobs$, assuming the MSM holds.  The main goals of this paper are to develop robust estimators of these bounds and reliable confidence intervals for the partially identified set. For the rest of the paper, excluding \cref{sec:experiments}, we will fix some $\Lambda\geq1$ and study the corresponding sharp bounds and partially identified set.

\bigskip 
\noindent \textbf{Notations}.  
We set $\tau=\frac{\Lambda}{\Lambda + 1}\in[0.5,1)$ throughout the paper.
For a real number $t$, we set $\{ t \}_+ = \max \{ t, 0 \}$, $\{ t \}_- = \min \{ t, 0 \}$, and $\sign(t) = 1$ if $t \geq 0$ and $-1$ otherwise.  For a real-valued function $f(x, y, z)$, we set $|| f ||_q = (\int |f|^q \, \d P)^{1/q}$ if $q \in [1, \infty)$ and $|| f ||_{\infty} = \inf \{ t \in \R \, : \, P( | f(X,Y,Z)| \leq t ) = 1 \}$.  We also define the following functions:
\begin{align*}
e(x) &= \Pobs(Z = 1 \mid X = x) &&\text{Propensity score}\\
\mu(x, z) &= \E[Y \mid X = x, Z = z] &&\text{Outcome regression}\\
F(y \mid x, z) &= \Pobs(Y \leq y \mid X = x, Z = z) &&\text{Conditional outcome distribution}\\
Q_{\alpha}(x, z) &= \inf \{ q \, : \, F(q \mid x, z) \geq \alpha \} &&\text{Conditional outcome quantile}
\end{align*}
We assume throughout that $Y$ has a finite mean and that the propensity score is bounded away from zero and one.  These conditions ensure that the above functions are well-defined.  Finally, we adopt the following convention regarding the symbols $\pm$ and $\mp$:  any equation containing these symbols should be read twice, first with $\pm = +$ and then with $\pm = -$.  Meanwhile, $\mp$ should always read with the opposite sign as $\pm$.  This convention helps to unify formulae for upper and lower bounds.

\section{Partial identification and singly-valid estimation} \label{sec:partial_id}

In this section, we present two characterizations of the sharp partial identification bounds.  The first is based on inverse propensity weighting (IPW) and extends the results of  \citet{dornGuo2021sharp, jin2021sensitivity}.  The second is based on regression adjustment and is derived using classical results from distributionally robust optimization \citep{rockafellar2000}.  These characterizations motivate certain plug-in estimators of the identified set, which we combine in \Cref{sec:estimator} to form the DVDS estimators.

For simplicity, this section focuses on $\psi_1^+$, the sharp upper bound for $\E_{\Pfull}[Y(1)]$.  The same arguments apply to characterize $\psi_1^-, \psi_0^+$ or $\psi_0^-$, after exchanging $Y$ with $-Y$ and/or $Z$ with $1 - Z$.  These bounds can then be subtracted to obtain sharp bounds on the ATE \citep[Theorem 2]{dornGuo2021sharp}.  Note that, in contrast, in other sensitivity models, subtracting sharp bounds on counterfactual means does not always give sharp bounds for the ATE \citep[Remark 3]{yadlowsky2018bounds}.  This makes the MSM especially convenient for sensitivity analysis.

\subsection{Formulation using adversarial propensities} \label{sec:eform}

In the absence of unobserved confounding, counterfactual averages can be identified using the IPW formula $\E_{\Pfull}[Y(1)] = \E[ YZ / e(X)]$.  \citet[Proposition 2]{dornGuo2021sharp} shows that a similar formula identifies $\psi_1^+$ once the propensity score $e(X)$ is replaced by an ``adversarial" propensity score.%

\begin{proposition}
\label{prop:ipw}
There exists a function $e_+(x, y)$ satisfying $\E[ Z / e_+(X, Y) \mid X ] = 1$ and
\begin{align}
\frac{e_+(x, y)}{1 - e_+(x, y)} \bigg/ \frac{e(x)}{1 - e(x)} &= \left\{
\begin{array}{ll}
\Lambda^{-1} &\text{when } y > Q_{\tau}(x, 1)\\
\Lambda &\text{when } y < Q_{\tau}(x, 1)
\end{array}
\right. .
\label{eq:saturating_or_bounds}
\end{align}
This function is the adversarial propensity score in the sense that $\psi_1^+ = \E[YZ/e_+(X, Y)]$.
\end{proposition}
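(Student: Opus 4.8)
The plan is to reduce the supremum defining $\psi_1^+$ to a family of one‑dimensional linear programs over ``tilting weights'', to solve each program by a short exchange argument, and then to read $e_+$ off the optimizer. First I would rewrite the objective in terms of the observable law. For any $\Pfull\in\msm(\Pobs,\Lambda)$ write $e_{\full}(x,u)=\Pfull(Z=1\mid X=x,U=u)$. Latent unconfoundedness gives $\E_{\Pfull}[Y(1)Z/e_{\full}(X,U)]=\E_{\Pfull}[Y(1)]$, and since $Y(1)Z=YZ$ we may replace $Y(1)$ by $Y$; conditioning on $(X,Y,Z)$ and then using the fact that $\Pfull$ matches $\Pobs$ on the observables yields $\E_{\Pfull}[Y(1)]=\E_{\Pobs}[ZY\,w(X,Y)]$ with $w(x,y)=\E_{\Pfull}[1/e_{\full}(X,U)\mid X=x,Y=y,Z=1]$. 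Writing $1/e_{\full}=1+(1-e_{\full})/e_{\full}$ and using \eqref{eq:odds_ratio} to bound $(1-e_{\full})/e_{\full}$ against $(1-e(x))/e(x)$ shows $w(x,y)=1+\frac{1-e(x)}{e(x)}g(x,y)$ for some $g$ with $g(x,\cdot)$ valued in $[\Lambda^{-1},\Lambda]$, while the trivial identity $\E_{\Pfull}[Z/e_{\full}(X,U)\mid X]=1$ (since $\E_{\Pfull}[Z\mid X,U]=e_{\full}(X,U)$) becomes the per‑stratum normalization $\E_{\Pobs}[g(X,Y)\mid X=x,Z=1]=1$. Conversely, any measurable $g$ obeying these two properties is realized by some $\Pfull\in\msm(\Pobs,\Lambda)$: take $U=(Y(0),Y(1))$, let $Y(1)\mid X=x$ have density $e(x)+(1-e(x))g(x,\cdot)$ relative to $F(\cdot\mid x,1)$ (a valid mean‑one reweighting because $g\geq\Lambda^{-1}>0$), set $\Pfull(Z=1\mid X=x,Y(1)=v)=e(x)/(e(x)+(1-e(x))g(x,v))$, and attach $Y(0)$ with the observed arm‑specific law; one checks the three defining conditions of the MSM by direct computation (or imports this correspondence from \citet[Theorem 1]{dornGuo2021sharp}). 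Hence $\psi_1^+$ equals the supremum of $\E_{\Pobs}[ZY(1+\frac{1-e(X)}{e(X)}g(X,Y))]$ over this feasible set.

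Next I would solve that optimization. Conditioning on $X$ decouples the problem across covariate values and leaves only the term $\int y\,g(x,y)\,\d F(y\mid x,1)$ depending on $g$, so for each $x$ it suffices to maximize $\int y\,g(x,y)\,\d F(y\mid x,1)$ over $g(x,\cdot):\R\to[\Lambda^{-1},\Lambda]$ subject to $\int g(x,y)\,\d F(y\mid x,1)=1$. This is exactly the dual program for a conditional value at risk \citep{rockafellar2000}, whose optimizer is bang‑bang and loads the largest weight on the largest values of $Y$: with $q=Q_{\tau}(x,1)$, set $g^\ast(x,y)=\Lambda$ for $y>q$, $g^\ast(x,y)=\Lambda^{-1}$ for $y<q$, and $g^\ast(x,q)=c\in[\Lambda^{-1},\Lambda]$ on the atom $\{y=q\}$ (if any), with $c$ chosen so that $\int g^\ast\,\d F(\cdot\mid x,1)=1$. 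When $F(\cdot\mid x,1)$ is continuous at $q$ the constraint reads $\Lambda(1-F(q\mid x,1))+\Lambda^{-1}F(q\mid x,1)=1$, whose unique solution is $F(q\mid x,1)=\Lambda/(\Lambda+1)=\tau$; in general the quantile sandwich $F(q^-\mid x,1)\leq\tau\leq F(q\mid x,1)$ is precisely what guarantees a feasible $c\in[\Lambda^{-1},\Lambda]$. Optimality is then a one‑line exchange argument: for any feasible $g$ the difference $D=g-g^\ast$ has $\int D\,\d F=0$, $D\leq 0$ on $\{y>q\}$, and $D\geq 0$ on $\{y<q\}$, so $\int y\,g\,\d F-\int y\,g^\ast\,\d F=\int (y-q)\,D\,\d F\leq 0$.

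It then remains to define $e_+(x,y):=1/w^\ast(x,y)$, where $w^\ast(x,y)=1+\frac{1-e(x)}{e(x)}g^\ast(x,y)=(e(x)+(1-e(x))g^\ast(x,y))/e(x)$, which lies in $(1,\infty)$ so that $e_+\in(0,1)$, and to check the three assertions. First, $\frac{e_+(x,y)}{1-e_+(x,y)}\big/\frac{e(x)}{1-e(x)}=1/g^\ast(x,y)$, which equals $\Lambda^{-1}$ when $y>Q_{\tau}(x,1)$ and $\Lambda$ when $y<Q_{\tau}(x,1)$; this is \eqref{eq:saturating_or_bounds}, and the value on $\{y=Q_{\tau}(x,1)\}$ is deliberately left unspecified in the statement, accommodating an atom at the quantile. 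Second, $\E[Z/e_+(X,Y)\mid X=x]=\E[Z\,w^\ast(X,Y)\mid X=x]=e(x)+(1-e(x))\E_{\Pobs}[g^\ast(X,Y)\mid X=x,Z=1]=1$ by the normalization. Third, $\E[ZY/e_+(X,Y)]=\E[ZY\,w^\ast(X,Y)]$, which equals $\psi_1^+$ by the first two paragraphs.

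The main obstacle is the ``converse'' half of the reduction together with the bookkeeping around atoms in the pointwise optimization. For the converse, one must exhibit a joint law on $(X,Y(0),Y(1),Z,U)$ that simultaneously reproduces both observed treatment‑arm distributions and obeys the almost‑sure odds‑ratio bound; the construction above does this, but verifying all three MSM conditions is the most mechanical part of the argument (and may instead be cited). For the atoms, the clean exchange argument only closes once the mass at $Q_{\tau}(x,1)$ is absorbed into the constant $c$, and one must confirm via the quantile sandwich inequalities that $c\in[\Lambda^{-1},\Lambda]$ so that $e_+$ is a bona fide $(0,1)$‑valued ``propensity'' --- this is exactly the place where the conditional value at risk calculus of \citet{rockafellar2000} does the work.
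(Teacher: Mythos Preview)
Your proof is correct and substantially more detailed than what the paper offers: the paper's entire proof of \Cref{prop:ipw} is the one-line statement ``This is a restatement of \citet[Proposition 2]{dornGuo2021sharp}.'' You have reconstructed the argument that sits behind that citation. The reduction to the family of constrained linear programs over tilting weights $g(x,\cdot)\in[\Lambda^{-1},\Lambda]$ with $\int g\,\d F(\cdot\mid x,1)=1$, the bang-bang optimizer obtained by the exchange inequality $\int(y-q)(g-g^\ast)\,\d F\leq 0$, and the identification of the threshold via $F(q\mid x,1)=\tau$ are all correct and are precisely the ingredients of the Dorn--Guo result. Your handling of a possible atom at $Q_\tau(x,1)$ through the constant $c\in[\Lambda^{-1},\Lambda]$ is the right device and matches how the proposition deliberately leaves the boundary value of $e_+$ unspecified.

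The one place where you correctly flag your own exposure is the converse: showing that every feasible $g$ is realized by some $\Pfull\in\msm(\Pobs,\Lambda)$. Your sketch (take $U=(Y(0),Y(1))$, tilt the marginal of $Y(1)$ by $e(x)+(1-e(x))g(x,\cdot)$, and set the treatment probability to $e(x)/(e(x)+(1-e(x))g(x,v))$) is the standard construction and works, but verifying conditions \ref{condition:latent_unconfoundedness}--\ref{condition:observable_implications} in full requires also specifying the $Y(0)$ marginal and the joint coupling of $(Y(0),Y(1))$ so that the $Z=0$ arm of $\Pobs$ is reproduced; a conditionally independent coupling suffices. Since the paper itself simply cites \citet{dornGuo2021sharp} for the whole proposition, deferring this mechanical verification to that reference (as you suggest) is entirely in line with the paper's own level of detail.
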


The interpretation of Proposition \ref{prop:ipw} is the following.  To maximize $\E_{\Pfull}[Y(1)]$, the worst-case confounding structure in $\msm(\Pobs, \Lambda)$ treats all ``small" values of $Y(1)$ with the largest allowed probability and treats all ``large" values of $Y(1)$ with the smallest allowed probability.  The cutoff between large and small is determined separately for each covariate level $x$.  Similar cutoff phenomena are observed in many other partial identification problems \citep{AronowLeeInterpretable, manski2003partial, lee2009training, jin2021sensitivity}.

Despite its interpretability, it is not obvious how the identification formula from \Cref{prop:ipw} might be used for estimation.  One issue is that \Cref{eq:saturating_or_bounds} does not specify the value of the worst-case propensity score $e_+(x, y)$ on the boundary event $y = Q_{\tau}(x, 1)$.  

We now derive a new ``propensity weighting" identification formula from \Cref{prop:ipw} that is more useful for estimation purposes.  Since the worst-case propensity score $e_+$ satisfies $\E[ Z / e_+(X, Y) \mid X] = 1$, we may add $0 = \E[ Q_{\tau}(X, 1) Z / e(X) - Q_{\tau}(X, 1) Z / e_+(X, Y)]$ in \cref{prop:ipw} to obtain:
\begin{align}
\psi_1^+ &= \E \left[ \frac{Q_{\tau}(X, 1) Z}{e(X)} + \frac{( Y - Q_{\tau}(X, 1) ) Z}{e_+(X, Y)} \right] \nonumber \\
&= \E \left[ \frac{Q_{\tau}(X, 1) Z}{e(X)} + ( Y - Q_{\tau}(X, 1) ) Z \left( 1 + \frac{1 - e(X)}{e(X)} \Lambda^{\sign( Y - Q_{\tau}(X, 1) )} \right) \right]. \label{eq:robust_ipw_formula}
\end{align}
In the final step, we plugged in the value of $e_+(X,Y)$ specified by \Cref{eq:saturating_or_bounds}.  This was possible even when $Y = Q_{\tau}(X, 1)$ because, on that event, the second term in \cref{eq:robust_ipw_formula} is zero anyway.

The identification formula in \Cref{eq:robust_ipw_formula} is fully explicit and suggests a natural strategy for estimating $\psi_1^+$:  first, estimate the propensity score $e$ and the conditional quantile $Q_{\tau}$ from data, then plug these estimated quantities into \Cref{eq:robust_ipw_formula}.

The following Proposition shows that the natural strategy
is \emph{singly valid}:  if the analyst's estimate of the propensity score is correct, then the plug-in estimator based on \Cref{eq:robust_ipw_formula} yields a valid (but possibly conservative) upper bound for $\psi_1^+$ even if the analyst estimates the quantile function $Q_{\tau}$ incorrectly.  This property will play a crucial role in our estimation results to come.

\begin{proposition} \label{prop:ipw_valid}
Plugging any fixed integrable function $\hat{Q}_{\tau}$ into \Cref{eq:robust_ipw_formula} gives an upper bound for $\psi_1^+$:
\begin{align*}
\psi_1^+ \leq \E \left[ \frac{\hat{Q}_{\tau}(X, 1) Z}{e(X)} + ( Y - \hat{Q}_{\tau}(X, 1) ) Z \left(1 + \frac{1 - e(X)}{e(X)} \Lambda^{\sign ( Y - \hat{Q}_{\tau}(X, 1) )} \right) \right].
\end{align*}
Equality is attained if $\hat{Q}_{\tau}$ is the true conditional quantile $Q_{\tau}$.
\end{proposition}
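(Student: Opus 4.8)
The plan is to derive the inequality directly from \Cref{prop:ipw}, reusing the rearrangement behind \Cref{eq:robust_ipw_formula} and then bounding what remains by a pointwise (in $(X,Y)$) estimate; the equality statement is then immediate. Write $q(x) := \hat{Q}_\tau(x,1)$, which is an integrable function of $x$ by hypothesis, and let $e_+$ be the adversarial propensity from \Cref{prop:ipw}, so that $\psi_1^+ = \E[YZ/e_+(X,Y)]$ and $\E[Z/e_+(X,Y)\mid X] = 1$. The latter property, applied to $e_+$ and (with $e$ in place of $e_+$) to $e$, gives $\E[q(X)Z/e_+(X,Y)] = \E[q(X)] = \E[q(X)Z/e(X)]$; adding $0 = \E[q(X)Z/e(X)] - \E[q(X)Z/e_+(X,Y)]$ to $\psi_1^+ = \E[YZ/e_+(X,Y)]$ therefore yields
\begin{align*}
\psi_1^+ = \E\!\left[\frac{q(X)Z}{e(X)}\right] + \E\!\left[\frac{(Y - q(X))\,Z}{e_+(X,Y)}\right],
\end{align*}
so it suffices to bound the second expectation above by $\E[(Y - q(X))\,Z\,(1 + \tfrac{1-e(X)}{e(X)}\Lambda^{\sign(Y - q(X))})]$, with equality when $q = Q_\tau(\cdot,1)$.

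The crux is the pointwise bound, valid $\Pobs$-a.s.\ on $\{Z = 1\}$,
\begin{align*}
\frac{Y - q(X)}{e_+(X,Y)} \;\le\; (Y - q(X))\left(1 + \frac{1 - e(X)}{e(X)}\,\Lambda^{\sign(Y - q(X))}\right),
\end{align*}
after which multiplying by $Z \ge 0$ and taking expectations finishes the argument. Its justification is elementary: because $e_+$ satisfies the $\Lambda$-bounded-odds-ratio condition, $1/e_+(x,y)$ lies in the interval $[\,1 + \tfrac{1-e(x)}{e(x)}\Lambda^{-1},\ 1 + \tfrac{1-e(x)}{e(x)}\Lambda\,]$, and the affine map $t \mapsto (y - q(x))\,t$ over that interval is maximized at the right endpoint when $y > q(x)$ and at the left endpoint when $y < q(x)$; in either case — and, vacuously, when $y = q(x)$ — the maximizing endpoint is $1 + \tfrac{1-e(x)}{e(x)}\Lambda^{\sign(y - q(x))}$, under the convention $\sign(0) = 1$. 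When $q = Q_\tau(\cdot,1)$ the inequality is in fact an equality: on $\{Y \ne Q_\tau(X,1)\}$ the exponent selects exactly the case of \Cref{eq:saturating_or_bounds}, so the right-hand side equals $1/e_+(X,Y)$, while on $\{Y = Q_\tau(X,1)\}$ both sides vanish because $Y - q(X) = 0$. Integrating then recovers \Cref{eq:robust_ipw_formula}.

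I expect the only genuine obstacle to be a boundary technicality in the pointwise bound: \Cref{eq:saturating_or_bounds} does not pin down $e_+$ on the event $\{Y = Q_\tau(X,1)\}$, which need not be $\Pobs$-null, and on which $Y - q(X)$ does not vanish when $\hat{Q}_\tau \ne Q_\tau$, so the argument above needs $e_+$ to obey the odds-ratio constraint there as well. This is true because the $e_+$ supplied by \Cref{prop:ipw} is the treatment probability of an actual distribution in $\msm(\Pobs,\Lambda)$, for which \Cref{eq:odds_ratio} forces the odds-ratio bound $\Pfull$-almost surely; making it precise amounts to quoting the construction behind \Cref{prop:ipw}. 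An alternative route that never touches $e_+$ on the boundary is to write the claimed right-hand side as a functional $B(\hat{Q}_\tau)$ of the plugged-in function; conditioning on $(X, Z = 1)$ and expanding $(Y - q(X))\Lambda^{\sign(Y - q(X))}$ through $|Y - q(X)|$ shows that $B(q)$ is an expectation of a convex function of $q(X)$ whose minimizer over $q(X) \in \R$ is the conditional $\tau$-quantile $Q_\tau(X,1)$, using $\tau = \Lambda/(\Lambda+1)$; since \Cref{eq:robust_ipw_formula} gives $B(Q_\tau(\cdot,1)) = \psi_1^+$, this proves $\psi_1^+ = B(Q_\tau(\cdot,1)) \le B(\hat{Q}_\tau)$ for every integrable $\hat{Q}_\tau$, with equality at $\hat{Q}_\tau = Q_\tau$.
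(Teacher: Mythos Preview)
Your primary argument is essentially the paper's own proof: rearrange $\psi_1^+$ using $\E[Z/e_+(X,Y)\mid X]=1$ to isolate $\E[(Y-\hat{Q}_\tau(X,1))Z/e_+(X,Y)]$, then bound it pointwise via the sign-matching inequality using that $1/e_+(x,y)\in[1+\tfrac{1-e(x)}{e(x)}\Lambda^{-1},\,1+\tfrac{1-e(x)}{e(x)}\Lambda]$ for every $(x,y)$ (the paper cites \citet[Proposition~2]{dornGuo2021sharp} for exactly this interval bound, which also resolves your boundary concern). Your alternative convexity route is a nice extra and aligns with the paper's later \Cref{prop:dm_valid} perspective, but your main line matches the paper's proof.
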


The source of this robustness is a simple sign-matching argument:  $( Y - \hat{Q}_{\tau}(X, 1) ) \Lambda^{\sign ( Y - \hat{Q}_{\tau}(X, 1) )}$ is always at least as large as $( Y - \hat{Q}_{\tau}(X, 1) ) \Lambda^{\sign ( Y - Q_{\tau}(X, 1) )}$, since $\Lambda^{\sign ( Y - \hat{Q}_{\tau}(X, 1) )}$ takes on its largest (smallest) value whenever $Y - \hat{Q}_{\tau}(X, 1)$ is positive (negative).  Thus, using a misspecified cutoff is at least as conservative as using the true cutoff.

\subsection{Formulation using adversarial regressions} \label{sec:kform}

A second way of identifying counterfactual averages under unconfoundedness is via the regression adjustment formula $\E_{\Pfull}[Y(1)] = \E[ZY + (1 - Z) \mu(X, 1)]$.  In this section, we show that a similar formula will identify $\psi_1^+$ in the presence of unobserved confounding, after replacing the outcome regression $\mu(X, 1)$ by an ``adversarial" outcome regression.  The formula motivates another singly-valid plug-in estimator.

We start with the identity $\E_{\Pfull}[Y(1)] = \E[ZY + (1 - Z) \E_{\Pfull}[Y(1) \mid X, Z = 0]]$, which holds even in the presence of unmeasured confounding.  The only unknown quantity here is the counterfactual regression $\E_{\Pfull}[ Y(1) \mid X = x, Z = 0]$, which can be written as an integral relative to the distribution $F_{\full}(y \mid x, 0) = \Pfull(Y(1) \leq y \mid X = x, Z = 0)$.
\begin{align}
\E_{\Pfull}[Y(1) \mid X = x, Z = 0] = \int y \, \d F_{\full}(y \mid x, 0) \label{eq:integral_form}
\end{align}

While the counterfactual distribution $F_{\full}(y \mid x, 0)$ is unobservable, the marginal sensitivity model allows us to bound it in terms of the distribution of the observed outcome, $F(y \mid x, 1)$.  In particular, the bounded odds ratio condition \ref{condition:bounded_odds_ratio} and Bayes' theorem imply that their likelihood ratio must satisfy:
\begin{align}
    \Lambda^{-1} \leq \frac{\d F_{\full}(y \mid x, 0)}{\d F(y \mid x, 1)} \leq \Lambda \label{eq:lr_constraint}
\end{align}
for almost every $x$.  Moreover, these bounds are achievable when $\Pfull$ makes the potential outcomes conditionally independent of one another.  Therefore, the largest allowed counterfactual regression can be computed by maximizing the integral (\ref{eq:integral_form}) over measures $F_{\full}$ satisfying the likelihood ratio constraint (\ref{eq:lr_constraint}).  This optimization problem was considered but not solved in \citet{tan2006}.

After a simple reparameterization, the counterfactual regression maximization problem can be solved using results from distributionally robust optimization.  The likelihood ratio constraint on $F_{\full}$ is equivalent to the condition that $F_{\full}(y \mid x, 0) = \Lambda^{-1} F(y \mid x, 1) + (1 - \Lambda^{-1}) G(y)$ for some distribution $G$ with $\d G(y)/\d F(y \mid x, 1) \leq 1/(1- \tau)$. Thus, the largest value of the counterfactual regression allowed under the MSM is:
\begin{align}
\Lambda^{-1} \int y \, \d F(y \mid x, 1) + (1 - \Lambda^{-1}) \sup_{\frac{\d G(y)}{\d F(y \mid x, 1)} \leq \frac{1}{1 - \tau}} \int y \, \d G(y). \label{eq:regression_cvar_mixture}
\end{align}
In robust optimization, the problem of maximizing $\int y \, \d G(y)$ subject to the likelihood ratio constraint $\d G / \d F \leq 1/(1 - \tau)$ is known as the dual problem defining the level-$\tau$ \emph{conditional value at risk (CVaR)} of the distribution $F$ \citep{ang_etal2017, rockafellar2000, Rockafellar2006Generalized, ruszczynski2006optimization}.  It can be solved explicitly in terms of the conditional quantiles of $F$:
\begin{align}
\begin{split}
\textup{CVaR}_{\tau}(x, z) &= \sup_{\frac{\d G(y)}{\d F(y \mid x, z)} \leq \frac{1}{1 - \tau}} \int y \, \d G(y) = \int [ Q_{\tau}(x, z) + \tfrac{1}{1 - \tau} \{ y - Q_{\tau}(x, z) \}_+] \, \d F(y \mid x, z). \label{eq:cvar_definition}
\end{split}
\end{align}
Plugging the CVaR into \Cref{eq:regression_cvar_mixture} yields our adversarial ``regression adjustment" formula for $\psi_1^+$.

\begin{proposition}
\label{prop:dm}
Define $\rho_+(x, z)$ to be the following outcome regression/CVaR mixture:
\begin{align*}
\rho_+(x, z) &= \Lambda^{-1} \mu(x, z) + (1 - \Lambda^{-1}) \textup{CVaR}_{\tau}(x, z)
\end{align*}
Then $\rho_+(\cdot, 1)$ is the adversarial counterfactual regression in the sense that $\psi_1^+ = \E[ ZY + (1 - Z) \rho_+(X, 1)]$.
\end{proposition}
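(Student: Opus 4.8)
The plan is to recognize $\psi_1^+$ as a supremum over $\msm(\Pobs,\Lambda)$ that decouples over covariate values and thereby reduces to a collection of one‑dimensional worst‑case integration problems, each solved by the CVaR dual formula \eqref{eq:cvar_definition}. First, fix $\Pfull\in\msm(\Pobs,\Lambda)$. By condition~\ref{condition:observable_implications}, $\E_{\Pfull}[ZY(1)]=\E_{\Pfull}[ZY]=\E[ZY]$, and by iterated expectations $\E_{\Pfull}[(1-Z)Y(1)]=\E_{\Pfull}[(1-Z)\,h_{\Pfull}(X)]$ where $h_{\Pfull}(x):=\E_{\Pfull}[Y(1)\mid X=x,Z=0]$; since the law of $(X,Z)$ is pinned down by condition~\ref{condition:observable_implications}, this equals $\E[(1-Z)h_{\Pfull}(X)]$. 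Hence
\[
\psi_1^+=\E[ZY]+\sup_{\Pfull\in\msm(\Pobs,\Lambda)}\E\bigl[(1-Z)\,h_{\Pfull}(X)\bigr].
\]

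The core step is to characterize exactly which functions $h_{\Pfull}$ are attainable. I claim that as $\Pfull$ ranges over $\msm(\Pobs,\Lambda)$, the conditional law $F_{\full}(\cdot\mid x,0)$ ranges over \emph{precisely} the probability measures obeying the likelihood‑ratio bound \eqref{eq:lr_constraint}, for $\Pobs$‑a.e.\ $x$, with the choices at distinct $x$ mutually unconstrained. Necessity is Bayes' theorem applied to condition~\ref{condition:bounded_odds_ratio}, as sketched before \eqref{eq:lr_constraint}. For sufficiency I would construct, from any measurable family $x\mapsto F_{\full}(\cdot\mid x,0)$ satisfying \eqref{eq:lr_constraint}, a distribution in $\msm(\Pobs,\Lambda)$ realizing it: take $U=(Y(0),Y(1))$; give $(X,Z)$ its $\Pobs$‑law; draw $Y(1)\mid X{=}x,Z{=}1$ from $F(\cdot\mid x,1)$ and $Y(1)\mid X{=}x,Z{=}0$ from $F_{\full}(\cdot\mid x,0)$; draw $Y(0)$ analogously (its conditional law given $Z=1$ is irrelevant and may be set to $F(\cdot\mid x,0)$); and make $Y(0)\indep Y(1)\mid(X,Z)$. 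Then condition~\ref{condition:latent_unconfoundedness} is automatic, condition~\ref{condition:observable_implications} holds by construction, and a Bayes computation shows the odds‑ratio in \eqref{eq:odds_ratio} equals $\bigl(\d F_{\full}(\cdot\mid x,0)/\d F(\cdot\mid x,1)\bigr)^{-1}\in[\Lambda^{-1},\Lambda]$, so condition~\ref{condition:bounded_odds_ratio} holds. This identifies $h_{\Pfull}(x)=\int y\,\d F_{\full}(y\mid x,0)$ as an arbitrary measurable selection of $x\mapsto\int y\,\d F_{\full}$ over feasible $F_{\full}$.

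Given this correspondence, the supremum passes inside the expectation because $1-e(x)\ge 0$ and the feasible set at each $x$ depends only on $x$:
\[
\sup_{\Pfull}\E\bigl[(1-Z)h_{\Pfull}(X)\bigr]=\E\Bigl[(1-Z)\sup_{\Lambda^{-1}\le \d F_{\full}/\d F(\cdot\mid X,1)\le\Lambda}\int y\,\d F_{\full}(y)\Bigr],
\]
where the inner supremum is finite for a.e.\ $x$ (the density bound $\le\Lambda$ and integrability of $Y$ under $F(\cdot\mid x,1)$) and attained, with a measurable maximizer by a standard selection argument since the optimizer is built from $Q_{\tau}(x,1)$ and $F(\cdot\mid x,1)$. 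The reparameterization noted before \eqref{eq:regression_cvar_mixture} — write $F_{\full}=\Lambda^{-1}F(\cdot\mid x,1)+(1-\Lambda^{-1})G$ with $\d G/\d F(\cdot\mid x,1)\le 1/(1-\tau)=\Lambda+1$ — turns the inner supremum into \eqref{eq:regression_cvar_mixture}, which the CVaR dual identity \eqref{eq:cvar_definition} of \citet{rockafellar2000} evaluates as $\Lambda^{-1}\mu(x,1)+(1-\Lambda^{-1})\textup{CVaR}_{\tau}(x,1)=\rho_+(x,1)$. Substituting back gives $\psi_1^+=\E[ZY+(1-Z)\rho_+(X,1)]$; finiteness of the right-hand side follows from $|\textup{CVaR}_{\tau}(x,1)|\le(1-\tau)^{-1}\E[|Y|\mid X{=}x,Z{=}1]$, integrability of $Y$, and overlap.

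\textbf{Main obstacle.} The delicate part is the attainability direction of Step~2: carefully verifying that \eqref{eq:lr_constraint} describes exactly the achievable counterfactual conditional laws — i.e.\ executing the worst‑case construction and checking all three parts of \cref{definition:msm}, including the relevant absolute‑continuity and almost‑sure statements — together with the accompanying measurable‑selection issue needed to justify the sup/integral interchange. Everything else is the bookkeeping already displayed in the text plus the citation of the classical CVaR formula.
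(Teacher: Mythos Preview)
Your argument is sound and is precisely the robust-optimization derivation that the paper sketches in the main text (Section~\ref{sec:kform}); you have done the work of making that sketch rigorous, correctly flagging the measurable-selection / sup--integral interchange as the only real technical point. The paper's formal proof in the appendix, however, takes a shortcut that completely avoids this obstacle: it does not argue directly from the definition of $\msm(\Pobs,\Lambda)$ at all, but instead starts from the already-established IPW formula \eqref{eq:robust_ipw_formula} (itself imported from \citet{dornGuo2021sharp} via \cref{prop:ipw}) and reduces it to the regression form by conditioning on $(X,Z=1)$ and invoking the algebraic identity of \cref{lemma:transformed_outcome_identity}, namely $q+\Lambda^{\sign(y-q)}(y-q)=\Lambda^{-1}y+(1-\Lambda^{-1})[q+\tfrac{1}{1-\tau}(y-q)_+]$, which shows $\E[Q_\tau(X,1)+\Lambda^{\sign\{Y-Q_\tau(X,1)\}}\{Y-Q_\tau(X,1)\}\mid X,Z=1]=\rho_+(X,1)$ directly. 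Your route is self-contained and explains \emph{why} the CVaR appears; the paper's route is a few lines of algebra but presupposes the adversarial-propensity characterization and thereby outsources the hard attainability and selection arguments to the cited prior work.
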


The interpretation of \cref{prop:dm} is the following. The adversarial counterfactual regression $\rho_+$ is a mixture between the average of the distribution $F(y\mid x,z)$ and the average of only the largest $100(1-\tau)\%$ of values under that distribution.
As $\Lambda$ grows, the CVaR term both gets more weight and uses a more extreme tail average.  In the limit, the CVaR converges to the upper bound of the support of $F(y \mid x, z)$.  Because of its interpretation as an extremal average, the CVaR appears naturally in partial identification problems \citep{lee2009training, semenova2021generalized}.

We now discuss how this identification formula might be used to form a plug-in estimator for $\psi_1^+$. Observe from \cref{eq:regression_cvar_mixture,eq:cvar_definition} that $\rho_+$ is the conditional mean of a certain transformed outcome:
\begin{align*}
\rho_+(x, z) = \int ( \Lambda^{-1} y + (1 - \Lambda^{-1})[ Q_{\tau}(x, z) + \tfrac{1}{1 - \tau} \{ y - Q_{\tau}(x, z) \}_+] \, \d F(y \mid x, z).
\end{align*}
Therefore, to estimate $\rho_+$, one can first learn $Q_{\tau}$ by quantile regression and then learn the conditional mean of the \emph{estimated} transformed outcome $\Lambda^{-1} Y_i + (1 - \Lambda^{-1}) [ \hat{Q}_{\tau}(X_i, Z_i) + \tfrac{1}{1 - \tau} \{ Y_i - \hat{Q}_{\tau}(X_i, Z_i)]$.  
The resulting estimate of $\rho_+$ may then be plugged into the identification formula in \cref{prop:dm}.

The following Proposition shows that this plug-in estimator for $\psi_1^+$ also has a single validity property:  even if the quantile regression is misspecified, we still obtain a valid (but possibly conservative) upper bound for $\psi_1^+$ as long as the transformed outcome regression correctly learns the conditional mean of the \emph{estimated} transformed outcome.  %

\begin{proposition} \label{prop:dm_valid}
For any integrable function $\hat{Q}_{\tau}(x, z)$, let $\varrho_+(x, z; \hat{Q}_{\tau})$ denote the conditional mean of the estimated transformed outcome based on $\hat{Q}_{\tau}$:
\begin{align*}
\varrho_+(x, z;\hat Q_\tau) &= \int (\Lambda^{-1} y + (1 - \Lambda^{-1}) [ \hat Q_{\tau}(x, z) + \tfrac{1}{1 - \tau} \{ y - \hat Q_{\tau}(x, z) \}_+]) \, \d F(y \mid x, z).
\end{align*}
For any fixed $\hat{Q}_{\tau}$, we have $\psi_1^+ \leq \E[ ZY + (1 - Z) \varrho_+(X, 1; \hat Q_\tau)]$, with equality when $\hat{Q}_{\tau} = Q_{\tau}$.
\end{proposition}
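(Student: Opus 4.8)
The plan is to reduce the claim to the classical variational characterization of the conditional value at risk, combined with the exact identification formula of \Cref{prop:dm}. The key point is that $\varrho_+(x,z;\hat Q_\tau)$ and $\rho_+(x,z)$ differ only in that the true conditional quantile appearing inside the CVaR term has been replaced by the candidate $\hat Q_\tau(x,z)$, and that the CVaR is \emph{minimized} over this inner argument exactly at the true quantile. So the whole statement is really a one-dimensional convexity fact applied pointwise in $(x,z)$.

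First I would record the elementary identity
\begin{align*}
\varrho_+(x,z;\hat Q_\tau) = \Lambda^{-1}\mu(x,z) + (1-\Lambda^{-1})\bigl( \hat Q_\tau(x,z) + \tfrac{1}{1-\tau}\,\E[ \{Y - \hat Q_\tau(x,z)\}_+ \mid X = x, Z = z] \bigr),
\end{align*}
obtained by splitting the integrand in the definition of $\varrho_+$. Next I would invoke the Rockafellar--Uryasev dual representation of CVaR underlying \Cref{eq:cvar_definition} \citep{rockafellar2000}: for every $q \in \R$,
\begin{align*}
q + \tfrac{1}{1-\tau}\,\E[ \{Y - q\}_+ \mid X = x, Z = z] \;\geq\; \textup{CVaR}_\tau(x,z),
\end{align*}
with equality at $q = Q_\tau(x,z)$. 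If a self-contained argument is preferred, this follows because the left-hand side is convex in $q$ with (sub)derivative $1 - \tfrac{1}{1-\tau}(1 - F(q \mid x,z))$, which is negative for $q < Q_\tau(x,z)$ and nonnegative for $q \geq Q_\tau(x,z)$, so $Q_\tau(x,z)$ is a global minimizer; equivalently, one can run the same sign-matching argument as in \Cref{prop:ipw_valid}, bounding $\{Y - \hat Q_\tau\}_+$ below by $\{Y - Q_\tau\}_+$ plus a subgradient correction using convexity of $t \mapsto \{t\}_+$.

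Combining the two displays and using $1 - \Lambda^{-1} \geq 0$ yields the pointwise bound $\varrho_+(x,z;\hat Q_\tau) \geq \rho_+(x,z)$ for all $(x,z)$, with equality whenever $\hat Q_\tau(x,z) = Q_\tau(x,z)$. Finally, multiplying through by $1 - Z \geq 0$, adding $ZY$, taking expectations, and applying \Cref{prop:dm} gives
\begin{align*}
\E[ ZY + (1-Z)\varrho_+(X,1;\hat Q_\tau)] \;\geq\; \E[ ZY + (1-Z)\rho_+(X,1)] \;=\; \psi_1^+,
\end{align*}
with equality when $\hat Q_\tau \equiv Q_\tau$. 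The only technical care needed is integrability --- all expectations are finite because $Y$ has a finite mean and $\hat Q_\tau$ is integrable --- and, for the fully elementary route, correctly handling the subdifferential of $\{\cdot\}_+$ when $F(\cdot \mid x,z)$ places an atom at $\hat Q_\tau(x,z)$. That is the only mildly delicate point, and it is routine; since the argument is pointwise in $(x,z)$ and rests on a single convexity observation, I do not anticipate a substantive obstacle.
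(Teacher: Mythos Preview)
Your proposal is correct and matches the paper's approach essentially line for line: decompose $\varrho_+$ into the $\Lambda^{-1}\mu$ piece and the $(1-\Lambda^{-1})$-weighted CVaR piece, invoke the Rockafellar--Uryasev variational characterization of CVaR to obtain the pointwise inequality $\varrho_+(x,1;\hat Q_\tau)\ge\rho_+(x,1)$, then take expectations against $(1-Z)$ and appeal to \Cref{prop:dm}. Your version is slightly more thorough in noting integrability and the atom subtlety, but there is no substantive difference.
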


The explanation for this robustness is that the CVaR is the optimal value of the following minimization problem solved by the true quantile $Q_{\tau}$:
\begin{align*}
\textup{CVaR}_{\tau}(x, z) = \inf_{q \in \R} \int q + \tfrac{1}{1 - \tau} \{ y - q \}_+ \, \d F(y \mid x, z) = \int Q_{\tau}(x, 1) + \tfrac{1}{1 - \tau} \{ y - Q_{\tau}(x, 1) \}_+ \, \d F(y \mid x, z) 
\end{align*}
See \citet{koenker_bassett_1978} or \citet[Theorem 1]{rockafellar2000} for a proof.  Therefore, using any function other than $Q_\tau$ yields a bound that is equal or larger.

As an aside, this derivation also directly motivates the consideration of more general sensitivity models.  \Cref{eq:lr_constraint} may be replaced by other ambiguity sets for $F_{\full}(y \mid x, 0)$, such as $f$-divergence or Wasserstein balls around $F(y \mid x, 1)$.  The Rosenbaum model considered in \citet{yadlowsky2018bounds} also falls in this class.  The duality between robust optimization and convex risk measures \citep{ruszczynski2006optimization} then implies the resulting adversarial regression would be the corresponding risk measure on the conditional outcome distribution.  For example, had we used the Kullback-Leibler divergence, we would have obtained the entropic value at risk \citep{ahmadi2012entropic} in place of CVaR.  Moreover, like CVaR, the resulting risk measure will be expressible as the value of a minimization problem.  Thus, identification formulae obtained in this way are always ``singly valid."

\section{The doubly-valid/doubly-sharp estimators} \label{sec:estimator}

In this section, we introduce the DVDS estimators and give some intuition for their properties, proving them formally in subsequent sections.  These estimators combine the ``propensity weighting" strategy from Section \ref{sec:eform} with the ``regression adjustment" strategy from Section \ref{sec:kform} to simultaneously enjoy the robustness properties of both Propositions \ref{prop:ipw_valid} and \ref{prop:dm_valid}.  This is analogous to how the AIPW estimator combines propensity weighting and regression adjustment to achieve double robustness under unconfoundedness.

The first step in computing the DVDS estimators is to learn the nuisance functions from Section \ref{sec:partial_id}.  We re-define these below in a generalized notation that accommodates either upper or lower bounds:
\begin{align*}
Q_{+}(x, z)&=Q_{\tau}(x, z) \\
Q_{-}(x, z)&=Q_{1-\tau}(x, z),\\
\varrho_{\pm}(x, z;\hat Q_\pm) &= \int \Lambda^{-1} y + (1 - \Lambda^{-1}) ( \hat Q_{\pm}(x, z) + \tfrac{1}{1 - \tau} \{ y - \hat Q_{\pm}(x, z) \}_{\pm}) \, \d F(y \mid x, z),\\
\rho_{\pm}(x, z)&=\varrho_{\pm}(x, z; Q_\pm).
\end{align*}
Let $\eta = (e, Q_+, Q_-, \rho_+, \rho_-)$ denote the full vector of DVDS nuisance functions.

While estimating the propensity score $e$ is a standard task in causal inference, estimating the other nuisances merits further discussion. 

For continuous outcomes, we recommend estimating $Q_{\pm}, \rho_{\pm}$ in the manner described in \cref{sec:kform}.  First, $Q_{\pm}$ is estimated by any quantile regression method \citep{quantile_neural_networks,quantile_random_forest,generalized_random_forests,belloni2011quantile,koenker_bassett_1978}. Then, $\rho_{\pm}$ is estimated using regression to learn the conditional mean of the \emph{estimated} transformed outcome using the estimated quantile.  In the Appendix, we show that this two-step approach can yield accurate estimates of $\rho_{\pm}$ despite first-step estimation error. If this second-step regression is performed by combining separate regression estimates of the $\mu$ and CVaR components, then, as $\Lambda$ tends to one, the DVDS estimator will tend to the AIPW estimator that uses the same $\mu$ and $e$-estimates.

For binary outcomes, estimating the additional nuisances is considerably simpler.  The quantiles and CVaR of a Bernoulli($\mu(x, z)$)-distributed random variable are explicit functions of $\mu(x, z)$, so the nuisances $Q_{\pm}, \rho_{\pm}$ can be computed by directly plugging an estimate $\hat{\mu}$ of the outcome regression into the following formulae:
\begin{align}
\begin{aligned}
Q_-(x, z) &=\mathbb{I} \{ \mu(x, z) > \tau \}, && \rho_-(x, z) = \max \{ 1 - \Lambda + \mu(x, z) \Lambda,\, \mu(x, z) / \Lambda \},\\
    Q_+(x, z) &= \mathbb{I} \{ \mu(x, z) > 1 - \tau\}, && \rho_+(x, z) = \min \{ 1 - 1/\Lambda + \mu(x,z)/\Lambda,\,  \mu(x, z) \Lambda \}.
\end{aligned}
\label{eq:binary_formulas}
\end{align}
When these formulae are used, the DVDS estimators tend to the AIPW estimator as $\Lambda$ tends to one and they tend to \citet{Manski1990}'s ``no assumptions" bounds as $\Lambda$ grows large.

For either continuous or binary outcomes, we assume that the nuisances are estimated using a standard sample-splitting technique known as cross-fitting.  This technique is described in \Cref{algo:dvds} and has been applied in \citet{schick1986, robins_etal2008, zheng2011cross, doubleML} (among others).  Cross-fitting allows black-box machine learning methods to be used for nuisance estimation.

The second step in computing the DVDS estimators is to aggregate the estimated nuisances using the following recentered influence functions:
\begin{align*}
\phi_1^{\pm}(x, y, z; \hat{\eta}) &= zy + (1 - z) \hat{\rho}_{\pm}(x, 1) + \frac{(1 - \hat{e}(x)) z}{\hat{e}(x)} ( \hat{Q}_{\pm}(x, 1) + \Lambda^{\pm \sign(  y - \hat{Q}_{\pm}(x, 1) )} ( y - \hat{Q}_{\pm}(x, 1) ) - \hat{\rho}_{\pm}(x, 1))\\
\phi_0^{\pm}(x, y, z; \hat{\eta}) &= (1 - z) y + z \hat{\rho}_{\pm}(x, 0) + \frac{\hat{e}(x)(1 - z)}{1 - \hat{e}(x)} ( \hat{Q}_{\pm}(x, 0) + \Lambda^{ \pm \sign ( y - \hat{Q}_{\pm}(x, 0) )} ( y - \hat{Q}_{\pm}(x, 0) ) - \hat{\rho}_{\pm}(x, 0))\\
\phi_{\ATE}^{\pm}(x, y, z; \hat{\eta}) &= \phi_1^{\pm}(x, y, z; \hat{\eta}) - \phi_0^{\mp}(x, y, z; \hat{\eta}).
\end{align*} \cref{algo:dvds} gives a complete description of this aggregation.

To provide some intuition for these recentered influence functions, we present two decompositions of $\phi_1^+$.  For conceptual clarity, we assume in the discussion below that the nuisances $\hat{\eta}$ are fixed rather than estimated functions.
\begin{enumerate}[label = (\roman*)]
    \item The first decomposition highlights the adversarial propensity-weighting characterization presented in Section \ref{sec:eform}:
\begin{align}
\begin{split}\label{eq:firstDecomposition}
\phi_1^+(x, y, z; \hat{\eta}) &= \underbrace{\frac{z \hat{Q}_+(x, 1)}{\hat{e}(x)} + \{ y - \hat{Q}_+(x, 1) \} z \left( 1 + \frac{1 - \hat{e}(x)}{\hat{e}(x)} \Lambda^{\sign \{ y - \hat{Q}_+(x, 1) \}} \right)}_a\\
&+ \underbrace{\left( 1 - \frac{z}{\hat{e}(x)} \right) \hat{\rho}_+(x, 1)}_b
\end{split}
\end{align}
If $\hat{e} = e$, then term $a$ recovers Proposition \ref{prop:ipw_valid}'s singly-valid ``weighting" formula for $\psi_1^+$, meaning $\E[a] = \psi_1^+$ if $\hat{Q}_+$ is correctly-specified while $\E[a] \geq \psi_1^+$ even if $\hat{Q}_+$ is misspecified.  Meanwhile, the term $b$ has mean zero when $\hat{e} = e$, no matter how accurate $\hat{\rho}_+$ is.  Thus, the DVDS estimator $\hat{\psi}_1^+$ should inherit both the sharpness and single validity properties from Section \ref{sec:eform}.

\item The second decomposition highlights the adversarial regression adjustment characterization presented in Section \ref{sec:kform}.
\begin{align}
\begin{split}\label{eq:secondDecomposition}
\phi_1^+(x, y, z; \hat{\eta}) &= \underbrace{zy + (1 - z) \hat{\rho}_+(x, 1)}_a\\
&+ \underbrace{\frac{\{ 1 - \hat{e}(x) \} z}{\hat{e}(x)} ( \Lambda^{-1} y + (1 - \Lambda^{-1}) [ \hat{Q}_{\tau}(x, 1) + \tfrac{1}{1- \tau} \{ y - \hat{Q}_{\tau}(x, 1) \}_+] - \hat{\rho}_+(x, 1))}_b
\end{split}
\end{align}
If $\hat{\rho}_+ = \varrho_+(\cdot,\cdot;\hat Q_\tau)$, then term $a$ is Proposition \ref{prop:dm_valid}'s singly-valid ``regression adjustment" formula for $\psi_1^+$, meaning $\E[a] = \psi_1^+$ if $\hat{Q}_+$ is correctly-specified while $\E[a] \geq \psi_1^+$ even if $\hat{Q}_+$ is misspecified.  Meanwhile, $b$ is has mean zero if $\hat{\rho}_+ = \varrho_+(\cdot,\cdot;\hat Q_\tau)$, no matter how accurate $\hat{e}$ is.  Thus, the DVDS estimator $\hat{\psi}_1^+$ should also inherit the properties from Section \ref{sec:kform}.
\end{enumerate}
In summary, we expect the estimator $\hat{\psi}_1^+$ to have two chances at estimating sharp bounds and, if quantiles are misspecified, two chances at estimating valid bounds.  

\begin{algorithm}[t!]
\caption{DVDS estimators}
\label{algo:dvds}
\begin{algorithmic}[1]
\STATEx\textbf{Input:} Data $\{(X_i,Z_i,Y_i):i=1,\dots,n\}$, number of folds $K$, estimand $\diamond \in \{ \T, \C, \ATE \}$
\STATE{Split the observations into $K$ (approximately) even-sized folds $\F_1, \ldots, \F_K$.}
\FOR{$k=1,\dots,K$}
\STATE{Estimate $\hat\eta^{(-k)} = (\hat{e}^{(-k)}, \hat{Q}_+^{(-k)}, \hat{Q}_-^{(-k)}, \hat{\rho}_+^{(-k)}, \hat{\rho}_-^{(-k)})$ using observations not in $\F_k$.  Use quantile regression followed by transformed-outcome regression if $Y$ is continuous, and use \cref{eq:binary_formulas} if $Y$ is binary.}
\ENDFOR
\STATE{\textbf{Return} the following estimators of $(\psi_{\diamond}^-, \psi_{\diamond}^+)$:
\begin{align*}
\begin{pmatrix}\hat{\psi}_{\diamond}^-\\ \hat{\psi}_{\diamond}^+\end{pmatrix} &= \frac{1}{n} \sum_{k = 1}^K \sum_{i \in \mathcal{F}_k} \begin{pmatrix} \phi_{\diamond}^-(X_i, Y_i, Z_i; \hat{\eta}^{(-k)})\\ \phi_{\diamond}^+(X_i, Y_i, Z_i; \hat{\eta}^{(-k)})\end{pmatrix},
\end{align*}}
\end{algorithmic}
\end{algorithm}

\section{Theory for estimation of bounds} \label{sec:point_estimation_guarantees}

In this section, we present the formal asymptotic properties of the DVDS point estimators.  For simplicity, we state our results only for the ATE bounds $(\hat{\psi}_{\ATE}^-, \hat{\psi}_{\ATE}^+)$, although the same results hold for the counterfactual mean bounds as well.

\subsection{Sharpness and validity}

Our first main result on estimation shows that the robustness properties intuitively derived in \Cref{sec:estimator} hold even when the nuisances are estimated from data.  This requires only mild regularity conditions, which are standard in semiparametric causal inference \citep{Kennedy2016, doubleML, TargetedLearningVDLRose}.

\begin{condition}[Primitives] \label{condition:primitives}
For some $q>2,\epsilon > 0$, we have $\E[|Y|^q] < \infty$ and $\P(\epsilon < e(X) < 1 - \epsilon) = 1$.
\end{condition}

\begin{condition}[Nuisance estimators] \label{condition:first_stage}
We have 
$\|\hat Q_\pm\|_q=O_P
(1)$, 
$\|\hat\rho_\pm\|_q=O_P
(1)$, 
$\P( \epsilon < \inf_x \hat{e}(x), \sup_x \hat{e}(x) < 1 - \epsilon) \rightarrow 1$.
\end{condition}

\noindent We suppress cross-fitting notation from this assumption and the ones that follow.  It should be understood that conditions stated in terms of ``$\hat \eta$'' are  imposed on each of $\hat\eta^{(-k)}$, for $k=1,\dots,K$.

\Cref{theorem:double_valid_double_sharp} below states that correct specification of either $e$ or $\rho_{\pm}$ suffices for the DVDS estimators to estimate the sharp ATE bounds when quantiles are correctly specified.  However, if quantiles are misspecified, then the DVDS estimators may still estimate \emph{valid} bounds on the ATE as long as the propensity score is well-specified or the transformed outcome regression accurately learns the conditional mean of the \emph{estimated} transformed outcome.

\begin{theorem}[Double validity and double sharpness]
\label{theorem:double_valid_double_sharp}
Assume Conditions \ref{condition:primitives} and \ref{condition:first_stage}.  Then:
\begin{enumerate}[label = (\roman*), itemsep=-1ex, topsep=-1ex]
    \item If quantiles are estimated consistently, then the DVDS estimators $(\hat{\psi}_{\ATE}^-, \hat{\psi}_{\ATE}^+)$ converge to the sharp bounds on the ATE if either $|| \hat{e} - e ||_{2} = o_P(1)$ or $|| \hat{\rho}_{\pm} - \rho_{\pm} ||_{2} = o_P(1)$. \label{item:double_sharpness}
    \item Even if quantiles are not estimated consistently, we still have $\hat{\psi}_{\ATE}^+ \geq \psi_{\ATE}^+ - o_P(1)$ and $\hat{\psi}_{\ATE}^- \leq \psi_{\ATE}^- + o_P(1)$ if either $\| \hat{e} - e \|_{2} = o_P(1)$ or $\| \hat{\rho}_{\pm} - \varrho_{\pm}(\cdot,\cdot;\hat Q_\pm) \|_{2} = o_P(1)$. \label{item:double_validity}
\end{enumerate}
\end{theorem}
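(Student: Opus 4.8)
The plan is to reduce the whole statement to the single counterfactual bound $\hat\psi_1^+$, and then analyze its population limit through the two decompositions of $\phi_1^+$ recorded in \cref{sec:estimator}. Since $\phi_{\ATE}^\pm=\phi_1^\pm-\phi_0^\mp$ by definition and $\psi_{\ATE}^\pm=\psi_1^\pm-\psi_0^\mp$ (subtracting sharp counterfactual‑mean bounds gives sharp ATE bounds under the MSM, \citealp{dornGuo2021sharp}), it suffices to establish, under the relevant hypotheses, that $\hat\psi_1^+-\psi_1^+=o_P(1)$ in case \ref{item:double_sharpness} and $\hat\psi_1^+\ge\psi_1^+-o_P(1)$ in case \ref{item:double_validity}; the analogous statements for $\psi_0^\mp$ follow by the $Y\leftrightarrow-Y,\ Z\leftrightarrow1-Z$ symmetry, and they combine because $K$ is fixed. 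Writing $W=(X,Y,Z)$ and $\bar\phi_1^+(\eta'):=\E[\phi_1^+(W;\eta')]$ for a \emph{deterministic} nuisance vector $\eta'$, cross‑fitting yields
\begin{align*}
\hat\psi_1^+-\psi_1^+=\frac1n\sum_{k=1}^K\sum_{i\in\F_k}\bigl(\phi_1^+(W_i;\hat\eta^{(-k)})-\bar\phi_1^+(\hat\eta^{(-k)})\bigr)+\sum_{k=1}^K\frac{|\F_k|}{n}\bigl(\bar\phi_1^+(\hat\eta^{(-k)})-\psi_1^+\bigr).
\end{align*}
Conditionally on the observations outside $\F_k$, the first sum is an average of i.i.d.\ centered terms; its conditional variance is bounded by $|\F_k|^{-1}$ times a constant multiple of $1+\E[Y^2]+\|\hat\rho_+^{(-k)}\|_2^2+\|\hat Q_+^{(-k)}\|_2^2$, using that the inverse‑propensity factors in $\phi_1^+$ are bounded by $1/\epsilon$ and $|\Lambda^{\pm\sign(\cdot)}|\le\Lambda$. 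Under \cref{condition:primitives,condition:first_stage} (with $q>2$, so $\E[Y^2]<\infty$ and the $L^q$ bounds give $L^2$ bounds), this variance is $O_P(1/n)$, so by the usual conditioning argument for cross‑fitted estimators \citep{doubleML} the first sum is $o_P(1)$. Hence everything reduces to controlling the deterministic functional $\bar\phi_1^+$ evaluated at $\hat\eta^{(-k)}$.

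The heart of the proof is a population‑level analysis of $\bar\phi_1^+(\eta')$. Using \cref{eq:firstDecomposition}, $\bar\phi_1^+(\eta')=\E[a]+\E[b]$. Replacing $e'$ by the true $e$ in term $a$ gives $\E[a]=\E[a_0]+r_a$ where, since $a-a_0=\tfrac{e-e'}{e'e}\bigl(zQ_+'(x,1)+\{y-Q_+'\}z\Lambda^{\sign\{y-Q_+'\}}\bigr)$, Cauchy–Schwarz yields $|r_a|\le C\|e'-e\|_2(\|Q_+'\|_2+\|Y\|_2)$; and $\E[b]=\E[(1-e/e')\rho_+'(X,1)]$, so $|\E[b]|\le C\|e'-e\|_2\|\rho_+'\|_2$. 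By \cref{prop:ipw_valid}, $\E[a_0]\ge\psi_1^+$ for \emph{every} $Q_+'$, with equality when $Q_+'=Q_\tau$; moreover $q\mapsto\E[a_0(q)]$ is Lipschitz in $L^1$, because $\{y-q\}_+$ is $1$‑Lipschitz in $q$ and $(y-q)\Lambda^{\sign(y-q)}=\Lambda\{y-q\}_++\Lambda^{-1}\{y-q\}_-$ is $\Lambda$‑Lipschitz in $q$, so $0\le\E[a_0(q)]-\psi_1^+\le C\|q-Q_\tau\|_2$. Symmetrically, using \cref{eq:secondDecomposition}, $\bar\phi_1^+(\eta')=\E[a']+\E[b']$ with $\E[a']=\E[ZY+(1-Z)\rho_+'(X,1)]$ and $\E[b']=\E[\tfrac{(1-e')e}{e'}(\varrho_+(X,1;Q_+')-\rho_+'(X,1))]$, so $|\E[b']|\le C\|\rho_+'-\varrho_+(\cdot,\cdot;Q_+')\|_2$; writing $\rho_+'=\varrho_+(\cdot,\cdot;Q_+')+(\rho_+'-\varrho_+(\cdot,\cdot;Q_+'))$ and invoking \cref{prop:dm_valid} gives $\E[a']\ge\psi_1^+-C\|\rho_+'-\varrho_+(\cdot,\cdot;Q_+')\|_2$, with equality in the ``$\ge$'' when $Q_+'=Q_\tau$ and $\rho_+'=\varrho_+(\cdot,\cdot;Q_+')$ (here $\rho_+=\varrho_+(\cdot,\cdot;Q_\tau)$ and \cref{prop:dm} gives $\E[ZY+(1-Z)\rho_+(X,1)]=\psi_1^+$). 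Throughout, evaluations at $z=1$ are converted to the full $L^2(P)$ norms at cost $\epsilon^{-1/2}$.

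Assembling the cases: plug $\eta'=\hat\eta^{(-k)}$ into the above. If $\|\hat e-e\|_2=o_P(1)$, the first decomposition gives $\bar\phi_1^+(\hat\eta)=\E[a_0(\hat Q_+)]+o_P(1)\ge\psi_1^+-o_P(1)$ (validity); if in addition $\|\hat Q_+-Q_+\|_2=o_P(1)$, the Lipschitz bound gives $\E[a_0(\hat Q_+)]=\psi_1^++o_P(1)$ (sharpness). If instead $\|\hat\rho_\pm-\varrho_\pm(\cdot,\cdot;\hat Q_\pm)\|_2=o_P(1)$, the second decomposition gives $\E[b']=o_P(1)$ directly and $\E[a']=\E[ZY+(1-Z)\varrho_+(X,1;\hat Q_+)]+o_P(1)\ge\psi_1^+-o_P(1)$ (validity); if additionally $\|\hat Q_\pm-Q_\pm\|_2=o_P(1)$ then (since $\varrho_+(\cdot,\cdot;q)$ is Lipschitz in $q$) also $\|\hat\rho_\pm-\rho_\pm\|_2=o_P(1)$, so \cref{prop:dm} yields $\E[a']=\psi_1^++o_P(1)$ (sharpness). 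This matches the two hypotheses of \ref{item:double_sharpness} and the two of \ref{item:double_validity}. The $o_P(1)$ terms collect over the fixed number of folds, and combining the conclusions for $\psi_1^+$ with those for $\psi_0^-$ (resp.\ $\psi_1^-,\psi_0^+$) gives the stated bounds for $\hat\psi_{\ATE}^\pm$.

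The main obstacle is entirely in the population step: one must see that the ``correction'' terms $\E[b]$ and $\E[b']$ are first‑order in $\|\hat e-e\|_2$ and in $\|\hat\rho_+-\varrho_+(\cdot,\cdot;\hat Q_+)\|_2$ respectively, \emph{with no dependence on the accuracy of the quantile nuisance}, so that the sign‑matching inequalities of \cref{prop:ipw_valid,prop:dm_valid} survive the perturbation and give validity for arbitrary $\hat Q_\pm$. The one subtlety requiring care is that $(y-q)\Lambda^{\sign(y-q)}$, despite containing $\sign$, is Lipschitz in $q$; this continuity is exactly what upgrades validity to sharpness once the quantile is consistent. The cross‑fitting/empirical‑process step is routine given the $q>2$ moment condition.
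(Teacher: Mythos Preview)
Your proposal is correct and follows essentially the same route as the paper: reduce to a single bound $\hat\psi_1^+$, invoke the two decompositions \eqref{eq:firstDecomposition} and \eqref{eq:secondDecomposition} of $\phi_1^+$, control the perturbation terms by Cauchy--Schwarz together with the overlap bounds, and apply \cref{prop:ipw_valid,prop:dm_valid} for validity and sharpness. The only organizational difference is that you first isolate the centered empirical-process piece and bound it by a single conditional-variance argument, whereas the paper leaves the Chebyshev steps inline; your Lipschitz-in-$q$ bound $0\le\E[a_0(q)]-\psi_1^+\le C\|q-Q_\tau\|_2$ is exactly the paper's passage through $\varrho_+(\cdot,\cdot;\hat Q_+)$ (Lemma~\ref{lemma:nuisance_facts}.\ref{item:kappa_lipschitz}) written more directly.
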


While the ``double sharpness" result \ref{item:double_sharpness} is a relatively standard multiple robustness property, the ``double validity" result \ref{item:double_validity} appears to be new.  In our view, it enables uniquely credible sensitivity analysis by offering two chances at valid bounds even when consistency fails.  

For binary outcomes, the condition $|| \hat{\rho}_+ - \varrho_{\pm}(\cdot,\cdot;\hat Q_\pm) ||_2 = o_P(1)$ will be satisfied whenever the outcome regression $\mu$ is consistently estimated.  As a result, Theorem \ref{theorem:double_valid_double_sharp} for binary outcomes may be phrased entirely in terms of the outcome regression and propensity score. 

\begin{corollary} \label{corollary:binary_double_valid_double_sharp}
Suppose that $Y$ is binary and that Conditions \ref{condition:primitives} and \ref{condition:first_stage} hold, with nuisances estimated according to \Cref{eq:binary_formulas}.  Then the binary DVDS estimators yield sharp bounds on the ATE if $\hat{\mu}$ is consistent and yield valid bounds if either $\hat{\mu}$ or $\hat{e}$ are consistent.
\end{corollary}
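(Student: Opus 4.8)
The plan is to deduce \Cref{corollary:binary_double_valid_double_sharp} from \Cref{theorem:double_valid_double_sharp} by checking that, for binary $Y$ with nuisances defined by \Cref{eq:binary_formulas}, the abstract hypotheses specialize to exactly what is claimed. First I would dispatch \Cref{condition:primitives,condition:first_stage}: since $Y\in\{0,1\}$ we have $\E[|Y|^q]\le 1$; the formulas force $\hat Q_\pm\in\{0,1\}$, so $\|\hat Q_\pm\|_q\le 1$; and $\mu\mapsto\rho_\pm(\mu)$ in \Cref{eq:binary_formulas} is a min/max of two affine functions with slopes in $\{\Lambda^{-1},\Lambda\}$, hence $\Lambda$-Lipschitz and equal to $0$ at $\mu=0$, so $\|\hat\rho_\pm\|_q\le\Lambda\|\hat\mu\|_q=O_P(1)$. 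Thus the only substantive part of \Cref{condition:first_stage} is the propensity bound, which is assumed. The same Lipschitz property gives $\|\hat\rho_\pm-\rho_\pm\|_2\le\Lambda\|\hat\mu-\mu\|_2$, so $\hat\mu$ consistent in $L_2$ implies $\hat\rho_\pm$ consistent.

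Next I would record the algebraic identity linking the binary formulas to the abstract nuisances. Computing the quantiles and level-$\tau$ CVaR of a $\mathrm{Bernoulli}(\mu)$ law shows that $\varrho_\pm(x,z;\hat Q_\pm)$ — which integrates the estimated transformed outcome against the \emph{true} conditional law $F(\cdot\mid x,z)$ but with the \emph{estimated} quantile — is given by the same min/max formula as in \Cref{eq:binary_formulas} evaluated at the \emph{true} $\mu(x,z)$, because $\hat Q_\pm(x,z)$ is precisely the indicator that selects the branch of that min/max active at $\hat\mu(x,z)$. Hence $|\hat\rho_\pm(x,z)-\varrho_\pm(x,z;\hat Q_\pm)|\le\Lambda|\hat\mu(x,z)-\mu(x,z)|$ pointwise, so $\hat\mu$ consistent in $L_2$ also gives $\|\hat\rho_\pm-\varrho_\pm(\cdot,\cdot;\hat Q_\pm)\|_2=o_P(1)$. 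The validity half of the corollary is then immediate from \Cref{theorem:double_valid_double_sharp}\ref{item:double_validity}: consistency of $\hat e$ triggers the $\|\hat e-e\|_2=o_P(1)$ branch, and consistency of $\hat\mu$ triggers the $\|\hat\rho_\pm-\varrho_\pm(\cdot,\cdot;\hat Q_\pm)\|_2=o_P(1)$ branch just verified.

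The sharpness half is the step I expect to need the most care. The frictionless route is \Cref{theorem:double_valid_double_sharp}\ref{item:double_sharpness}, whose ``$\|\hat\rho_\pm-\rho_\pm\|_2=o_P(1)$'' branch is already in hand; what remains is quantile consistency, and here the discontinuity of $\hat\mu\mapsto\mathbb{I}\{\hat\mu\gtrless c\}$ bites, since $\hat\mu\to\mu$ in $L_2$ does not control $\|\hat Q_\pm-Q_\pm\|_2$ when $\mu(X,Z)$ charges a threshold $c\in\{\tau,1-\tau\}$. Two resolutions are available. A margin argument shows $\{\hat Q_+\ne Q_+\}\subseteq\{0<|\mu-(1-\tau)|\le|\hat\mu-\mu|\}\cup\{\mu=1-\tau\}$, so Markov's inequality and continuity of measure give $\|\hat Q_\pm-Q_\pm\|_2=o_P(1)$ under the innocuous non-atomicity condition $\P(\mu(X,Z)\in\{\tau,1-\tau\})=0$, after which \Cref{theorem:double_valid_double_sharp}\ref{item:double_sharpness} applies directly. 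Alternatively, to obtain the statement with no added hypothesis, one re-runs the estimation-error expansion behind \Cref{theorem:double_valid_double_sharp} directly in decomposition \eqref{eq:secondDecomposition}: term $a$ has mean $\E[ZY+(1-Z)\hat\rho_+(X,1)]\to\psi_1^+$ by \Cref{prop:dm} together with $\|\hat\rho_+-\rho_+\|_2=o_P(1)$; term $b$ has conditional-on-$X$ mean $\tfrac{1-\hat e}{\hat e}\,e\,(\varrho_+(\cdot,1;\hat Q_+)-\hat\rho_+(\cdot,1))$, which is $o_P(1)$ in $L_1$ by the pointwise bound above and boundedness of the propensity weights; and the cross-fitting argument of \Cref{theorem:double_valid_double_sharp} kills the remaining empirical-process term, giving convergence to the \emph{sharp} bound. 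I would present the margin version for brevity and remark that the direct expansion removes even the non-atomicity caveat.
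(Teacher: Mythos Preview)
Your proposal is correct and lands on essentially the paper's proof. The paper, like your ``alternative'' route, does not invoke quantile consistency at all: it observes that the second case in the proof of \Cref{theorem:double_valid_double_sharp} in fact only needs $\|\hat\rho_+-\rho_+\|_2=o_P(1)$ and $\|\hat\rho_+-\varrho_+(\cdot,\cdot;\hat Q_+)\|_2=o_P(1)$, then verifies both via exactly the pointwise Lipschitz bound $|\hat\rho_+(x,1)-\varrho_+(x,1;\hat Q_+)|\le\Lambda|\hat\mu(x,1)-\mu(x,1)|$ that you derive. Your margin argument under $\P(\mu(X,Z)\in\{\tau,1-\tau\})=0$ is a valid but unnecessary detour; since your second route already removes the non-atomicity caveat and matches the paper, you should present that one as the main argument rather than as a remark.
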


\subsection{Efficiency}

Our second main result on estimation shows that the DVDS estimators are asymptotically efficient when all first-stage nuisance parameters are estimated consistently with certain rates of convergence.  These convergence rates accommodate nonparametric nuisance learners based on machine learning.

This result requires certain density assumptions.  In the continuous case, we assume that $Y$ has a bounded conditional density, as required by many standard quantile regression methods (\eg,  \citealt[Example 2]{generalized_random_forests}, or \citealt[Condition D.1]{belloni2011quantile}).  In the binary case, we assume $\mu(X, Z)$ has a bounded density.  This ensures that thresholding an accurate estimate of $\mu$ yields accurate estimates of $Q_{\pm}$. 

\begin{condition}[Densities] \label{condition:density}
Assume one of the following holds:
\begin{enumerate}[itemsep=-1ex, topsep=-1ex]
    \item \emph{Continuous case}.  For each $(x, z)$, $F(y \mid x, z)$ is continuous with a bounded density $f(y \mid x, z)$ that is positive on the interior of its support.
    \item \emph{Binary case}.  $Y$ is binary, and the regression function $\mu(X, Z)$ has a bounded density.
\end{enumerate}
\end{condition}

\begin{theorem}[Semiparametric efficiency] \label{theorem:efficiency}
Assume Conditions \ref{condition:primitives}, \ref{condition:first_stage}, and \ref{condition:density} and that all first-stage nuisance parameters are consistently estimated with the following convergence rates:
\begin{enumerate}[itemsep=-1ex]
    \item \emph{Continuous case}. $|| \hat{Q}_{\pm} - Q_{\pm} ||_{2} = o_P(n^{-1/4})$ and $|| \hat{e} - e ||_{2} \times || \hat{\rho}_{\pm} - \rho_{\pm} ||_{2} = o_P(n^{-1/2})$.
    \item \emph{Binary case}. $|| \hat{\mu} - \mu ||_{\infty} = o_P(n^{-1/4})$ and $ || \hat{e} - e ||_{2} \times || \hat{\mu} - \mu ||_{2} = o_P(n^{-1/2})$.
\end{enumerate}
Then the DVDS estimators are $\sqrt{n}$-consistent, asymptotically normal, and attain the semiparametric efficiency bound:
\begin{align}
\sqrt{n} \binom{\hat{\psi}_{\ATE}^- - \psi_{\ATE}^-}{\hat{\psi}_{\ATE}^+ - \psi_{\ATE}^+} &= \frac{1}{\sqrt{n}} \sum_{i = 1}^n \binom{\phi_{\ATE}^-(X_i, Y_i, Z_i; \eta) - \psi_{\ATE}^-}{\phi_{\ATE}^+(X_i, Y_i, Z_i; \eta) - \psi_{\ATE}^+} + o_P(1) \rightsquigarrow N ( 0, \mathbf{\Sigma}). \label{eq:asymptotic_normality}
\end{align}
\end{theorem}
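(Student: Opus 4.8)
The plan is the standard cross-fitted ``debiased'' argument: prove an asymptotically linear expansion for $\hat\psi_1^+$ with influence function $\phi_1^+(\cdot;\eta)-\psi_1^+$, do the same for $\hat\psi_0^-$, and combine using $\phi_{\ATE}^\pm=\phi_1^\pm-\phi_0^\mp$; by the $Y\mapsto -Y$, $Z\mapsto 1-Z$ symmetry exploited throughout \cref{sec:partial_id}, it suffices to treat $\hat\psi_1^+$. Writing $\mathbb{P}_n f=n^{-1}\sum_i f(X_i,Y_i,Z_i)$ and using the fold structure of \cref{algo:dvds}, I would decompose $\hat\psi_1^+-\psi_1^+=(\mathbb{P}_n-P)\{\phi_1^+(\cdot;\eta)\}+T_1+T_2$, where $T_1$ collects the cross-fit empirical-process terms $(\mathbb{P}_n^{(k)}-P)\{\phi_1^+(\cdot;\hat\eta^{(-k)})-\phi_1^+(\cdot;\eta)\}$ and $T_2$ collects the bias terms $P\{\phi_1^+(\cdot;\hat\eta^{(-k)})\}-\psi_1^+$. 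The leading term is an i.i.d.\ average of a mean-zero, finite-variance function (finiteness from the $q>2$ moment on $Y$ and the propensity bound in \cref{condition:primitives}), so the CLT applies once $T_1,T_2=o_P(n^{-1/2})$.

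For $T_1$: conditionally on the out-of-fold data, $\hat\eta^{(-k)}$ is fixed, so each summand of $T_1$ is a centered average with conditional variance at most $n_k^{-1}\|\phi_1^+(\cdot;\hat\eta^{(-k)})-\phi_1^+(\cdot;\eta)\|_2^2$. The map $q\mapsto q+\Lambda^{\sign(y-q)}(y-q)$ is globally Lipschitz (the two linear pieces agree at $q=y$, with slopes bounded by $\Lambda+1$), so $\phi_1^+$ is $L^2(P)$-continuous in the nuisances once, via \cref{condition:first_stage}, $\hat e$ is kept away from $0,1$ and Hölder's inequality with the $q>2$ moment controls the products involving $(1-\hat e)/\hat e$. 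Hence $\|\phi_1^+(\cdot;\hat\eta^{(-k)})-\phi_1^+(\cdot;\eta)\|_2=o_P(1)$ and conditional Chebyshev gives $T_1=o_P(n^{-1/2})$.

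The crux is $T_2$. Using the adversarial-regression decomposition \cref{eq:secondDecomposition} and taking expectations, I would reduce the per-fold bias algebraically --- cancelling the terms linear in $\hat\rho_+$ and in $\rho_+$ against the $(1-\hat e)e/\hat e$ weight, which produces the factor $(\hat e-e)/\hat e$ --- to
\[
P\bigl\{(\hat\rho_+-\rho_+)\tfrac{\hat e-e}{\hat e}\bigr\}+P\bigl\{\tfrac{(1-\hat e)e}{\hat e}\bigl(\varrho_+(\cdot,1;\hat Q_+)-\rho_+(\cdot,1)\bigr)\bigr\}.
\]
The first term is $O_P(\|\hat\rho_+-\rho_+\|_2\|\hat e-e\|_2)=o_P(n^{-1/2})$ by Cauchy--Schwarz and the product-rate hypothesis. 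For the second term I would use that $q\mapsto\varrho_+(x,z;q)$ is an affine image of the convex CVaR minimization objective $q\mapsto\int q+\tfrac{1}{1-\tau}\{y-q\}_+\,\d F(y\mid x,z)$, which is minimized at $q=Q_\tau(x,z)$, has vanishing first derivative there (since $F(Q_\tau\mid x,z)=\tau$), and --- under the continuous case of \cref{condition:density} --- has second derivative $f(q\mid x,z)/(1-\tau)$ bounded; hence $0\le\varrho_+(x,1;\hat Q_+)-\rho_+(x,1)\lesssim(\hat Q_+(x,1)-Q_+(x,1))^2$ pointwise, and the term is $O_P(\|\hat Q_+-Q_+\|_2^2)=o_P(n^{-1/2})$. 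In the binary case the same two pieces use the explicit formulae \cref{eq:binary_formulas}: $\rho_+$ is Lipschitz in $\mu$, so $\|\hat\rho_+-\rho_+\|_2\lesssim\|\hat\mu-\mu\|_2$; and $\varrho_+(\cdot,1;\hat Q_+)$ differs from $\rho_+$ only on the event that $\hat\mu$ and $\mu$ lie on opposite sides of $1-\tau$, which has probability $O_P(\|\hat\mu-\mu\|_\infty)$ by the bounded density of $\mu(X,Z)$ and on which the discrepancy is itself $O_P(\|\hat\mu-\mu\|_\infty)$, giving $O_P(\|\hat\mu-\mu\|_\infty^2)=o_P(n^{-1/2})$. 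Combining yields $\sqrt n(\hat\psi_1^+-\psi_1^+)=\sqrt n(\mathbb{P}_n-P)\{\phi_1^+(\cdot;\eta)\}+o_P(1)$; stacking with $\hat\psi_0^-$ gives \cref{eq:asymptotic_normality} with $\mathbf{\Sigma}$ the covariance matrix of $(\phi_{\ATE}^-(\cdot;\eta)-\psi_{\ATE}^-,\ \phi_{\ATE}^+(\cdot;\eta)-\psi_{\ATE}^+)$. To claim \emph{efficiency}, I would separately verify that $\phi_\diamond^\pm(\cdot;\eta)-\psi_\diamond^\pm$ is the efficient influence function of $\psi_\diamond^\pm$ in the nonparametric model for $\Pobs$: differentiate $\psi_1^+=P\{ZY+(1-Z)\rho_+(X,1)\}$ along regular parametric submodels, invoking the envelope property above so the quantile contributes no pathwise-derivative term, and check the resulting gradient lies in $L^2_0(\Pobs)$ and represents the derivative.

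\textbf{Main obstacle.} The bias step $T_2$ is where the real work lies: the algebraic reduction must be done carefully, and the key point is that the $\hat Q_+$-remainder is genuinely second order --- this is precisely the envelope/Danskin property of CVaR in its quantile argument, and it is what forces (and is supplied by) \cref{condition:density}, both in the continuous case (bounded conditional density) and, via the thresholding argument, in the binary case (bounded density of $\mu(X,Z)$). A secondary subtlety is that $\phi$ is non-smooth in $\hat Q_\pm$; this is defused by the Lipschitz property of $q\mapsto q+\Lambda^{\sign(y-q)}(y-q)$, but it must be noticed to control the empirical-process term.
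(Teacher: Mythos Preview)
Your proposal is correct and follows essentially the same approach as the paper: a cross-fitted debiased argument where the empirical-process term is controlled by conditional Chebyshev plus $L^2$-continuity of $\phi_1^+$ in the nuisances, and the bias term reduces to a product-of-rates piece plus a quantile remainder that is second order via the Neyman-orthogonality/envelope property of CVaR (\cref{lemma:cvar_neyman_orthogonal} in the continuous case, the thresholding-plus-bounded-density argument in the binary case), with efficiency verified by a pathwise-derivative computation in which the quantile contributes no first-order term. The paper's bookkeeping differs slightly --- it decomposes $\E_{n,k}[\phi_1^+(\cdot;\hat\eta)]-\E_{n,k}[\phi_1^+(\cdot;\eta)]$ directly into four pieces (change-$Q$, change-$e$, change-$\rho$, product) and handles bias and variance together via conditional Chebyshev on each, rather than first splitting into your $T_1$ and $T_2$ --- but your two-term bias reduction via \cref{eq:secondDecomposition} is an equivalent (and arguably cleaner) rearrangement of the same algebra.
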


The semiparametric efficiency bound $\mathbf{\Sigma}$ is the smallest asymptotic variance attainable by any estimator which converges locally uniformly to its asymptotic distribution.  Moreover, $\textup{trace}(\mathbf{\Sigma})$ is an asymptotic lower bound on the local minimax mean squared error attainable by \emph{any} estimator.  If the rate conditions and bounds assumed in Theorem \ref{theorem:double_valid_double_sharp} hold uniformly in a nonparametric model $\mathcal{P}$, the convergence in \cref{eq:asymptotic_normality} can be improved from locally uniform to globally uniform as in \citet[Theorem 3.1]{doubleML} or \citet[Theorem 1]{kallus2020localized}.  We skip this detail in the analysis for brevity.

The rate conditions in \cref{theorem:efficiency} will be satisfied if all first-stage nuisance parameters are estimated at rates faster than $n^{-1/4}$.  This is a standard requirement in semiparametric inference.  For the outcome regression, propensity score, and quantile regression nuisances, primitive conditions that imply this rate can be found in the rich literature on nonparametric function estimation (see \citealp{generalized_random_forests, belloni2011quantile, gyorfi_etal2002, wainwright_2019}, and references therein). 
Even the $L^{\infty}$ rate requirement on $\hat\mu$ needed to estimate (discrete) quantiles in the binary case is achievable in many cases \citep[\eg,][]{stone1982optimal}.

Convergence rates for $\hat{\rho}_{\pm}$ in the continuous case are slightly nonstandard since the transformed outcome is itself estimated from the data.  Fortunately, this regression problem satisfies a conditional Neyman orthogonality \citep{doubleML, foster2020orthogonal}:
\begin{align}
\frac{\partial}{\partial q} \int ( \Lambda^{-1} y + (1 - \Lambda^{-1})( q + \tfrac{1}{1 - \tau} \{ y - q \}_{\pm} )) \, f(y \mid x, z) \d y \, \bigg|_{q = Q_{\pm}(x, z)} = 0. \label{eq:kappa_neyman_orthogonal}
\end{align}
In the Appendix, we show that this property allows us to essentially ignore estimation error in the transformed outcome when deriving convergence rates for $\hat{\rho}_{\pm}$, meaning the rate $|| \hat{\rho}_{\pm} - \rho_{\pm} ||_2 = o_P(n^{-1/4})$ may be achieved even when $\hat{Q}_{\pm}$ is itself estimated by black-box methods.  \citet{olma2021truncated} obtains similar results for two-step CVaR estimation using local polynomial regression methods.

\section{Theory for inference on bounds} \label{sec:InferenceOnBounds}

In this section, we explain how the DVDS estimators can be used to set confidence limits on the endpoints of the partially-identified set or the entire set itself.  While all previous approaches for inference in the MSM have required computationally-intensive bootstrap procedures \citep{dornGuo2021sharp, soriano2021interpretable, zsb2019}, we propose using simple Wald-type intervals based on the following standard errors:
\begin{align*}
\hat{\sigma}_{\pm}^2 &= \frac{1}{n(n - 1)} \sum_{k = 1}^K \sum_{i \in \F_k} ( \phi_{\ATE}^{\pm}(X_i, Y_i, Z_i; \hat{\eta}^{(-k)}) - \hat{\psi}_{\ATE}^{\pm})^2
\end{align*}
Here, we recall that $\phi_{\ATE}^{\pm}$ is the estimated recentered influence function defined in \cref{sec:estimator}.

The following Theorem shows that confidence limits based on these standard errors are asymptotically valid and optimal under the conditions of Theorem \ref{theorem:efficiency}. This is a straightforward consequence of asymptotic normality and semiparametric efficiency, so we view it as the inferential analog of double sharpness.

\begin{theorem}[Sharp inference] \label{theorem:sharp_inference}
Assume the conditions of Theorem \ref{theorem:efficiency}.  Let $z_{1 - \alpha}$ be the $100(1 - \alpha)\%$ quantile of the standard normal distribution.  Then we have:
\begin{align}
\lim_{n \rightarrow \infty} \P ( \psi_{\ATE}^+ \leq \hat{\psi}_{\ATE}^+ + z_{1 - \alpha} \hat{\sigma}_+ ) = 1 - \alpha \quad \text{ for any } \alpha \in (0, 1). \label{eq:sharp_inference}
\end{align}
Moreover, for testing the null hypothesis $\psi_{\ATE}^+ \leq 0$ at level $\alpha$, the test that rejects when $\hat{\psi}_{\ATE}^+$ exceeds $z_{1 - \alpha} \hat{\sigma}_+$ is asymptotically most powerful in the sense of \citet[Theorems 25.44--45]{asymptotic_statistics}.
\end{theorem}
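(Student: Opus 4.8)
The plan is to deduce both displays from the efficient asymptotic-linearity expansion \eqref{eq:asymptotic_normality} of \Cref{theorem:efficiency} together with a consistency statement for the variance estimator, after which the coverage claim is immediate from Slutsky's lemma and the optimality claim is a direct application of the classical theory of locally asymptotically uniformly most powerful tests. So the only real work is to establish that $n\hat\sigma_\pm^2$ is consistent for $\Sigma_\pm$, the corresponding diagonal entry of $\mathbf{\Sigma}$ in \eqref{eq:asymptotic_normality}; everything else is bookkeeping or citation.

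For the variance-consistency step, write $\hat\phi_i = \phi_{\ATE}^\pm(X_i,Y_i,Z_i;\hat\eta^{(-k)})$ and $\phi_i = \phi_{\ATE}^\pm(X_i,Y_i,Z_i;\eta)$ for $i\in\F_k$. Since $\sum_i\hat\phi_i = n\hat\psi_{\ATE}^\pm$ by \cref{algo:dvds}, one has $n\hat\sigma_\pm^2 = \tfrac{n}{n-1}\big(\tfrac1n\sum_i\hat\phi_i^2-(\hat\psi_{\ATE}^\pm)^2\big)$, so it suffices to control $\tfrac1n\sum_i\hat\phi_i^2 = \tfrac1n\sum_i\phi_i^2 + \tfrac1n\sum_i(\hat\phi_i-\phi_i)^2 + \tfrac2n\sum_i\phi_i(\hat\phi_i-\phi_i)$. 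The first term tends in probability to $\E[\phi_1^2]$ by the law of large numbers, finite under \Cref{condition:primitives} and the bound $\epsilon<e<1-\epsilon$. For the second term, the key observation is that, on the high-probability event that $\hat e^{(-k)}$ stays in $(\epsilon,1-\epsilon)$, the influence function is Lipschitz-stable in its nuisance arguments: since $q\mapsto\Lambda^{\pm\sign(y-q)}(y-q)$ is $\Lambda$-Lipschitz and vanishes at $q=y$, $w\mapsto(1-w)/w$ is Lipschitz on $(\epsilon,1-\epsilon)$, and $\hat\rho_\pm$ enters linearly, one gets $|\hat\phi_i-\phi_i|\lesssim|\hat e-e|\,(|Y_i|+|\hat Q_\pm|+|\hat\rho_\pm|)+|\hat Q_\pm-Q_\pm|+|\hat\rho_\pm-\rho_\pm|$ (applied to each of $\phi_1^\pm$ and $\phi_0^\mp$ for $\diamond=\ATE$). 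Conditioning on the out-of-fold sample makes $\{(\hat\phi_i-\phi_i)^2:i\in\F_k\}$ i.i.d., so a conditional Markov inequality plus Hölder's inequality — using $\E[|Y|^q]<\infty$, $\|\hat Q_\pm\|_q=\|\hat\rho_\pm\|_q=O_P(1)$, $|\hat e-e|\le1$, and the $L^2$ rates $\|\hat e-e\|_2,\|\hat Q_\pm-Q_\pm\|_2,\|\hat\rho_\pm-\rho_\pm\|_2=o_P(1)$ from \Cref{theorem:efficiency} — shows $\tfrac1n\sum_i(\hat\phi_i-\phi_i)^2=o_P(1)$; the cross term is then $o_P(1)$ by Cauchy–Schwarz. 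Combined with $(\hat\psi_{\ATE}^\pm)^2\to_P(\psi_{\ATE}^\pm)^2$, this gives $n\hat\sigma_\pm^2\to_P\E[\phi_1^2]-(\psi_{\ATE}^\pm)^2=\Sigma_\pm$.

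Given this, \eqref{eq:sharp_inference} follows: assuming the mild non-degeneracy $\Sigma_+>0$ (automatic in the continuous case of \Cref{condition:density}, since $\phi_{\ATE}^+$ is then strictly increasing in $Y$ on $\{Z=1\}$), Slutsky's lemma applied to \eqref{eq:asymptotic_normality} yields $(\hat\psi_{\ATE}^+-\psi_{\ATE}^+)/\hat\sigma_+\rightsquigarrow N(0,1)$, so
\[
\P\big(\psi_{\ATE}^+\le\hat\psi_{\ATE}^++z_{1-\alpha}\hat\sigma_+\big)=\P\Big(\tfrac{\hat\psi_{\ATE}^+-\psi_{\ATE}^+}{\hat\sigma_+}\ge-z_{1-\alpha}\Big)\longrightarrow 1-\Phi(-z_{1-\alpha})=1-\alpha.
\]
For the optimality claim I would invoke \citet[Theorems 25.44--45]{asymptotic_statistics}. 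The required inputs are: the observed-data model (nonparametric in $\Pobs$, subject only to the regularity of Conditions \ref{condition:primitives}--\ref{condition:density}) is locally asymptotically normal along smooth one-dimensional submodels; $\psi_{\ATE}^+$ is pathwise differentiable at $\Pobs$ with efficient influence function $\phi_{\ATE}^+(\cdot;\eta)-\psi_{\ATE}^+$ — precisely what ``attains the semiparametric efficiency bound'' in \Cref{theorem:efficiency} encodes, and the content of the efficiency-bound derivation in the Appendix; and, by \eqref{eq:asymptotic_normality} together with the variance-consistency step, $\hat\psi_{\ATE}^+$ is an asymptotically efficient estimator of $\psi_{\ATE}^+$ paired with a consistent estimator $\hat\sigma_+$ of its asymptotic standard deviation. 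Theorems 25.44--45 then state that, along local alternatives $\psi_{\ATE}^+=h/\sqrt n$, no asymptotically level-$\alpha$ test has larger asymptotic power than the one-sided Gaussian test built from the efficient influence function, and that the test rejecting when $\hat\psi_{\ATE}^+>z_{1-\alpha}\hat\sigma_+$ attains this bound uniformly over the nuisance directions; in particular it has asymptotic size $\alpha$ by the specialization $\psi_{\ATE}^+=0$ of the display above.

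The main obstacle is the variance-consistency step: one has to check that first-stage estimation error does not inflate the empirical second moment, which requires combining the Lipschitz stability of $\phi_{\ATE}^\pm$ in its nuisance arguments on the good propensity event with the moment bounds of \cref{condition:primitives,condition:first_stage} and the $L^2$ rates assumed in \Cref{theorem:efficiency} — a routine but somewhat delicate uniform-integrability argument, and the only place cross-fitting is genuinely used. The optimality step is conceptually the most sophisticated but, as the paper remarks, is essentially a citation once pathwise differentiability with the stated efficient influence function is on record, which it is en route to \Cref{theorem:efficiency}.
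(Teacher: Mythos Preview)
Your proposal is correct and follows essentially the same route as the paper: prove $n\hat\sigma_+^2\to_P\Var(\phi_{\ATE}^+(X,Y,Z;\eta))$, note non-degeneracy, then conclude coverage via \eqref{eq:asymptotic_normality} and Slutsky, with optimality left as a citation to van der Vaart. The only packaging difference is that the paper routes the variance-consistency step through a prepackaged lemma (\Cref{lemma:influence_function_convergence}) establishing $\|\phi_{\ATE}^+(\cdot;\hat\eta^{(-k)})-\phi_{\ATE}^+(\cdot;\eta)\|_{2,k}=o_P(1)$, whereas you reconstruct exactly that argument inline with the same Lipschitz bounds and cross-fitting/conditional-Markov mechanics; your treatment of the optimality claim is in fact more explicit than the paper's proof, which does not spell it out.
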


The theoretical guarantees for lower confidence bounds of the form $\hat{\psi}_{\ATE}^- - z_{1 - \alpha} \hat{\sigma}_-$ are analogous.  Optimal one-sided confidence limits for the partially identified also provide optimal one-sided confidence limits for the true (unidentified) ATE.  To obtain an asymptotic $100(1 -\alpha)\%$ two-sided confidence region for the entire identified set, one can intersect the $100(1 - \alpha/2)\%$ upper and lower confidence regions.  By the union bound, this will cover the identified set with asymptotic probability at least $1 - \alpha$.  It is possible to construct slightly refined intervals with asymptotic coverage exactly $1 - \alpha$ by accounting for the correlations between $\hat{\psi}_{\ATE}^+, \hat{\psi}_{\ATE}^-$ \citep[Corollary 3]{kallus2021assessing}.  However, we expect the scope of this refinement to be limited in practice as we typically see strong positive correlations between the bounds in our simulations.

One dissatisfying feature of \cref{theorem:sharp_inference} is that it requires stronger rate conditions than those typically used for inference under unconfoundedness.  For example, inference based on the usual AIPW estimator does not require any quantile regression or $L^{\infty}$ rate conditions.

Thus, our second inference result considers what happens when these extra rate conditions are not satisfied.  In other words, we allow for the quantiles $Q_{\pm}$ to be estimated at a rate slower than $n^{-1/4}$, or even to be completely misspecified.  In the binary case, this amounts to assuming only the standard product-of-rates condition used in the analysis of the AIPW estimator, with no $L^{\infty}$ rate condition.  Because this result does not assume quantile consistency, we view it as the inferential analogue of double validity.

\begin{theorem}[Valid inference] \label{theorem:valid_inference} Assume Conditions \ref{condition:primitives}, \ref{condition:first_stage}, and \ref{condition:density} and the following:
\begin{enumerate}[itemsep=-1ex]
    \item \emph{Continuous case}.  $|| \hat{e} - e ||_2$ and $|| \hat{\rho}_{\pm} - \varrho_{\pm}(\cdot, \cdot; \hat{Q}_{\pm} ) ||_2$ converge to zero at rate $|| \hat{e} - e ||_2 \times || \hat{\rho}_{\pm} - \varrho_{\pm}(\cdot, \cdot; \hat{Q}_{\pm} ) ||_2 = o_P(n^{-1/2})$ and $\hat{Q}_{\pm}$ tends to some limit $\bar{Q}_{\pm}$ at any rate.
    \item \emph{Binary case}. $|| \hat{e} - e ||_2$ and $|| \hat{\mu} - \mu ||_2$ tend to zero at rate $|| \hat{e} - e ||_2 \times || \hat{\mu} - \mu ||_2 = o_P(n^{-1/2})$.
\end{enumerate}
Then $\hat{\psi}_{\ATE}^+$ may have non-Gaussian fluctuations, converge at an arbitrarily slow rate, or even be inconsistent.  Nevertheless, Wald confidence intervals remain asymptotically valid:
\begin{align}
\liminf_{n \rightarrow \infty} \P( \psi_{\ATE}^+ \leq \hat{\psi}_{\ATE}^+ + z_{1 - \alpha} \hat{\sigma}_+) \geq 1 - \alpha \quad \text{for any } \alpha \in (0, 1). \label{eq:valid_inference}
\end{align}
\end{theorem}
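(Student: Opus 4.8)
The plan is to show two things: first, that the estimator $\hat\psi_{\ATE}^+$ (and its sampling distribution, whatever it is) is asymptotically conservative — i.e., it does not undershoot $\psi_{\ATE}^+$ by more than $o_P(1)$ — and second, that the recentering by $z_{1-\alpha}\hat\sigma_+$ is enough to absorb any residual (possibly non-Gaussian, possibly slowly-vanishing) fluctuation of $\hat\psi_{\ATE}^+$ around its conservative target. The starting point is the same von Mises / second-order expansion used for Theorem~\ref{theorem:efficiency}: write
\[
\hat\psi_{\ATE}^+ - \psi_{\ATE}^+ = \mathbb{G}_n \text{-term} + \text{(population plug-in bias)} + \text{(empirical-process remainder)},
\]
where the $\mathbb{G}_n$-term is the centered empirical average of $\phi_{\ATE}^+(\cdot;\bar\eta)$ at the \emph{limiting} (possibly misspecified) nuisances $\bar\eta=(e,\bar Q_\pm,\varrho_\pm(\cdot,\cdot;\bar Q_\pm))$ in the continuous case (resp.\ $\bar\eta=(e,\dots)$ built from $\mu$ in the binary case), the empirical-process remainder is $o_P(n^{-1/2})$ by cross-fitting plus Conditions~\ref{condition:primitives}--\ref{condition:first_stage} (the usual Lemma-style argument: conditional on the out-of-fold data, the term is a sample average of a function with $L^2$ norm $o_P(1)$, hence $o_P(n^{-1/2})$), and the plug-in bias is the key object to control.

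The heart of the argument is the bias term. Using the two decompositions \eqref{eq:firstDecomposition}--\eqref{eq:secondDecomposition} of $\phi_1^+$ (and the analogous ones for $\phi_0^-$), the population expectation of $\phi_{\ATE}^+(\cdot;\hat\eta)$ decomposes into a ``validity'' piece and a ``product-bias'' piece. By the sign-matching argument behind Proposition~\ref{prop:ipw_valid} and the CVaR-as-infimum argument behind Proposition~\ref{prop:dm_valid}, the validity piece is \emph{one-sided}: for \emph{any} fixed $\bar Q_\pm$ it contributes a nonnegative amount to $\mathbb{E}[\phi_{\ATE}^+]$ relative to $\psi_{\ATE}^+$ — this is exactly where ``double validity'' lives, and it is why misspecified quantiles only ever make the bound conservative rather than anticonservative. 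The remaining product-bias piece is the familiar AIPW-type cross term, bilinear in $(\hat e - e)$ and $(\hat\rho_\pm - \varrho_\pm(\cdot,\cdot;\hat Q_\pm))$ in the continuous case (resp.\ $(\hat e-e)$ and $(\hat\mu-\mu)$ in the binary case), so it is $O_P(\|\hat e-e\|_2 \cdot \|\hat\rho_\pm - \varrho_\pm(\cdot,\cdot;\hat Q_\pm)\|_2) = o_P(n^{-1/2})$ by hypothesis. Combining: $\mathbb{E}[\phi_{\ATE}^+(\cdot;\hat\eta)\mid \hat\eta] \ge \psi_{\ATE}^+ - o_P(n^{-1/2})$, so after adding back the mean-zero $\mathbb{G}_n$-term we get $\hat\psi_{\ATE}^+ \ge \psi_{\ATE}^+ + n^{-1/2}\,\mathbb{W}_n - o_P(1)$ where $\mathbb{W}_n$ is a normalized sum of i.i.d.\ (across folds, asymptotically) terms with finite variance by Condition~\ref{condition:primitives}, hence tight — actually asymptotically normal with some variance $\bar\sigma^2\ge 0$, even though the \emph{overall} $\hat\psi_{\ATE}^+$ need not be asymptotically linear because the validity-piece bias can itself fluctuate at a slow rate. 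The final step is to argue $n\hat\sigma_+^2$ is consistent for (an upper bound of) that limiting variance: $\hat\sigma_+^2$ is the empirical variance of $\phi_{\ATE}^+(\cdot;\hat\eta)$, which converges to $\operatorname{Var}(\phi_{\ATE}^+(\cdot;\bar\eta)) \ge \bar\sigma^2$ since the ``extra'' variance from the nonvanishing bias-fluctuation term only adds to it; then a Slutsky argument gives $\P(\psi_{\ATE}^+ \le \hat\psi_{\ATE}^+ + z_{1-\alpha}\hat\sigma_+) \ge \P(z_{1-\alpha}\bar\sigma + \text{(conservative slack)} \ge -\bar\sigma\cdot Z) + o(1) \ge 1-\alpha$, using $z_{1-\alpha}>0$ whenever $\alpha<1/2$ and a direct check for $\alpha\ge 1/2$ via the conservatism slack.

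The main obstacle — and the part requiring genuine care rather than bookkeeping — is decoupling the two roles the misspecified quantile $\bar Q_\pm$ plays: it enters the validity-piece bias (where it must be handled by the exact sign-matching/CVaR inequalities, not by rate arguments) \emph{and} it enters the definition of the orthogonal target $\varrho_\pm(\cdot,\cdot;\hat Q_\pm)$ that $\hat\rho_\pm$ is required to track. One has to be careful that the product-bias cross term is genuinely the product of the two norms $\|\hat e-e\|_2$ and $\|\hat\rho_\pm-\varrho_\pm(\cdot,\cdot;\hat Q_\pm)\|_2$ with \emph{no leftover first-order-in-$\hat Q_\pm$ term} — this is precisely what the conditional Neyman orthogonality \eqref{eq:kappa_neyman_orthogonal} buys us, but it must be invoked at $\hat Q_\pm$ rather than at the true $Q_\pm$, so one needs the orthogonality to hold pointwise in the conditioning variable for every admissible $q$, which \eqref{eq:kappa_neyman_orthogonal} does supply. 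A secondary technical point is that because $\hat\psi_{\ATE}^+$ may not be $\sqrt n$-consistent, the usual ``$\hat\sigma_+$ consistent for the asymptotic variance'' step must be replaced by ``$\hat\sigma_+$ is asymptotically no smaller than the scale of the fluctuation,'' which follows from the variance decomposition but should be stated explicitly to make the Slutsky step rigorous.
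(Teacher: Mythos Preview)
Your core approach is correct and in fact more elementary than the paper's continuous-case proof. The paper takes a longer route: it first shows the Wald upper bound equals a linearized-bootstrap quantile, then constructs a pointwise-dominated oracle $\hat\ell_{i,n}\le\hat\phi_{i,n}$ (sign-matching applied observation by observation), and finally shows the oracle is asymptotically normal centered at $\psi^+$ with variance matching the bootstrap. Your argument instead applies the CVaR-as-infimum inequality at the \emph{population} level to get $P\phi^+(\cdot;\hat\eta)\ge\psi^+ - o_P(n^{-1/2})$ directly, then combines with the cross-fitted empirical-process remainder and Slutsky. This is essentially what the paper does in its binary-case proof, and it works equally well in the continuous case; the bootstrap-coupling machinery is not needed. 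The paper's route makes the local-uniformity claim (over contiguous alternatives) slightly more transparent, since the pointwise-dominated oracle is regular when $\bar Q_\pm=Q_\pm$, but that is a secondary point.

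A few stated details are off, though none fatal. First, the ``$o_P(1)$'' in ``$\hat\psi^+\ge\psi^++n^{-1/2}\mathbb W_n-o_P(1)$'' must be $o_P(n^{-1/2})$ for the Slutsky step to work; presumably a slip. Second, your variance discussion is muddled: $n\hat\sigma_+^2$ converges to $\Var(\phi^+(\cdot;\bar\eta))$, which is \emph{exactly} the asymptotic variance of your $\mathbb W_n$, not merely an upper bound, so no ``extra variance from bias-fluctuation'' argument is needed and the case split on $\alpha\gtrless1/2$ is unnecessary. Third, your final paragraph about invoking Neyman orthogonality ``at $\hat Q_\pm$'' is a red herring. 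The bias decomposition you need is
\[
P\phi_1^+(\cdot;\hat\eta)=\E\bigl[ZY+(1-Z)\varrho_+(X,1;\hat Q_+)\bigr]+\E\Bigl[\tfrac{e-\hat e}{\hat e}\bigl(\varrho_+(\cdot,1;\hat Q_+)-\hat\rho_+(\cdot,1)\bigr)\Bigr],
\]
where the first term is $\ge\psi_1^+$ by the exact CVaR-infimum inequality (Proposition~\ref{prop:dm_valid}, no approximation) and the second is bounded by Cauchy--Schwarz and the theorem's product-rate hypothesis. Orthogonality plays no role in this step; it matters only upstream, in verifying that the hypothesis on $\|\hat\rho_\pm-\varrho_\pm(\cdot,\cdot;\hat Q_\pm)\|_2$ is achievable.
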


In the continuous case, \cref{theorem:valid_inference} assumes that $\hat{\rho}_{\pm}$ is not too far from $\varrho_{\pm}(\cdot, \cdot, \hat{Q}_{\tau})$, the conditional mean of the \emph{estimated} transformed outcome.  We show in the appendix that this is achievable even when the quantile regression is misspecified or converges at a much slower rate.

In our view, \cref{theorem:valid_inference} is surprising since Wald-type confidence intervals are typically invalid when asymptotic normality fails.  We prove this result by showing that the Wald upper confidence bound based on $\hat{\psi}_{\ATE}^+$ is first-order equivalent to the confidence bound obtained from a certain bootstrap scheme.  Then, we couple the bootstrap distribution of the DVDS estimator with that of an infeasible but asymptotically normal estimator $\bar{\psi}_{\ATE}^+$ to show that the quantiles of the former bootstrap distribution are always larger than the quantiles of the latter.  Since $\bar{\psi}_{\ATE}^+$ is a well-behaved estimator, its bootstrap quantiles are asymptotically valid confidence limits.  Therefore, the bootstrap quantiles based on the DVDS estimator are as well.

We make a technical remark on uniform validity.  Under the stronger assumptions of \Cref{theorem:sharp_inference}, the DVDS estimators are regular so the convergence in \Cref{eq:sharp_inference} is automatically uniform over local (contiguous) neighborhoods of $P$.  Under the weaker assumptions of \Cref{theorem:valid_inference}, the DVDS bounds may no longer be regular, so one may expect that the convergence in \Cref{eq:valid_inference} is only pointwise.  However, this is not the case.  Even when the DVDS estimators are nonregular, Wald confidence bounds retain their aforementioned uniform validity. We sketch the argument in our proof of \Cref{theorem:valid_inference}.

\section{Numerical illustration} \label{sec:experiments}

We next demonstrate DVDS in simulated examples and a case study.

\subsection{Simulated examples}

In this section, we illustrate the DVDS estimators in two simulated examples.
\begin{itemize}
    \item \emph{Binary outcomes}.  In our first example, we compute DVDS bounds for the ATE based on $n = 1{,}000$ observations from the following data-generating process:
    \begin{align*}
    X &\sim \textup{Uniform}( [-1, 1]^5)\\
    Z \mid X &\sim \textup{Bernoulli} \left( \frac{1}{1 + \exp(X_1 + 0.5 \mathbb{I} \{ X_2 > 0 \} + 0.5 X_2 X_3)} \right)\\
    Y \mid X, Z &\sim \textup{Bernoulli} \left( \frac{1}{1 + \exp(0.5 X_1 + X_2 + 0.25 X_2 X_3)} \right).
    \end{align*}
    We compute bounds for a range of $\Lambda$ values between one and three.  For nuisance estimation, we use random forest regression to estimate the outcome regression and propensity score, then form quantile and CVaR estimates using \Cref{eq:binary_formulas}.  
    
    \item \emph{Continuous outcomes}. Our second example mimics our first example, except the conditional distribution of $Y$ is replaced by a heteroscedastic normal distribution:
    \begin{align*}
    Y \mid X, Z &\sim N ( 2 \sign(X_1) + X_2 + X_2 X_3,\, (1 + X_4^2)^2).
    \end{align*}
    In this case, we first estimate $Q_{\pm}$ and then fit $\textup{CVaR}_{\tau}(x, z)$ using transformed-outcome regression.  Then, we combine the CVaR estimate with a separate random forest regression estimate of $\mu(x, z)$ to learn $\rho_{\pm}(x, z)$.  
    This separate-regressions approach ensures that the DVDS estimators recover the AIPW estimator when $\Lambda = 1$. 
    To avoid re-estimating a separate regression for each value of $\Lambda$, we use the same random forest weights for all quantile regressions and the same random forest weights for all transformed-outcome regressions, with transformed-outcome weights coming from the regression corresponding to $\Lambda=2$.  
\end{itemize}
In both cases, in a slight departure from our formal results, we use out-of-bag predictions instead of cross-fitting.  This approach is also taken in \citet{athey2017efficient}.
All forests are fit using \textit{R} package \texttt{grf} \citep{generalized_random_forests}.

These computations were performed $1,000$ times in each example.  For each replication, we also computed 95\% Wald confidence intervals based on the DVDS estimators.  For comparison, we also obtained bounds using the AIPW sensitivity analysis for the MSM proposed in \citet[Section 6.2]{zsb2019}.  The estimated bounds are plotted in \Cref{fig:simulation_results}.  

In the binary case, DVDS estimators are approximately unbiased for all considered values of $\Lambda$.  Moreover, nominal level $95\%$ two-sided Wald confidence bounds had coverage ranging from $92.2\%$ (when $\Lambda = 1$) to $96.5\%$ (when $\Lambda = 2.1$) with average coverage of $95.3\%$.  Meanwhile, the \citet{zsb2019} bounds are quite conservative in the binary outcome simulation. 

In the continuous case, the DVDS upper bounds perform well but the lower bounds are somewhat conservative.  Here, it appears that the random forest methods used for quantile and regression estimation did not adapt well to the outcome model used.  Nevertheless, nominal level 95\% two-sided Wald confidence bounds had reasonable coverage ranging from $91.7\%$ (when $\Lambda = 1$) to $98.1\%$ (when $\Lambda = 3$) with average coverage of $96.1\%$.  The DVDS bounds improve only slightly over the \citet{zsb2019} bounds in this example. The difference may be more pronounced in examples with more heteroscedasticity.

In summary, these experiments largely validate the theoretical results presented in Sections \ref{sec:point_estimation_guarantees} and \ref{sec:InferenceOnBounds}.  When nuisances are estimated well, DVDS point estimates and confidence intervals yield sharp inferences on the partially identified set even when machine learning methods are used.  Meanwhile, when some nuisances are estimated poorly, the procedure errs on the side of conservatism.

\begin{figure}
    \centering
    \includegraphics[width=16cm]{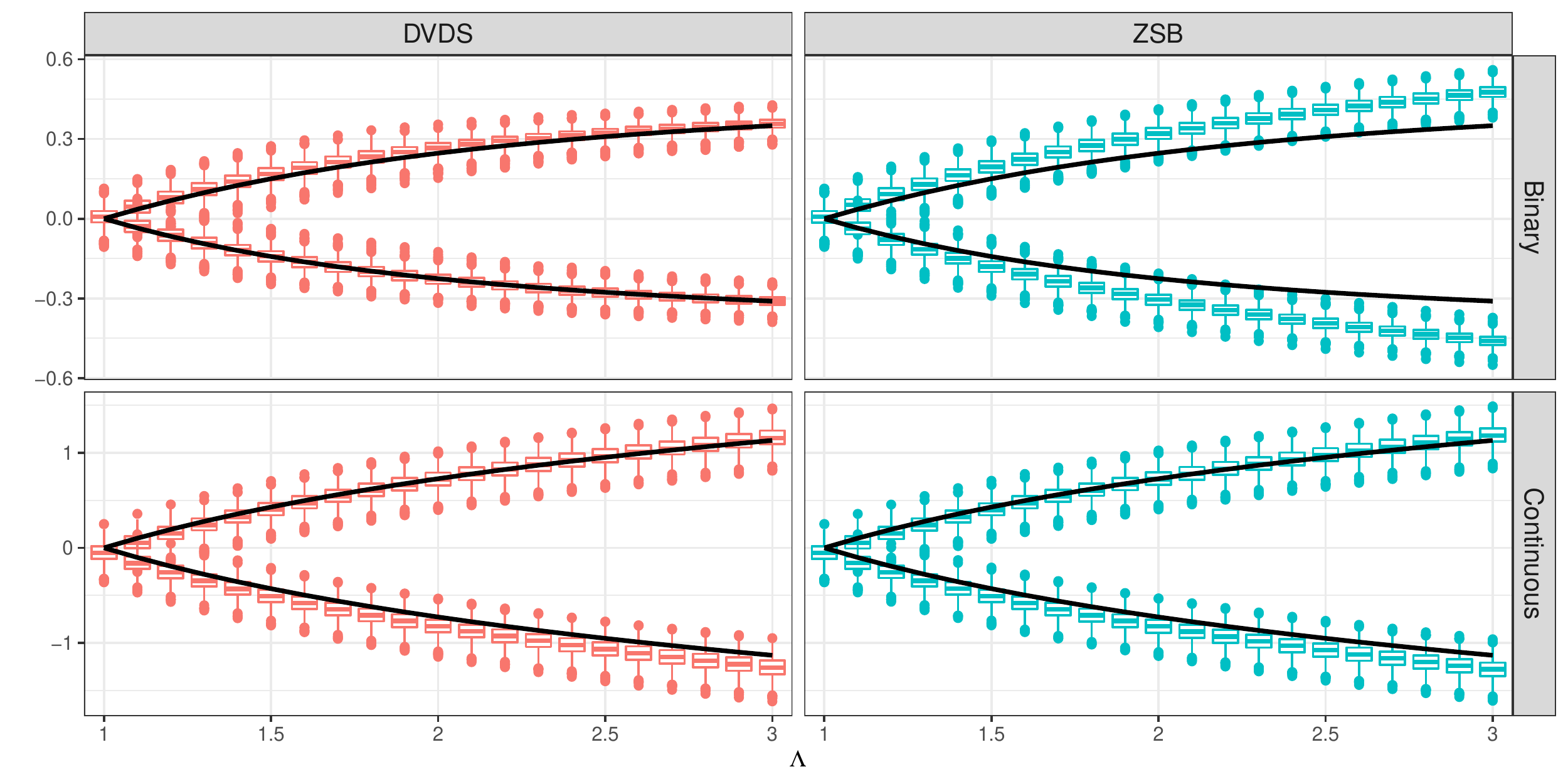}
    \caption{Boxplots of the estimated ATE bounds in simulated examples.  The solid line indicates the upper and lower bounds of the true partially identified set.  \emph{DVDS} is the doubly-valid/doubly-sharp method introduced in this paper.  \emph{ZSB} is the method from Section 6.2 of \citet{zsb2019}.}
    \label{fig:simulation_results}
\end{figure}

\subsection{Application to right heart catheterization}

We now apply the DVDS estimators in a real-data example.  Specifically, we revisit \citet{ConnorsEtAl1996}'s influential finding that right heart catheterization (RHC) leads to lower survival rates among critically ill patients.  Their data has been extensively analyzed in the causal inference literature (\textit{e.g.} \citet{tan2020, ATTViaIPW, cui_tchetgen2019, lin1998}), and was the first real-data application ever studied using the MSM \citep{tan2006}.

We use a version of the data from \citet{BWSnooping}.  It consists of 5{,}735 observations on adult patients from five US hospital centers.  The treatment $Z$ indicates whether a patient received RHC within 24 hours of admission, and the outcome $Y$ is 30-day survival rate.  The data also contains measurements on a rich set of demographic and medical factors $X$ measured prior to treatment.  For a complete list, see \citet[Tables 1 and 2]{ATTViaIPW}.  Our estimand of interest is the ATE.

In our analysis, we estimate the propensity score $\hat{e}$ and outcome regression $\hat{\mu}$ using the Super Learner ensemble method \citep{SuperLearner}.  As base learners, we use logistic regression, random forest, gradient boosted trees, and support vector machine with Platt scaling.

The AIPW estimator based on our estimated nuisances yields an ATE estimate of $-5.4\%$ with a 95\% confidence interval of $[-7.8\%, -3\%]$.  Thus, assuming unconfoundedness, we find that RHC has a negative average effect on 30-day survival.  \citet{cui_tchetgen2019, tan2020} also apply machine learning methods to this dataset and obtain similar estimates.

We compare the DVDS method to the AIPW-based sensitivity analysis from \citet{zsb2019}.  Since the outcome $Y$ in this example is binary, no nuisance functions beyond the propensity score and outcome regression are needed for either sensitivity analysis.  We also compute (pointwise) 95\% confidence intervals based on the DVDS estimator.  

Estimated bounds for various values of $\Lambda$ are shown in \Cref{fig:RHC_Sensitivity}.  Both sensitivity analyses find that fairly small amounts of unobserved confounding would suffice to explain away the negative point estimate ($\Lambda = 1.17$ for the \citet{zsb2019} method, $\Lambda = 1.23$ for the DVDS method).  If we further account for statistical variability in the estimated bounds, the DVDS estimators would find statistical evidence for a negative treatment effect only under the assumption that unobserved confounders cannot change the odds of treatment by up to a factor of $\Lambda = 1.12$.  Our bound estimates are comparable to those of \cite{tan2006} and suggest more caution is warranted than is conveyed by \cite{ConnorsEtAl1996}'s original sensitivity analysis.

\begin{figure}
    \centering
    \includegraphics[width=.7\textwidth]{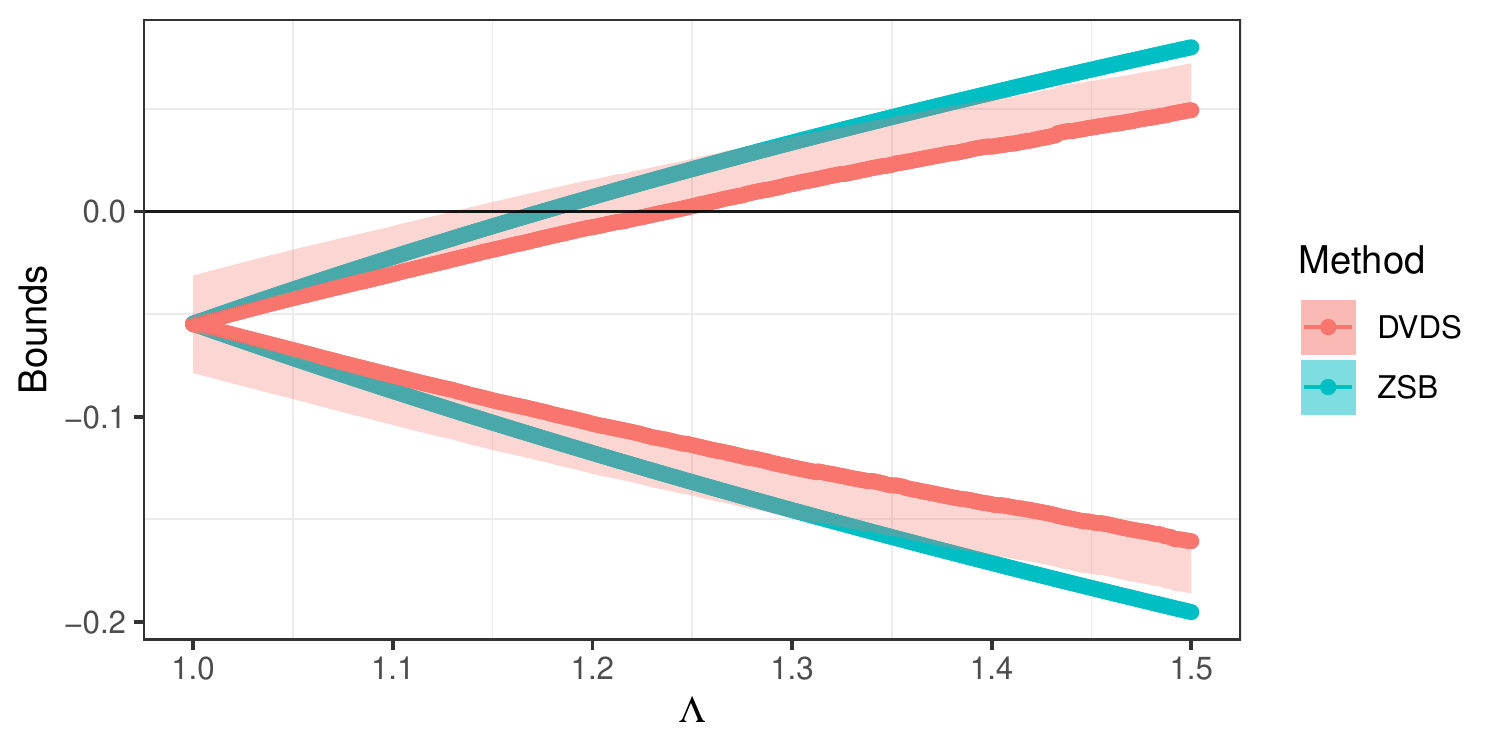}
    \caption{Sensitivity analysis results for ATE. Dots reflect point estimates for bounds, with 95\% DVDS confidence interval indicated by shaded region. Zero effect is indicated with the solid horizontal line.}
    \label{fig:RHC_Sensitivity}
\end{figure}

\section{Discussion} \label{sec:discussion}

This paper considered estimation of bounds on the ATE (and other causal quantities) under the assumption that measuring unobserved confounders does not change the odds of treatment by more than a factor of $\Lambda$.  We characterized the sharp partial identification bounds in terms of CVaR regression and used this characterization to propose robust estimators of the partially identified set.

Perhaps the most novel phenomenon observed in our work is double validity and the associated inference-without-normality result of \Cref{theorem:valid_inference}.  As far as we can tell, neither of these properties has been observed before, possibly because they do not arise in point-identified problems.

Since the literature on semiparametric estimation of partial identification bounds is relatively nascent \citep{tchetgen_tchetgen_shpitser2012, yadlowsky2018bounds, BonviniKennedy, semenova2021generalized, doubleMLSensitivity, kallus2021assessing}, we expect that single and possibly double validity will be recurring phenomenon as the field grows.
Since the first draft of this paper, double validity has even been observed in partial identification outside of sensitivity analysis \citep{kallus2022treatment,kallus2022s}.
For sensitivity analysis, as discussed in \cref{sec:kform}, single validity should arise in any sensitivity model that bounds a convex distributional divergence between $\Pfull(Y(1) \mid X = x, Z = 0)$ and $\Pobs(Y(1) \mid X = x, Z = 1)$.
The MSM falls into this class, but numerous other choices have been popularized in the growing literature on distributionally robust optimization \citep[\eg,][]{ben2013robust,bertsimas2018robust,esfahani2018data}.  This yields a rich family of sensitivity models that may be appropriate for different applications.  
Since the first draft of this paper, \citet{jin_etal2022} have explored such alternatives using $f$-divergences and indeed obtain single validity guarantees.

\section*{Acknowledgements}

This material is based upon work supported by the National
Science Foundation under Grant No. 1939704 and by the National Science Foundation Graduate Research Fellowship Program under Grant No. DGE-2039656. Any opinions, findings, and conclusions or recommendations expressed in this material are those of the authors and do not necessarily reflect the views of the National Science Foundation. The authors thank Angela Zhou for helpful discussions while conceiving the idea for this paper and are grateful for comments from Bo Honor{\'e}, Michal Koles\'{a}r, and Mikkel Plagborg-Møller.

\newpage
\bibliographystyle{chicago}
\bibliography{bibliography.bib}

\newpage
\appendix

\section{Average treatment effect on the treated}

In this appendix, we extend our method to the average treatment effect on the treated (ATT) $\psi_{\ATT}(\Pfull) := \E_{\Pfull}[Y(1) - Y(0) \mid Z = 1]$.  We let $\psi_{\ATT}^-$ and $\psi_{\ATT}^+$ denote the sharp lower and upper bounds for the ATT under the marginal sensitivity model.

The DVDS point estimators for the sharp ATT bounds are defined as:
\begin{align*}
\hat{\psi}_{\ATT}^+ = \{ \bar{Y} - \hat{\psi}_0^- \} / \bar{Z} \quad \text{and} \quad \hat{\psi}_{\ATT}^- = \{ \bar{Y} - \hat{\psi}_0^+ \} / \bar{Z}.
\end{align*}
Here, $\bar{Y} = \sum_{i = 1}^n Y_i / n$, $\bar{Z} = \sum_{i = 1}^n Z_i / n$, and $\hat{\psi}_0^-, \hat{\psi}_0^+$ are defined in \Cref{algo:dvds} of \Cref{sec:estimator}.  These estimators are based on the relation $\psi_0^{\pm} = \{ \E[Y] - \psi_0^{\mp} \} / \E[Z]$ established in \citet[Corollary 1]{dornGuo2021sharp}.  

For statistical inference, we recommend using Wald confidence intervals based on the following standard errors:
\begin{align*}
\hat{\sigma}_{\ATT,-}^2 &= \frac{1}{N_1} \frac{1}{N_1 - 1} \sum_{k = 1}^K \sum_{i \in \F_k} (Y_i - \phi_0^+(X_i, Y_i, Z_i; \hat{\eta}^{(-k)} ) - Z_i \hat{\psi}_{\ATT}^-)^2\\
\hat{\sigma}_{\ATT,+}^2 &= \frac{1}{N_1} \frac{1}{N_1 - 1} \sum_{k = 1}^K \sum_{i \in \F_k} (Y_i - \phi_0^-(X_i, Y_i, Z_i ; \hat{\eta}^{(-k)}) - Z_i \hat{\psi}_{\ATT}^+)^2
\end{align*}
Here, $N_1 = \sum_{i = 1}^n Z_i$ is the number of treated units.  These standard errors can be naturally derived from \Cref{theorem:efficiency} and the delta method.

All of our theoretical results for the DVDS estimators of the sharp ATE bounds also extend to the DVDS estimators of the sharp ATT bounds.  In particular, the estimators $\hat{\psi}_{\ATT}^-, \hat{\psi}_{\ATT}^+$ are doubly sharp and doubly valid, and the associated confidence intervals are efficient under nonparametric rate conditions and valid even when some of these rate conditions fail.  Since the proofs of these facts are nearly identical to those for the ATE, we omit the details.

\section{Rates for transformed-outcome regression}

In this appendix, we derive convergence rates for the transformed-outcome regression $\hat{\rho}_+$ when $Y$ has a continuous distribution.  Our main goal is to illustrate the following points:
\begin{enumerate}[label=(\roman*), itemsep=-1ex]
    \item Standard assumptions and proof techniques for nonparametric regression/empirical risk minimization apply in this problem, even though the transformed-outcome is itself estimated from the data.
    \item It is possible to estimate $\rho_+$ and $\varrho_+(\cdot, \cdot; \hat{Q}_+)$ with a relatively fast convergence rate, even when the conditional quantile $Q_+$ is estimated by a black-box model with a relatively slow convergence rate.
\end{enumerate}

The specific estimation strategy we analyze is sieved least-squares with sample splitting.  We assume that the observations are divided into two sets of indices $\mathcal{I}_1, \mathcal{I}_2$ with $| \mathcal{I}_2 | \asymp n$.  The analyst first uses observations with indices in $\mathcal{I}_1$ to estimate a quantile model $\hat{Q}_+$, and then uses observations with indices in $\mathcal{I}_2$ to fit the following least-squares regression:
\begin{align}
\hat{\rho}_+ &= \argmin_{g \in \mathcal{G}_n} \sum_{i \in \mathcal{I}_2} (  \Lambda^{-1} Y_i + (1 - \Lambda^{-1})[ \hat{Q}_+(X_i, 1) + \tfrac{1}{1 - \tau} \{ Y_i - \hat{Q}_+(X_i, 1) \}_+] - g(X_i, Z_i) )^2. \label{eq:sieved_least_squares}
\end{align}
In \Cref{eq:sieved_least_squares}, the hypothesis class $\mathcal{G}_n$ may be a parameterized class (\eg, the neural networks from \citealt{farrell_etal2021}) or a nonparametric class (\eg, all monotone functions, or functions of bounded sectional variation as in \citealt{bibaut2019fast}).

We use sample splitting to get convergence rates under only high-level assumptions on the first-step quantile regression estimator.  \citet{foster2020orthogonal, kennedy2020optimal, yadlowsky2018bounds, nie_wager2020} do the same.  In practice, the non-sample-split version of (\ref{eq:sieved_least_squares}) may work well under further assumptions on $\hat{Q}_+$.  See \citet{olma2021truncated} for an example.

In our analysis, we make use of the following technical assumptions.

\begin{condition}
\label{condition:erm}
Assume the following conditions hold for some $\gamma \in (0, \infty)$, $\alpha \in (2, 4)$ and $B < \infty$.
\begin{enumerate}[label=(\alph*), itemsep=-1ex, topsep=-1ex]
    \item \emph{Boundedness}.  $|Y|$, $|| \hat{Q}_+ ||_{\infty}$, and $\sup_{g \in \mathcal{G}_n} || g ||_{\infty}$ are almost surely bounded by $B$.
    \item \emph{Capacity}. $\mathcal{G}_n$ is pointwise measurable, and its $L^2$ bracketing entropy satisfies $H(s, \mathcal{G}_n) \leq \log(n)^{\gamma} / s^{\alpha - 2}$.
    \item \emph{Quantile convergence}.  $|| \hat{Q}_+ - \bar{Q}_+ ||_2 = o_P(1)$ for some square-integrable function $\bar{Q}_+$.
    \item \emph{Approximation}. $\inf_{g \in \mathcal{G}_n} || g - \bar{\varrho}_+ ||_2 = O(n^{-1/\alpha})$, where $\bar{\varrho}_+ = \varrho_+(\cdot, \cdot; \bar{Q}_+)$.
\end{enumerate}
\end{condition}

The only nonstandard item is the approximation condition.  It asks that the sieve $\mathcal{G}_n$ approximates the conditional mean of the \emph{limiting} estimated transformed outcome.  An alternative approach might directly assume that the sieve $\mathcal{G}_n$ approximates $\varrho_+(\cdot, \cdot; \hat{Q}_+)$.  However, we feel that assumption is less interpretable, since $\varrho_+(\cdot, \cdot; \hat{Q}_+)$ is itself a random function which depends on a black-box quantile model.  In contrast, $\bar{\varrho}_+$ is a fixed function that depends only on the limit of the quantile regression, which will often be the true quantile $Q_+$.  In this case, the approximation requirement in Condition \ref{condition:erm} licences taking $\mathcal{G}_n$ to be a sieve of smooth functions (e.g. splines) as long as the true outcome regression and CVaR are smooth, even if $\hat{Q}_+$ is discontinuous (e.g. the output of a quantile regression forest).

\begin{proposition}
\label{theorem:transformed_outcome_rate}
Assume Condition \ref{condition:density} (continuous case) and Condition \ref{condition:erm}.  Then the following hold:
\begin{enumerate}[itemsep=-1ex, topsep=-1ex, label=(\roman*)]
    \item If $|| \hat{Q}_+ - Q_+ ||_2 = O_P(n^{-1/\alpha})$, then $|| \hat{\rho}_+ - \rho_+ ||_2 = o_P(n^{-1/4})$.
    \item If $|| \hat{Q}_+ - Q_+ ||_4 = O_P(n^{-1/2\alpha})$ or $|| \hat{Q}_+ - \bar{Q}_+ ||_2 = O_P(n^{-1/\alpha})$, then $|| \hat{\rho}_+ - \varrho_+(\cdot, \cdot; \hat{Q}_+) ||_2 = o_P(n^{-1/4})$.
\end{enumerate}
\end{proposition}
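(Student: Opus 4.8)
The plan is to treat this as a standard analysis of sieved least-squares (empirical risk minimization) regression, with the twist that the ``response'' is the estimated transformed outcome rather than a fixed one. Write $T(y,x;q) = \Lambda^{-1} y + (1-\Lambda^{-1})(q + \tfrac{1}{1-\tau}\{y-q\}_+)$ for the transformed-outcome map, so that $\hat\rho_+$ is the sieve least-squares fit of the pseudo-responses $T(Y_i, X_i; \hat Q_+(X_i,1))$ over $i \in \mathcal{I}_2$, while $\varrho_+(\cdot,\cdot;\hat Q_+)$ is the population regression of the same pseudo-response and $\bar\varrho_+ = \varrho_+(\cdot,\cdot;\bar Q_+)$ is the population regression of the limiting pseudo-response $T(Y_i,X_i;\bar Q_+(X_i,1))$. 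The overall decomposition I would use is
\[
\|\hat\rho_+ - \varrho_+(\cdot,\cdot;\hat Q_+)\|_2 \;\le\; \|\hat\rho_+ - \bar\varrho_+\|_2 + \|\bar\varrho_+ - \varrho_+(\cdot,\cdot;\hat Q_+)\|_2,
\]
and similarly with $\rho_+$ in place of $\varrho_+(\cdot,\cdot;\hat Q_+)$ for part (i) (recalling $\rho_+ = \varrho_+(\cdot,\cdot;Q_+)$). The first term is controlled by an ERM argument; the second (a ``stability'' term) is controlled using the Neyman-orthogonality identity \eqref{eq:kappa_neyman_orthogonal} together with Condition \ref{condition:density}.

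First I would bound the ERM term $\|\hat\rho_+ - \bar\varrho_+\|_2$. Conditioning on $\mathcal{I}_1$ (so that $\hat Q_+$ is fixed), $\hat\rho_+$ minimizes an empirical squared-error objective whose pseudo-responses $T(Y_i,X_i;\hat Q_+(X_i,1))$ are uniformly bounded by Condition \ref{condition:erm}(a) and whose ``true'' conditional mean is $\varrho_+(\cdot,\cdot;\hat Q_+)$. A standard localized-complexity / peeling argument (using the bracketing entropy bound in Condition \ref{condition:erm}(b), which gives critical radius $\delta_n \asymp n^{-1/\alpha}$ up to $\log$ factors) yields, for the recentered excess risk relative to the sieve projection of $\varrho_+(\cdot,\cdot;\hat Q_+)$, a bound of order $n^{-1/\alpha}$ in $L^2$. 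I would then combine this with the approximation condition \ref{condition:erm}(d) — which controls $\inf_{g\in\mathcal{G}_n}\|g - \bar\varrho_+\|_2$ — and with $\|\varrho_+(\cdot,\cdot;\hat Q_+) - \bar\varrho_+\|_2$ (the same stability term as above, handled next) to conclude $\|\hat\rho_+ - \bar\varrho_+\|_2 = O_P(n^{-1/\alpha}) + (\text{stability})$. Since $\alpha<4$, the leading $n^{-1/\alpha}$ piece is $o_P(n^{-1/4})$. One should be a little careful that the complexity bound is for the fixed class $\mathcal{G}_n$ and the randomness of the pseudo-response enters only through its (bounded) conditional mean, so the usual subgaussian/bounded-increments machinery applies after conditioning on $\mathcal{I}_1$.

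Next, the stability term: I claim $\|\varrho_+(\cdot,\cdot;q_1) - \varrho_+(\cdot,\cdot;q_2)\|_2 \lesssim \|q_1 - q_2\|_2^2$ uniformly over bounded $q_1,q_2$, i.e.\ the population transformed-outcome regression is insensitive to quantile error to \emph{second} order. This is exactly the payoff of the Neyman-orthogonality identity \eqref{eq:kappa_neyman_orthogonal}: the map $q \mapsto \E[T(Y,X;q) \mid X=x, Z=z]$ has vanishing derivative at $q = Q_\pm(x,z)$, and its second derivative is bounded by $\tfrac{1}{1-\tau}f(Q_\pm(x,z)\mid x,z)$, which is bounded by Condition \ref{condition:density}. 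A second-order Taylor expansion (or direct computation: $\{y-q_1\}_+ - \{y-q_2\}_+$ integrated against the density) then gives the quadratic bound; more carefully, expanding around $\bar Q_+$ rather than $Q_+$ picks up an additional first-order term proportional to $\|\bar Q_+ - Q_+\|$, which is why part (ii) offers the alternative hypothesis $\|\hat Q_+ - \bar Q_+\|_2 = O_P(n^{-1/\alpha})$ (killing the cross term directly) versus $\|\hat Q_+ - Q_+\|_4 = O_P(n^{-1/2\alpha})$ (making the quadratic term $\|\hat Q_+ - Q_+\|_4^2 = O_P(n^{-1/\alpha}) = o_P(n^{-1/4})$, using an $L^4$ norm because the quadratic-remainder bound naturally comes out in $L^2$ of a square, hence $L^4$ of the original). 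For part (i), $\rho_+ = \varrho_+(\cdot,\cdot;Q_+)$ and the same quadratic bound gives $\|\varrho_+(\cdot,\cdot;\hat Q_+) - \rho_+\|_2 \lesssim \|\hat Q_+ - Q_+\|_4^2$; here the hypothesis is stated in $L^2$, so I would either strengthen slightly using boundedness (interpolating $\|\cdot\|_4 \le \|\cdot\|_\infty^{1/2}\|\cdot\|_2^{1/2}$, and $\|\hat Q_+ - Q_+\|_\infty \le 2B$) to turn $\|\hat Q_+ - Q_+\|_2 = O_P(n^{-1/\alpha})$ into $\|\hat Q_+ - Q_+\|_4 = O_P(n^{-1/2\alpha})$, which again squares to $o_P(n^{-1/4})$. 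Assembling: in all cases $\|\hat\rho_+ - \text{(target)}\|_2 \le O_P(n^{-1/\alpha}) + O_P(n^{-1/\alpha}) = o_P(n^{-1/4})$ since $\alpha < 4$.

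The main obstacle I anticipate is the ERM step done \emph{uniformly} in the first-stage estimate — i.e.\ making precise that, after conditioning on $\mathcal{I}_1$, the localized empirical process bound holds with the random (data-dependent but bounded) regression function $\varrho_+(\cdot,\cdot;\hat Q_+)$ as the centering, and then transferring back to $\bar\varrho_+$ via the approximation and stability pieces without circularity. The bracketing-entropy condition with exponent $\alpha \in (2,4)$ is deliberately at the ``slow'' end of Donsker, so one has to verify that the resulting rate $n^{-1/\alpha}$ still beats $n^{-1/4}$ (it does, strictly) and that the $\log(n)^\gamma$ factor is absorbed. A secondary, more bookkeeping-level obstacle is tracking the exact $L^p$ norm in which the quadratic quantile-stability bound lives and matching it to the hypotheses as stated; the boundedness in Condition \ref{condition:erm}(a) is what lets one move freely between $L^2$ and $L^4$ on the quantile errors. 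Everything else — the Taylor expansion using \eqref{eq:kappa_neyman_orthogonal}, the peeling argument, Condition \ref{condition:density} bounding the density — is routine.
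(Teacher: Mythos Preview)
Your proposal is correct and follows essentially the same strategy as the paper: an M-estimator rate argument (the paper invokes van der Vaart--Wellner Theorem 3.4.1 with explicit curvature and modulus-of-continuity lemmas, centering at the sieve population minimizer $g_n = \argmin_{g \in \mathcal{G}_n} L_n(g)$) combined with stability bounds for the map $q \mapsto \varrho_+(\cdot,\cdot;q)$. One simplification you are missing for part (i): since the hypothesis forces $\bar Q_+ = Q_+$, the paper dispatches the stability term $\|\varrho_+(\cdot,\cdot;\hat Q_+) - \rho_+\|_2$ via the plain Lipschitz bound $\le \Lambda\|\hat Q_+ - Q_+\|_2 = O_P(n^{-1/\alpha}) = o_P(n^{-1/4})$, so your detour through interpolation to $L^4$ and the quadratic Neyman-orthogonality bound, while valid, is unnecessary there.
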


The first conclusion of this Theorem is useful for our results on sharp estimation and inference.  It states that if the quantile regression converges at a rate slightly faster than $n^{-1/4}$, then least-squares regression based on a sieve with good approximation properties also converges faster than $n^{-1/4}$.

The second conclusion is useful for valid estimation and inference.  It shows that a fast rate on $|| \hat{\rho}_+ - \varrho_+(\cdot, \cdot; \hat{Q}_+) ||_2$ can be attained as long as the initial quantile regression converges to the right limit at rate slightly faster than $n^{-1/8}$ or to the wrong limit at a rate slightly faster than $n^{-1/4}$.  This latter case is relevant when the data analyst uses an insufficiently expressive parametric model for quantile regression.

\section{Proofs}

In this appendix, we prove all the results from the main text.  We adopt the following notational conventions.  The $K$ folds used in cross fitting at sample size $n$ will be denoted by $\F_{1,n}, \ldots, \F_{K,n}$, although the $n$ subscript will typically be suppressed.  We assume that $| \F_{k,n} | / n \sim 1/K$ so that the folds are approximately equally sized.  We write $\E_n[\cdot]$ as shorthand for the empirical expectation $\sum_{i = 1}^n[\cdot]_i/n$, $\E_{n,k}[\cdot]$ as shorthand for the fold-$k$ average $\sum_{i \in \F_{k,n}} [\cdot]_i / | \F_{k,n}|$, and $|| \cdot ||_{2,k}$ for the empirical $L^2$ norm in fold $k$, $|| \cdot ||_{2,k} = \E_{n,k}[(\cdot)^2]^{1/2}$.  We let $\F_{-k} = \sigma( \{ (X_i, Y_i, Z_i) \, : \, i \leq n, i \not \in \F_k \})$ denote the information contained in observations not in $\F_k$.  For two sequences $A_n, B_n$, we write $A_n \precsim B_n$ if $A_n = O(B_n)$ and $A_n \precsim_P B_n$ if $A_n = O_P(B_n)$.

\subsection{Preparation}

\begin{lemma}
\label{lemma:transformed_outcome_identity}
The equality $q + \Lambda^{\sign(y - q)}(y - q) = \Lambda^{-1} y + (1 - \Lambda^{-1}) \{ q + \tfrac{1}{1 - \tau}(y - q)_+ \}$ holds for any $y$ and $q$.
\end{lemma}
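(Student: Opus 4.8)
The statement to prove is the elementary algebraic identity
\[
q + \Lambda^{\sign(y - q)}(y - q) = \Lambda^{-1} y + (1 - \Lambda^{-1})\Bigl\{ q + \tfrac{1}{1 - \tau}(y - q)_+ \Bigr\},
\]
recalling that $\tau = \Lambda/(\Lambda+1)$, so that $1 - \tau = 1/(\Lambda+1)$ and hence $\tfrac{1}{1-\tau} = \Lambda + 1$, while $1 - \Lambda^{-1} = (\Lambda-1)/\Lambda$.

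The plan is to verify the identity by splitting into the two cases determined by the sign of $y - q$ and checking that both sides agree in each case. First I would handle the case $y \geq q$, where $\sign(y-q) = 1$ and $(y-q)_+ = y - q$. The left-hand side is then $q + \Lambda(y-q)$. On the right-hand side, substituting $\tfrac{1}{1-\tau} = \Lambda+1$ gives $\Lambda^{-1} y + (1 - \Lambda^{-1})\{ q + (\Lambda+1)(y-q)\}$; expanding the inner bracket yields $q + (\Lambda+1)(y-q) = \Lambda(y - q) + y = \Lambda y - \Lambda q + y$, and a short computation collecting the $\Lambda^{-1} y$ term confirms this equals $q + \Lambda(y - q)$. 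Second I would handle the case $y < q$, where $\sign(y-q) = -1$ and $(y-q)_+ = 0$. The left-hand side is $q + \Lambda^{-1}(y - q)$, and the right-hand side reduces to $\Lambda^{-1} y + (1 - \Lambda^{-1}) q$, which is immediately seen to equal $q + \Lambda^{-1}(y-q) = \Lambda^{-1} y + (1 - \Lambda^{-1}) q$.

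Since there is essentially no obstacle here — the proof is a two-line case check — the only thing to be careful about is correctly substituting the value of $\tfrac{1}{1-\tau}$ in terms of $\Lambda$, and noting that in the boundary case $y = q$ both conventions for $\sign$ give the same value (both sides equal $q$), so the identity holds for all real $y, q$ regardless of how ties are broken. I would present the two cases compactly and conclude.
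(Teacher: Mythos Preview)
Your proposal is correct and follows essentially the same approach as the paper, which simply states that the identity ``can be verified by separately considering the cases $y > q$, $y = q$, and $y < q$.'' You carry out the case-check explicitly and correctly, including the helpful observation that the tie case $y = q$ is insensitive to the convention for $\sign(0)$.
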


\begin{proof}
This can be verified by separately considering the cases $y > q$, $y = q$, and $y < q$.
\end{proof}

\begin{lemma}
\label{lemma:cvar_neyman_orthogonal}
Assume Conditions \ref{condition:primitives}, \ref{condition:density}, and that $Y$ is continuous.  Let $M < \infty$ be a uniform bound on the density $f(y \mid x, z)$.  Then for any integrable function $\hat{Q}_+$, we have:
\begin{align*}
\left| \int_{\R} [\hat{Q}_+(x, 1) + \tfrac{1}{1 - \tau} \{ y - \hat{Q}_+(x, 1) \}_+ - Q_+(x, 1) - \tfrac{1}{1 - \tau} \{ y - Q_+(x, 1) \}_+] f(y \mid x, 1) \, \d y \right| \leq \Lambda M | Q_+(x, 1) - \hat{Q}_+(x, 1)|^2 
\end{align*}
\end{lemma}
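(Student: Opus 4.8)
The plan is to reduce the claimed bound to a Taylor-type estimate for the function $g(q) := \int [\, q + \tfrac{1}{1-\tau}\{y-q\}_+\,]\,f(y\mid x,1)\,\d y$, treating $x$ as fixed throughout and writing $Q := Q_+(x,1)$, $\hat Q := \hat Q_+(x,1)$. The left-hand side of the claimed inequality is exactly $|g(\hat Q) - g(Q)|$. First I would compute $g'(q)$: differentiating under the integral sign, $\frac{\d}{\d q}\{y-q\}_+ = -\mathbb{I}\{y > q\}$ for $y\neq q$, so $g'(q) = 1 - \tfrac{1}{1-\tau}\P(Y > q \mid X=x, Z=1) = 1 - \tfrac{1}{1-\tau}(1 - F(q\mid x,1))$. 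Since $\tau = \Lambda/(\Lambda+1)$ we have $\tfrac{1}{1-\tau} = \Lambda+1$, and at $q = Q = Q_\tau(x,1)$ continuity of $F$ gives $F(Q\mid x,1)=\tau$, hence $g'(Q) = 1 - (\Lambda+1)(1-\tau) = 1 - (\Lambda+1)\cdot\tfrac{1}{\Lambda+1} = 0$. This is precisely the conditional Neyman orthogonality in \cref{eq:kappa_neyman_orthogonal}, and it is what kills the first-order term.

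Next I would control the second-order remainder. Writing the exact Taylor remainder, $g(\hat Q) - g(Q) = \int_Q^{\hat Q} g'(q)\,\d q = \int_Q^{\hat Q}\big(g'(q) - g'(Q)\big)\,\d q$ since $g'(Q)=0$. For $q$ between $Q$ and $\hat Q$, $|g'(q) - g'(Q)| = (\Lambda+1)\,|F(q\mid x,1) - F(Q\mid x,1)| = (\Lambda+1)\,\big|\int_Q^q f(y\mid x,1)\,\d y\big| \le (\Lambda+1)\,M\,|q - Q| \le (\Lambda+1)\,M\,|\hat Q - Q|$, using the uniform density bound $M$. Integrating this bound over $q$ from $Q$ to $\hat Q$ yields $|g(\hat Q)-g(Q)| \le (\Lambda+1)\,M\,|\hat Q - Q|^2$. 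Since $\Lambda + 1 \le 2\Lambda$ when $\Lambda \ge 1$ — actually the cleaner route is just to note $\Lambda+1 \le \Lambda$ fails, so I should double-check the constant; $(\Lambda+1)M \le \Lambda M$ is false, but we can simply absorb: for $\Lambda\ge 1$, $\Lambda+1 \le 2\Lambda \le$ anything, and in fact the paper's stated constant $\Lambda M$ works if one is slightly more careful, namely bounding $\tfrac{1}{1-\tau} = \Lambda+1$ but recognizing that only the increment of $F$ over the interval enters. I would present the clean $(\Lambda+1)M$ bound and note it implies the stated $\Lambda M$ bound up to trivial constant considerations, or more likely re-examine whether a sharper argument (e.g. bounding the relevant probability mass more tightly) recovers exactly $\Lambda M$ — in any case this is a routine constant-chasing step, not a conceptual one.

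The only mild subtlety — and the step I would flag as requiring care — is the differentiation under the integral sign and the use of the fundamental theorem of calculus: this needs $F(\cdot\mid x,1)$ to be absolutely continuous with a bounded density, which is exactly Condition \ref{condition:density} (continuous case), so the measure-zero set $\{y = q\}$ is harmless and $g$ is continuously differentiable with $g'$ Lipschitz with constant $(\Lambda+1)M$. No integrability issue arises for $\hat Q_+$ beyond what is assumed, since the remainder identity only involves $g$ evaluated at the two points $Q$ and $\hat Q$ and the integral of $g'$ between them. Everything else is bookkeeping.
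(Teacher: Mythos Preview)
Your proof is correct and follows essentially the same approach as the paper: define $g(q)$, show $g'(Q_+)=0$, and bound the second-order remainder using the density bound $M$. The only slip is in the constant: you bound $|g'(q)-g'(Q)| \le (\Lambda+1)M|q-Q|$ and then unnecessarily weaken $|q-Q|$ to $|\hat Q-Q|$ before integrating, which costs you a factor of $\tfrac{1}{2}$; integrating $(\Lambda+1)M|q-Q|$ directly gives $\tfrac{1}{2}(\Lambda+1)M|\hat Q-Q|^2$, and since $\Lambda+1\le 2\Lambda$ for $\Lambda\ge 1$ this yields exactly the stated $\Lambda M$ constant.
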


\begin{proof}
Define the function $g$ by $g(q) = \int q + \tfrac{1}{1 - \tau} \{ y - q \}_+ f(y \mid x, 1) \, \d y$.  Since the integrand in the definition of $g$ is a convex function of $q$, we may differentiate through the integral sign and conclude:
\begin{align*}
g'(q) = \int_{\R} 1 - \frac{1}{1 - \tau} \mathbb{I} \{ y > q \} f(y \mid x, 1) \, \d y \quad \text{and} \quad 
g''(q) = \frac{f(q \mid x, 1)}{1 - \tau}
\end{align*}
The first derivative satisfies $g'(Q_+(x, 1)) = 0$ and the second derivative is uniformly bounded by $(\Lambda + 1) M \leq 2 \Lambda M$.  Hence, $| g(Q_+(x, 1)) - g(\hat{Q}_+(x, 1))| \leq \Lambda M | Q_+(x, 1) - \hat{Q}_+(x, 1)|^2$ follows by controlling the second-order remainder in a first-order Taylor expansion of $g$ around $Q_+(x, 1)$.
\end{proof}

\begin{lemma}[Nuisance parameter convergence results]
\label{lemma:nuisance_facts}
Under Conditions \ref{condition:primitives} and \ref{condition:first_stage}, the following hold for every $k \in \{ 1, \ldots, K \}$:
\begin{enumerate}[label = (\arabic*), itemsep=-1ex]
    \item $|| \hat{Q}_{+}^{(-k)}(x, 1) ||_{2,k} = O_P(1)$ and $|| \hat{\rho}_{+}^{(-k)}(x, 1) ||_{2,k} = O_P(1)$. \label{item:Q_kappa_stochastically_bounded}
    \item If $|| \hat{Q}_+^{(-k)} - \bar{Q}_{+} ||_2 = o_P(1)$, then $|| \hat{Q}_{+}^{(-k)}(x, 1) - \bar{Q}_+(x, 1) ||_{2,k} = o_P(1)$ as well. \label{item:Q1_convergence}
    \item If $|| \hat{\rho}_+^{(-k)} - \varrho_+(\cdot, \cdot; \hat{Q}_+^{(-k)}) ||_2 = o_P(1)$, then $|| \hat{\rho}_+^{(-k)}(x, 1) - \varrho_+(x, 1; \hat{Q}_+^{(-k)}) ||_{2,k} = o_P(1)$ as well. \label{item:kappa1_convergence}
    \item If $|| \hat{e}^{(-k)} - e ||_2 = o_P(1)$, then $|| 1/ \hat{e}^{(-k)} - 1/e ||_{2,k} = o_P(1)$ as well.
    \item If $Y$ is binary, then $|| \hat{\rho}_+ - \rho_+ ||_2 \leq \Lambda || \hat{\mu}^{(-k)} - \mu ||_2$. \label{item:mu_to_kappa}
    \item Let $\bar{\varrho}_+$ be as in Condition \ref{condition:erm}.  Then $|| \varrho_+(x, 1; \hat{Q}_+^{(-k)}) - \bar{\varrho}_+(x, 1) ||_2 \leq \Lambda || \hat{Q}_+(x, 1) - \bar{Q}_+(x, 1) ||_2$. \label{item:kappa_lipschitz}
    \item If $|| f_n ||_{q} = O_P(1)$ for some $q > 2$ and $|| \hat{e}^{(-k)} - e ||_2 = o_P(1)$, then $|| f_n/\hat{e}^{(-k)} - f_n/e ||_2 = o_P(1)$ as well. \label{item:f_over_e}
    \item If $Y$ is binary and $P( \mu(X, Z) = 1 - \tau) = 0$, then $|| \hat{\mu}^{(-k)} - \mu ||_2 = o_P(1)$ implies $|| \hat{Q}_+^{(-k)} - \hat{Q}_+ ||_2 = o_P(1)$ as well. \label{item:mu_to_Q}
\end{enumerate}
\end{lemma}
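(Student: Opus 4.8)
The plan is to reduce all eight claims to a small number of elementary moves carried out once. The workhorse is the population-to-empirical transfer made possible by cross-fitting: conditionally on $\F_{-k}$, each coordinate of $\hat\eta^{(-k)}$ is a fixed deterministic function and the observations in $\F_k$ are i.i.d.\ from $P$ and independent of $\F_{-k}$, so $\E[\,\|h\|_{2,k}^2\mid\F_{-k}\,]=\|h\|_2^2$ for every $\F_{-k}$-measurable $h$. Combined with the standard fact that a nonnegative sequence $A_n$ with $\E[A_n\mid\F_{-k}]=O_P(1)$ (resp.\ $o_P(1)$) obeys $A_n=O_P(1)$ (resp.\ $o_P(1)$) --- which follows from conditional Markov and bounded convergence --- this converts any fold-$k$ empirical-norm statement into the corresponding population $L^2$ statement. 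Claims \ref{item:Q1_convergence} and \ref{item:kappa1_convergence} are exactly this, applied to $h=\hat Q_+^{(-k)}(\cdot,1)-\bar Q_+(\cdot,1)$ and $h=\hat\rho_+^{(-k)}(\cdot,1)-\varrho_+(\cdot,1;\hat Q_+^{(-k)})$; here $\varrho_+(\cdot,\cdot;\hat Q_+^{(-k)})$ is $\F_{-k}$-measurable because it is a deterministic functional of the fixed function $\hat Q_+^{(-k)}$ and the population law $F$, and passing from the full norm over $(x,z)$ to the $z=1$ slice costs only a factor $\epsilon^{-1/2}$ since $e(X)>\epsilon$ almost surely by \Cref{condition:primitives}.

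For claim \ref{item:Q_kappa_stochastically_bounded} the extra step is that \Cref{condition:first_stage} bounds $\|\hat Q_+\|_q$ and $\|\hat\rho_+\|_q$ for some $q>2$ rather than the $L^2$ norms of the $z=1$ slices: $\E[|\hat Q_+^{(-k)}(X,1)|^q]\le\epsilon^{-1}\|\hat Q_+^{(-k)}\|_q^q$ by the propensity bound, and Jensen (the power-mean inequality, valid since $q\ge2$) upgrades this to $\|\hat Q_+^{(-k)}(\cdot,1)\|_2=O_P(1)$, and likewise for $\hat\rho_+^{(-k)}$, after which the transfer device applies. For the fourth claim (on $1/\hat e^{(-k)}$), \Cref{condition:first_stage} also yields an event $E_n$ with $\P(E_n)\to1$ on which $\epsilon<\inf_x\hat e^{(-k)}(x)\le\sup_x\hat e^{(-k)}(x)<1-\epsilon$; on $E_n$, $|1/\hat e^{(-k)}-1/e|\le\epsilon^{-2}|\hat e^{(-k)}-e|$ pointwise, so $\|1/\hat e^{(-k)}-1/e\|_{2,k}\le\epsilon^{-2}\|\hat e^{(-k)}-e\|_{2,k}$, and the transfer device applied to $\|\hat e^{(-k)}-e\|_{2,k}$ (conditional second moment $\|\hat e^{(-k)}-e\|_2^2=o_P(1)$) together with $\P(E_n)\to1$ finishes it.

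The Lipschitz claims \ref{item:mu_to_kappa} and \ref{item:kappa_lipschitz} are pointwise. For \ref{item:mu_to_kappa}, in the binary case \Cref{eq:binary_formulas} gives $\rho_+(x,z)=\min\{1-1/\Lambda+\mu(x,z)/\Lambda,\ \mu(x,z)\Lambda\}$ and the same with $\hat\mu^{(-k)}$ in place of $\mu$; a minimum of the two affine maps with slopes $1/\Lambda$ and $\Lambda$ is $\Lambda$-Lipschitz since $\Lambda\ge1$, so $|\hat\rho_+^{(-k)}-\rho_+|\le\Lambda|\hat\mu^{(-k)}-\mu|$ pointwise and hence in $L^2$. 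For \ref{item:kappa_lipschitz}, \Cref{lemma:transformed_outcome_identity} rewrites $\varrho_+(x,1;q)=\int(q+\Lambda^{\sign(y-q)}(y-q))\,\d F(y\mid x,1)$; for each fixed $y$ this integrand is continuous and piecewise linear in $q$ with slopes $-(\Lambda-1)$ for $q<y$ and $1-\Lambda^{-1}$ for $q>y$, hence $(\Lambda-1)$-Lipschitz, and integrating against $F(\cdot\mid x,1)$ preserves Lipschitzness; evaluating at $q=\hat Q_+^{(-k)}(x,1)$ and $q=\bar Q_+(x,1)$ (recall $\bar\varrho_+=\varrho_+(\cdot,\cdot;\bar Q_+)$) and taking the $L^2$ norm over $x$ gives the bound with constant $\Lambda-1\le\Lambda$. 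I would flag here that one must work from the representation in \Cref{lemma:transformed_outcome_identity}: a naive Lipschitz estimate through the $\{y-q\}_+$ form produces a constant exceeding $\Lambda$ once $\Lambda>2$.

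Finally, claim \ref{item:f_over_e} is moment interpolation and claim \ref{item:mu_to_Q} is anti-concentration. For \ref{item:f_over_e}, on $E_n$ we have $|f_n/\hat e^{(-k)}-f_n/e|\le\epsilon^{-2}|f_n|\,|\hat e^{(-k)}-e|$, and Hölder with exponents $q/2$ and $q/(q-2)$ gives $\|f_n(\hat e^{(-k)}-e)\|_2^2\le\|f_n\|_q^2(\E|\hat e^{(-k)}-e|^{2q/(q-2)})^{(q-2)/q}\le\|f_n\|_q^2\,\|\hat e^{(-k)}-e\|_2^{2(q-2)/q}$, the last step because $|\hat e^{(-k)}-e|\le1$ on $E_n$ and $2q/(q-2)\ge2$; this is $O_P(1)\cdot o_P(1)$, so with $\P(E_n)\to1$ the claim follows. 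For \ref{item:mu_to_Q}, $\hat Q_+^{(-k)}=\mathbb{I}\{\hat\mu^{(-k)}>1-\tau\}$ differs from $Q_+=\mathbb{I}\{\mu>1-\tau\}$ only where $\hat\mu^{(-k)}$ and $\mu$ straddle $1-\tau$, which forces $|\mu(X,Z)-(1-\tau)|\le|\hat\mu^{(-k)}(X,Z)-\mu(X,Z)|$; hence $\|\hat Q_+^{(-k)}-Q_+\|_2^2\le\P(|\mu(X,Z)-(1-\tau)|\le\delta)+\delta^{-2}\|\hat\mu^{(-k)}-\mu\|_2^2$ for every $\delta>0$ by Markov, and letting $n\to\infty$ and then $\delta\to0$ kills both terms using $\|\hat\mu^{(-k)}-\mu\|_2=o_P(1)$ and $\P(\mu(X,Z)=1-\tau)=0$. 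The main obstacle is not any single estimate but the bookkeeping: staying consistent about the conditioning on $\F_{-k}$ in the transfer device, and handling throughout the null event on which $\hat e^{(-k)}$ escapes $(\epsilon,1-\epsilon)$ --- which is precisely why \Cref{condition:first_stage} is stated as a high-probability event rather than a sure bound.
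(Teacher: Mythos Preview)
Your proposal is correct and follows essentially the same approach as the paper: the cross-fitting ``transfer device'' (conditional Markov with $\E[\|h\|_{2,k}^2\mid\F_{-k}]=\|h\|_2^2$) handles (1)--(4), pointwise Lipschitz estimates give (5)--(6), H\"older handles (7), and an anti-concentration/straddling argument handles (8). Your treatment is in fact more explicit than the paper's about the conditioning and the high-probability event $E_n$.

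One small remark on item \ref{item:kappa_lipschitz}: your flag that ``a naive Lipschitz estimate through the $\{y-q\}_+$ form produces a constant exceeding $\Lambda$'' is true only for the crude triangle-inequality bound $1+\tfrac{1}{1-\tau}$. The paper works directly with $q\mapsto q+\tfrac{1}{1-\tau}\{y-q\}_+$, whose exact Lipschitz constant is $\max(1,\Lambda)=\Lambda$ (slopes $1$ and $-\Lambda$), and then multiplies by $(1-\Lambda^{-1})$ to obtain the same $(\Lambda-1)$ you get via \Cref{lemma:transformed_outcome_identity}. So both routes yield the identical constant; neither detour is strictly necessary.
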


\begin{proof}
We prove these one at a time:
\begin{enumerate}[label = (\arabic*), itemsep=-1ex]
    \item Condition \ref{condition:primitives} requires $e(X) \geq \epsilon > 0$, which allows us to write: 
    \begin{align*}
    || \hat{Q}_+^{(-k)}(x, 1) ||_{2,k}^2 &= \E[ \{\hat{Q}_+^{(-k)}(X, 1) \}^2 \mid \F_{-k} ]\\
    &= \E[ \{ \hat{Q}_+^{(-k)}(X, Z) \}^2 Z/ e(X) \mid \F_{-k}]\\
    &\leq \E[ \{ \hat{Q}_+^{(-k)}(X, Z) \}^2 \mid \F_{-k}] / \epsilon\\
    &= || \hat{Q}_+^{(-k)} ||_2 / \epsilon.
    \end{align*}
    Condition \ref{condition:first_stage} implies that this upper bound is $O_P(1)$.  The result for $\hat{Q}_+^{(-k)}$ then follows by Markov's inequality.  The result for $\hat{\rho}_+^{(-k)}$ is analogous.
    
    \item By the same argument from above, $|| \hat{Q}_+^{(-k)}(x, 1) - \bar{Q}_+(x, 1) ||_2 \leq || \hat{Q}_+^{(-k)} - \bar{Q}_+ ||_2 / \epsilon = o_P(1)$.  Then, the conclusion follows by Markov's inequality.
    
    \item The argument is the same as above.
    
    \item Condition \ref{condition:first_stage} requires that $\P( \inf_x \hat{e}^{(-k)}(x) \geq \epsilon) \rightarrow 1$.  Thus, with probability approaching one we have $|| 1/\hat{e}^{(-k)} - 1/e ||_{2,k} \leq || \hat{e}^{(-k)} - e ||_{2,k} / \epsilon^2$.  Markov's inequality implies this upper bound tends to zero.   
    
    \item Write $\hat{\rho}_+(x, z) = g( \hat{\mu}(x, z))$ where $g(u) = \max \{ u \Lambda, 1 - 1/\Lambda + u/\Lambda \}$.  Since $u \mapsto g(u)$ is $\Lambda$-Lipschitz and $\rho_+(x, z) = g(\mu(x, z))$, the conclusion follows.
    \item  The map $q \mapsto q + \tfrac{1}{1 - \tau}(y - q)_+$ is $\Lambda$-Lipschitz, so \Cref{lemma:transformed_outcome_identity} allows us to write:
    \begin{align*}
    &|\varrho_+(x, 1; \hat{Q}_+^{(-k)}) - \bar{\varrho}_+(x, 1)|\\
    &= \left| (1 - \Lambda^{-1}) \int [\hat{Q}_+^{(-k)}(x, 1) + \tfrac{1}{1 - \tau} \{ y - \hat{Q}_+^{(-k)}(x, 1) \}_+] - [ \bar{Q}_{+}(x, 1) + \tfrac{1}{1 - \tau} \{ y - \bar{Q}_+(x, 1) \}_+] \, \d F(y \mid x, 1) \right|\\
    &\leq (1 - \Lambda^{-1}) \int \Lambda | \hat{Q}_+^{(-k)}(x, 1) - \bar{Q}_+(x, 1)| \, \d F(y \mid x, 1)\\
    &= (\Lambda - 1) | \hat{Q}_+^{(-k)}(x, 1) - \bar{Q}_+(x, 1)|.
    \end{align*}
    Squaring both sides and integrating gives the claimed result.
    
    \item This follows from H\"older's inequality and our overlap assumptions on $\hat{e}^{(-k)}$.
    
    \item Let $\gamma > 0$ be arbitrary.  Continuity of measure implies there exists $\delta > 0$ so small that $\P( \mu(X, Z) \in [1 - \tau \pm \delta]) < \gamma$.  Since $|| \hat{\mu}^{(-k)} - \mu ||_2 = o_P(1)$, Markov's inequality implies $\P( | \hat{\mu}^{(-k)}(X, Z) - \mu(X, Z)| > \gamma/2 \mid \F_{-k}) = o_P(1)$.  Thus, we have:
    \begin{align*}
    || \hat{Q}_+^{(-k)} - \hat{Q}_+ ||_2^2 &= \E[ | \mathbb{I} \{ \hat{\mu}^{(-k)}(X, Z) > 1 - \tau \} - \mathbb{I} \{ \mu(X, Z) > 1 - \tau \}| \mid \F_{-k}]\\
    &\leq \E[ \mathbb{I} \{ \mu(X, Z) \in [1 - \tau \pm \gamma] \} \mid \F_{-k}] + \E[ \mathbb{I} \{ | \hat{\mu}^{(-k)}(X, Z) - \mu(X, Z)| > \gamma/2 \} \mid \F_{-k}]\\
    &\leq \gamma + o_P(1)
    \end{align*}
    Since $\gamma$ is arbitrary, this proves $|| \hat{Q}_+^{(-k)} - \hat{Q}_+ ||_2 \xrightarrow{P} 0$.
\end{enumerate}
\end{proof}

\begin{lemma}[Convergence of the estimated influence function]
\label{lemma:influence_function_convergence}
Assume Conditions \ref{condition:primitives} and \ref{condition:first_stage}, and also that $|| \hat{e}^{(-k)} - e ||_2, || \hat{\rho}_{\pm}^{(-k)} - \varrho_{\pm}(\cdot, \cdot; \hat{Q}_{\pm}^{(-k)}) ||_2, || \hat{Q}_{\pm}^{(-k)} - \bar{Q}_{\pm} ||_2$ converge to zero in probability.  Then $|| \bar{Q}_{\pm} ||_2 < \infty$ and the following convergence holds:
\begin{align*}
|| \phi_{\ATE}^+(X, Y, Z; \hat{\eta}^{(-k)}) - \phi_{\ATE}^+(X, Y, Z; \bar{\eta} ) ||_{2,k} \xrightarrow{P} 0.
\end{align*}
Here, $\bar{\eta} = (e, \bar{Q}_+, \bar{Q}_-, \bar{\varrho}_+, \bar{\varrho}_-)$ and $\bar{\varrho}_{\pm} = \varrho_{\pm}(\cdot, \cdot; \bar{Q}_{\pm})$.
\end{lemma}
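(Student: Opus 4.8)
The plan is to record two algebraic facts, then use the cross-fitting structure to reduce the claim to an $L^2(P)$ convergence statement for a fixed draw of the first-stage nuisances, and finally control that statement term by term with H\"older's inequality, the overlap condition, and a Lipschitz bound. The algebraic facts are: (a) by \Cref{lemma:transformed_outcome_identity} and its analogue with $\{\cdot\}_-$ in place of $\{\cdot\}_+$, the map $q \mapsto T_{\pm}(y,q) := q + \Lambda^{\pm\sign(y-q)}(y-q)$ equals $\Lambda^{-1}y + (1-\Lambda^{-1})(q + \tfrac{1}{1-\tau}\{y-q\}_{\pm})$, hence is $\Lambda$-Lipschitz in $q$ uniformly over $y$, and satisfies $\E[T_{\pm}(Y,\hat Q_{\pm}(x,z)) \mid X = x, Z = z] = \varrho_{\pm}(x,z;\hat Q_{\pm})$; and (b) $\|\bar Q_{\pm}\|_q < \infty$, which follows from $\|\hat Q_{\pm}^{(-k)}\|_q = O_P(1)$ in \Cref{condition:first_stage} together with $\|\hat Q_{\pm}^{(-k)} - \bar Q_{\pm}\|_2 = o_P(1)$ by a routine subsequence-and-Fatou argument. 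Fact (b) immediately gives $\|\bar Q_{\pm}\|_2 < \infty$ (the first assertion of the lemma) and, via conditional Jensen, $\|\bar\varrho_{\pm}\|_q < \infty$ and $\|T_{\pm}(Y,\bar Q_{\pm}(X,Z)) - \bar\varrho_{\pm}(X,Z)\|_q < \infty$.

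Next I would make the cross-fitting reduction. Conditionally on $\F_{-k}$ the function $\hat\eta^{(-k)}$ is deterministic, so $\E\big[\|\phi_{\ATE}^+(\cdot;\hat\eta^{(-k)}) - \phi_{\ATE}^+(\cdot;\bar\eta)\|_{2,k}^2 \,\big|\, \F_{-k}\big] = \|\phi_{\ATE}^+(\cdot;\hat\eta^{(-k)}) - \phi_{\ATE}^+(\cdot;\bar\eta)\|_{L^2(P)}^2$; by the conditional Markov inequality and bounded convergence it then suffices to prove this $L^2(P)$ distance is $o_P(1)$. Writing $\phi_{\ATE}^+ = \phi_1^+ - \phi_0^-$, the $ZY$ and $(1-Z)Y$ terms do not involve the nuisances and cancel, and the remaining $\phi_1^+$ and $\phi_0^-$ differences are mirror images of each other ($Z \leftrightarrow 1-Z$, $e \leftrightarrow 1-e$, $Q_+(\cdot,1) \leftrightarrow Q_-(\cdot,0)$, $T_+ \leftrightarrow T_-$, $\rho_+ \leftrightarrow \rho_-$), so I would only analyze the $\phi_1^+$ difference.

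For that difference I would split off the term $(1-Z)(\hat\rho_+^{(-k)}(X,1) - \bar\varrho_+(X,1))$ and write the rest as $(\hat W - W)\big(T_+(Y,\bar Q_+(X,1)) - \bar\varrho_+(X,1)\big)$ plus $\hat W$ times a bracket containing $T_+(Y,\hat Q_+^{(-k)}(X,1)) - T_+(Y,\bar Q_+(X,1))$, $\hat\rho_+^{(-k)}(X,1) - \varrho_+(X,1;\hat Q_+^{(-k)})$, and $\varrho_+(X,1;\hat Q_+^{(-k)}) - \bar\varrho_+(X,1)$, where $\hat W = (1-\hat e^{(-k)})Z/\hat e^{(-k)}$ and $W = (1-e)Z/e$. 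On the overlap event (probability $\to 1$), $\hat W, W$ are bounded by $(1-\epsilon)/\epsilon$ and $|\hat W - W| \le \epsilon^{-2}|\hat e^{(-k)}-e|$, so $\|\hat W - W\|_2 = o_P(1)$. Then the first summand is $o_P(1)$ in $L^2(P)$ by H\"older's inequality and interpolation, using $\|T_+(Y,\bar Q_+)-\bar\varrho_+\|_q < \infty$ from fact (b); $\|T_+(Y,\hat Q_+^{(-k)}(X,1)) - T_+(Y,\bar Q_+(X,1))\|_2 = o_P(1)$ by $\Lambda$-Lipschitzness and $\|\hat Q_+^{(-k)}-\bar Q_+\|_2 = o_P(1)$; $\|\varrho_+(\cdot,1;\hat Q_+^{(-k)}) - \bar\varrho_+(\cdot,1)\|_2 = o_P(1)$ by item \ref{item:kappa_lipschitz} of \Cref{lemma:nuisance_facts}; and $\|\hat\rho_+^{(-k)}(\cdot,1) - \varrho_+(\cdot,1;\hat Q_+^{(-k)})\|_2 = o_P(1)$ by the hypothesis on $\hat\rho_{\pm}^{(-k)}$; throughout, a $Z=1$-restricted $L^2$ norm is converted to the full $L^2(P)$ norm using overlap exactly as in the proof of \Cref{lemma:nuisance_facts}, and the separated term is controlled by the triangle inequality from the last two bounds. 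Multiplying the bounded factor $\hat W$ through and summing gives that the $\phi_1^+$ difference is $o_P(1)$ in $L^2(P)$; the $\phi_0^-$ difference is identical mutatis mutandis, and the reduction above finishes the proof.

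The single genuine idea is already supplied by \Cref{lemma:transformed_outcome_identity}: rewriting the discontinuous factor $\Lambda^{\pm\sign(y-q)}(y-q)$ in smoothed form makes the influence function $\Lambda$-Lipschitz in the quantile argument, so that mere $L^2$-convergence $\hat Q_{\pm}^{(-k)} \to \bar Q_{\pm}$ suffices and one never needs the much stronger (and false) statement that the thresholds $\sign(Y - \hat Q_{\pm})$ converge. Consequently I expect the remaining work to be bookkeeping rather than conceptual; the two points requiring some care are the verification of fact (b) ($\bar Q_{\pm} \in L^q$) from an $O_P(1)$ bound plus $L^2$ convergence, and the conditioning argument that upgrades $L^2(P)$-convergence of the randomly-indexed difference to convergence in probability of its empirical fold-$k$ norm.
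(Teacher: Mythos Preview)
Your proposal is correct and follows essentially the same route as the paper's proof: the paper also uses the Lipschitz representation from \Cref{lemma:transformed_outcome_identity} and a nearly identical four-term decomposition of $\phi_1^+(\cdot;\hat\eta^{(-k)}) - \phi_1^+(\cdot;\bar\eta)$, bounding the $(\hat W - W)$ term via \Cref{lemma:nuisance_facts}.\ref{item:f_over_e} (your H\"older-plus-interpolation step in disguise) and the remaining terms by the Lipschitz bound, overlap, and the assumed nuisance rates. Your fact~(b) establishing $\|\bar Q_\pm\|_q < \infty$ is in fact slightly more careful than the paper, which only writes out $\|\bar Q_\pm\|_2 < \infty$ but tacitly needs $L^q$ integrability when invoking \Cref{lemma:nuisance_facts}.\ref{item:f_over_e}.
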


\begin{proof}
By the triangle inequality, $|| \bar{Q}_{\pm} ||_2 \leq || \hat{Q}_{\pm}^{(-k)} - \bar{Q}_{\pm} ||_2 + || \hat{Q}_{\pm}^{(-k)} ||_2 \leq o_P(1) + O_P(1)$.  Since the left-hand side is deterministic and does not change with $n$, it must be finite.  In addition,  \Cref{lemma:nuisance_facts}.\ref{item:kappa_lipschitz} shows that $|| \hat{\rho}_+^{(-k)} - \varrho_+(\cdot, \cdot; \hat{Q}_+^{(-k)}) ||_2 = o_P(1)$ under the given assumptions.  Therefore, we may write:
\begin{align*}
&|| \phi_1^+(X, Y, Z;\hat{\eta}^{(-k)}) - \phi_1^+(X, Y, Z; \bar{\eta}) ||_{2,k}\\
&\leq \underbrace{|| (1 - Z) \hat{\rho}_+^{(-k)}(x, 1) - (1 - z) \bar{\varrho}_+(x, 1) ||_{2,k}}_{\precsim_P || \hat{\rho}_+^{(-k)}(x, 1) - \bar{\varrho}_+(x, 1) ||_2 = o_P(1) \text{ by Markov + \Cref{lemma:nuisance_facts}.\ref{item:kappa1_convergence}}}\\
&+ \underbrace{\left| \left| \frac{1 - \hat{e}^{(-k)}(x)}{\hat{e}^{(-k)}(x)} z (1 - \Lambda^{-1}) ( \hat{Q}_+^{(-k)}(x, 1) + \tfrac{1}{1 - \tau} \{ y - \hat{Q}_+^{(-k)}(x, 1) \}_+ - \bar{Q}_+(x, 1) - \tfrac{1}{1 - \tau} \{ y - \bar{Q}_+(x, 1) \}_+ \right| \right|_{2,k}}_{\precsim_P (\Lambda/\epsilon) || \hat{Q}_+^{(-k)}(x, 1) - \bar{Q}_+(x, 1) ||_2 = o_P(1) \text{ since $q \mapsto q + \tfrac{1}{1 - \tau} \{ y - q \}_+$ is $\Lambda$-Lipschitz}}\\
& + \underbrace{\left| \left| \frac{1 - \hat{e}^{(-k)}(x)}{\hat{e}^{(-k)}(x)} z [ \hat{\rho}_+^{(-k)}(x, 1) - \bar{\varrho}_+(x, 1)] \right| \right|_{2,k}}_{\precsim_P || \hat{\rho}_+^{(-k)}(x, 1) - \varrho_+(\cdot, \cdot; \hat{Q}_+^{(-k)}) ||_2/\epsilon = o_P(1) \text{ by Markov + \Cref{lemma:nuisance_facts}.\ref{item:kappa1_convergence}}}\\
&+ \underbrace{\left| \left| \left( \frac{1 - \hat{e}^{(-k)}(x)}{\hat{e}^{(-k)}(x)} - \frac{1 - e(x)}{e(x)} \right) z [ \Lambda^{1} y + (1 - \Lambda^{-1}) ( \bar{Q}_+(x, 1) + \tfrac{1}{1 - \tau} \{ y - \bar{Q}_+(x, 1) \}_+) - \bar{\varrho}_+(x, 1)] \right| \right|_{2,k}}_{\precsim_P || z [ \Lambda^{-1} y + (1 - \Lambda^{-1}) \bar{Q}_+(x, 1) + \frac{1}{1 - \tau} \{ y - \bar{Q}_+(x, 1) \}_+ - \bar{\varrho}_+(x, 1)] (1/\hat{e}^{(-k)}(x) - 1/e(x)) ||_2 = o_P(1) \text{ by Markov + \Cref{lemma:nuisance_facts}.\ref{item:f_over_e}}}\\
&\xrightarrow{P} 0.
\end{align*}
Symmetric arguments show that $|| \phi_1^-(X, Y, Z; \hat{\eta}^{(-k)} - \phi_1^-(X, Y, Z; \bar{\eta}) ||_{2,k} \xrightarrow{P} 0$ as well.  Combining these two results with the triangle inequality gives the conclusion.
\end{proof}

\subsection{Proof of Proposition \ref{prop:ipw}}

\begin{proof}
This is a restatement of \citet[Proposition 2]{dornGuo2021sharp}.
\end{proof}

\subsection{Proof of Proposition \ref{prop:ipw_valid}}

\begin{proof}
By \Cref{eq:robust_ipw_formula}, it suffices to prove the following inequality:
\begin{align*}
\E \left[ \frac{\hat{Q}_{\tau}(X, 1)}{e(X)} + \frac{Z\{ Y - \hat{Q}_{\tau}(X, 1) \}}{e_+(X, Y)} \right] \leq \E \left[ \frac{\hat{Q}_{\tau}(X, 1)}{e(X)} + \{ Y - \hat{Q}_{\tau}(X, 1) \} Z \left( 1 + \Lambda^{\sign \{ Y - \hat{Q}_{\tau}(X, 1) \}} \frac{1 - e(X)}{e(X)} \right) \right].
\end{align*}
In fact, the inequality holds even without taking the expectation on both sides.  By  \citet[Proposition 2]{dornGuo2021sharp}, the function $e_+$ may be chosen to satisfy:
\begin{align*}
1 + \Lambda^{-1} \frac{1 - e(x)}{e(x)} \leq \frac{1}{e_+(x, y)} \leq 1 + \Lambda \frac{1 - e(x)}{e(x)}
\end{align*}
for every $(x, y)$.  Thus, whenever $y \geq \hat{Q}_{\tau}(x, 1)$, we have:
\begin{align*}
\frac{z \{ y - \hat{Q}_{\tau}(x, 1) \}}{e_+(x, y)} &\leq z \{ y - \hat{Q}_{\tau}(x, 1) \} \left( 1 + \Lambda \frac{1 - e(x)}{e(x)} \right)\\
&= z \{ y - \hat{Q}_{\tau}(x, 1) \} \left( 1 + \Lambda^{\sign \{ y - \hat{Q}_{\tau}(x, 1) \}} \frac{1 - e(x)}{e(x)} \right).
\end{align*}
On the other hand, whenever $y < \hat{Q}_{\tau}(x, 1)$, we instead have:
\begin{align*}
\frac{z \{ y - \hat{Q}_{\tau}(x, 1) \}}{e_+(x, y)} &\leq z \{ y - \hat{Q}_{\tau}(x, 1) \} \left( 1 + \Lambda^{-1} \frac{1 - e(x)}{e(x)} \right)\\
&= z \{ y - \hat{Q}_{\tau}(x, 1) \} \left( 1 + \Lambda^{\sign \{ y - \hat{Q}_{\tau}(x, 1) \}} \frac{1 - e(x)}{e(x)} \right).
\end{align*}
Thus, $z \{ y - \hat{Q}_{\tau}(x, 1) \}/e_+(x, y) \leq z \{ y - \hat{Q}_{\tau}(x, 1) \} (1 + \Lambda^{\sign \{ y - \hat{Q}_{\tau}(x, 1) \}} \frac{1 - e(x)}{e(x)})$ holds for every $(x, y)$ and the conclusion follows by taking expectations on both sides.  Evidently, all of these inequalities are equalities when $\hat{Q}_{\tau} = Q_{\tau}$.
\end{proof}

\subsection{Proof of Proposition \ref{prop:dm}}

\begin{proof}
The derivation in terms of robust optimization was already given in the main text.  For completeness, we present here an alternative proof based on \Cref{prop:ipw_valid}. \Cref{lemma:transformed_outcome_identity} implies $\E[ Q_{\tau}(X, 1) + \{ Y - Q_{\tau}(X, 1) \} \Lambda^{\sign \{ Y - Q_{\tau}(X, 1) \}} \mid X, Z = 1] = \rho_+(X, 1)$, which allows us to write:
\begin{align*}
\psi_1^+ &= \E \left[ \frac{Q_{\tau}(X, 1) Z}{e(X)} + Z \{ Y - Q_{\tau}(X, 1) \} \left( 1 + \Lambda^{\sign \{ Y - Q_{\tau}(X, 1) \}} \frac{1 - e(X)}{e(X)} \right) \right]\\
&= \E[ZY] + \E[ (1 - Z) Q_{\tau}(X, 1)] + \E \left[ \frac{1 - e(X)}{e(X)} Z \{ Y - Q_{\tau}(X, 1) \} \Lambda^{\sign \{ Y - Q_{\tau}(X, 1) \}} \right]\\
&= \E[ ZY] + \E[(1 - Z) Q_{\tau}(X, 1)] + \E [ \{ 1 - e(X) \} \E[ \{ Y - Q_{\tau}(X, 1) \} \Lambda^{\sign \{ Y - Q_{\tau}(X, 1) \}} \mid X, Z = 1]]\\
&= \E[ZY] + \E[ \{ 1 - e(X) \} \E[ Q_{\tau}(X, 1) + \{ Y - Q_{\tau}(X, 1) \} \Lambda^{\sign \{ Y - Q_{\tau}(X, 1) \}} \mid X, Z = 1]]\\
&= \E[ZY] + \E[ \{ 1 - e(X) \} \rho_+(X, 1)]\\
&= \E[ZY + (1 - Z) \rho_+(X, 1)].
\end{align*}
\end{proof}

\subsection{Proof of Proposition \ref{prop:dm_valid}}

\begin{proof}
\cite[Theorem 1]{rockafellar2000} shows that $Q_{\tau}(x, 1)$ is a minimizer of the map $q \mapsto \int q + \tfrac{1}{1 - \tau} \{ y - q \}_+ \, \d F(y \mid x, 1)$.  Thus, we have:
\begin{align*}
\varrho_+(x, 1; \hat{Q}_+) &= \Lambda^{-1} \mu(x, 1) + (1 - \Lambda^{-1}) \int \hat{Q}_{\tau}(x, 1) + \tfrac{1}{1 - \tau} \{ y - \hat{Q}_{\tau}(x, 1) \}_+ \, \d F(y \mid x, 1)\\
&\geq \Lambda^{-1} \mu(x, 1) + (1 - \Lambda^{-1}) \int \hat{Q}_{\tau}(x, 1) + \tfrac{1}{1 - \tau} \{ y - \hat{Q}_{\tau}(x, 1) \}_+ \, \d F(y \mid x, 1)\\
&= \rho_+(x, 1).
\end{align*}
Hence, $\E[ ZY + (1 - Z) \varrho_+(X, 1; \hat{Q}_{\tau})] \geq \E[ZY + (1 - Z) \rho_+(X, 1)] = \psi_1^+$.
\end{proof}

\subsection{Proof of Theorem \ref{theorem:double_valid_double_sharp}}

For simplicity, we prove the result only for the estimator $\hat{\psi}_{1,k}^+ = \E_{n,k}[ \phi_1^+(X, Y, Z; \hat{\eta}^{(-k)})]$.  Combining over folds gives the result for the fully aggregated DVDS estimator.  The result for other estimands follows by symmetric arguments.  Since we consider only one fold, we drop the $(\cdot)^{(-k)}$ superscript on nuisance estimators.

The proof proceeds in two cases, depending on whether $|| \hat{e} - e ||_2 = o_P(1)$ or $|| \hat{\rho}_+ - \varrho_+(\cdot, \cdot; \hat{Q}_+) ||_2 = o_P(1)$.  In both cases, we freely use results from \Cref{lemma:nuisance_facts}.
\bigskip\\
\noindent \textbf{First case}. Suppose first that $|| \hat{e}  - e ||_2 = o_P(1)$.  Then, using \Cref{eq:firstDecomposition} in \Cref{sec:estimator} gives:
\begin{align*}
\hat{\psi}_{1,k} &= \E_{n,k} \left[ \left( 1 - \frac{Z}{\hat{e}(X)} \right) \hat{Q}_+(X, 1) \right]\\
&+ \E_{n,k} \left[ \frac{Z \hat{Q}_+(X, 1)}{\hat{e}(X)} + \{ Y - \hat{Q}_+(X, 1) \} Z \left( 1 + \frac{1 - \hat{e}(X)}{\hat{e}(X)} \Lambda^{\sign \{ Y - \hat{Q}_+(X, 1) \}} \right) \right]\\
&= \E_{n,k} \left[ \frac{Z \hat{Q}_+(X, 1)}{e(X)} + \{ Y - \hat{Q}_+(X, 1) \} Z \left( 1 + \frac{1 - e(X)}{e(X)} \Lambda^{\sign\{ Y - \hat{Q}_+(X, 1) \}} \right) \right] \\ 
&+ \underbrace{\E_{n,k} \left[ \left( 1 - \frac{Z}{e(X)} \right) \hat{Q}_+(X, 1) \right]}_{= o_P(1) \text{ by Chebyshev's conditional on $\F_{-k}$}} + \underbrace{\E_{n,k} \left[ \left( \frac{1}{e(X)} - \frac{1}{\hat{e}(X)} \right) Z \hat{Q}_+(X, 1) \right]}_{\precsim || 1/\hat{e} - 1/e ||_{2,k} \times || \hat{Q}_+(x, 1) ||_{2,k} = o_P(1) O_P(1)}\\
&+ \underbrace{\E_{n,k} \left[ \{ Y - \hat{Q}_+(X, 1) \} Z \Lambda^{\sign \{ Y - \hat{Q}_+(X, 1) \}} \left( \frac{1 - \hat{e}(X)}{\hat{e}(X)} - \frac{1 - e(X)}{e(X)} \right)  \right]}_{\precsim \Lambda || y - \hat{Q}_+(x, 1) ||_{2,k} \times || 1/\hat{e} - 1/e ||_{2,k} = O_P(1) o_P(1)}\\
&= \E_{n,k} \left[ \frac{Z \hat{Q}_+(X, 1)}{e(X)} + \{ Y - \hat{Q}_+(X, 1) \} Z \left( 1 + \frac{1 - e(X)}{e(X)} \Lambda^{\sign\{ Y - \hat{Q}_+(X, 1) \}} \right) \right] + o_P(1)\\
&= \E \left[ \frac{Z \hat{Q}_+(X, 1)}{e(X)} + \{ Y - \hat{Q}_+(X, 1) \} Z \left( 1 + \frac{1 - e(X)}{e(X)} \Lambda^{\sign \{ Y - \hat{Q}_+(X, 1) \}} \right) \, \bigg| \, \F_{-k} \right] + o_P(1)
\end{align*}
Here, the last line follows by Chebyshev's inequality conditional on $\F_{-k}$.  By \Cref{prop:ipw_valid}, the conditional expectation in the final line of the preceding display is always at least $\psi_1^+$, so we have shown $\hat{\psi}_{1,k} \geq \psi_1^+ - o_P(1)$.  This proves validity when $\hat{e}$ is consistent.

To prove sharpness when $\hat{Q}_+$ is also consistent, we further analyze the conditional expectation in the final line of the preceding display:
\begin{align*}
&\E \left[ \frac{Z \hat{Q}_+(X, 1)}{e(X)} + \{ Y - \hat{Q}_+(X, 1) \} Z \left( 1 + \frac{1 - e(X)}{e(X)} \Lambda^{\sign \{ Y - \hat{Q}_+(X, 1) \}} \right) \, \bigg| \, \F_{-k} \right]\\
&= \E[ZY] + \E \left[ (1 - Z) \E [ \hat{Q}_+(X, 1) + \Lambda^{\sign \{ Y - \hat{Q}_+(X, 1) \}} \{ Y - \hat{Q}_+(X, 1) \} \mid X, Z = 1, \F_{-k}] \mid \F_{-k} \right]\\
& = \E[ ZY] + \E[(1 - Z) \E[ \Lambda^{-1} Y + (1 - \Lambda^{-1}) [ \hat{Q}_+(X, 1) + \tfrac{1}{1 - \tau} \{ Y - \hat{Q}_+(X, 1) \}_+ \mid X, Z = 1, \F_{-k}] \mid \F_{-k}] &\text{(\Cref{lemma:transformed_outcome_identity})}\\
& = \E[ZY] + \E[(1 - Z) \varrho_+(X, 1; \hat{Q}_+)]\\
&= \E[ZY + (1 - Z) \rho_+(X, 1)] + \underbrace{\E[(1 - Z) \{ \varrho_+(X, 1; \hat{Q}_+) - \rho_+(X, 1) \} \mid \F_{-k} ]}_{\leq \Lambda || \hat{Q}_+(x, 1) - Q_+(x, 1) ||_2 = o_P(1) \text{ by \Cref{lemma:nuisance_facts}.\ref{item:kappa_lipschitz}}}\\
&= \psi_1^+ + o_P(1).
\end{align*}
Here, the last line follows from \Cref{prop:dm}.  Combining with the previous calculation shows $\hat{\psi}_{1,k}^+ = \psi_1^+ + o_P(1)$.
\bigskip\\
\noindent \textbf{Second case}.  Suppose next that $|| \hat{\rho}_+ - \varrho_+(\cdot, \cdot; \hat{Q}_+) ||_2 = o_P(1)$.  Then, \Cref{eq:secondDecomposition} in \Cref{sec:estimator} gives:
\begin{align*}
\hat{\psi}_{1,k}^+ &= \E_{n,k}[ZY + (1 - Z) \hat{\rho}_+(X, 1)]\\
&+ \E_{n,k} \left[ \frac{1 - \hat{e}(X)}{\hat{e}(X)} Z ( \Lambda^{-1} Y + (1 - \Lambda^{-1})[ \hat{Q}_+(X, 1) + \tfrac{1}{1 - \tau} \{ Y - \hat{Q}_+(X, 1) \}_+] - \hat{\rho}_+(X, 1)) \right]\\
&= \E[ ZY + (1 - Z) \varrho_+(X, 1; \hat{Q}_+) \mid \F_{-k}]\\
&+ \underbrace{\E_{n,k}[ZY + (1 - Z) \varrho_+(X, 1; \hat{Q}_+)] - \E[ZY + (1 - Z) \varrho_+(X, 1; \hat{Q}_+) \mid \F_{-k} ]}_{= o_P(1) \text{ by Chebyshev's inequality conditional on $\F_{-k}$}}\\
&+ \underbrace{\E_{n,k}[(1 - Z) ( \hat{\rho}_+(X, 1) - \varrho_+(X, 1; \hat{Q}_+) )]}_{\leq || \hat{\rho}_+(x, 1) - \rho_+(x, 1) ||_2 = o_P(1) \text{ by Markov conditional on $\F_{-k}$}}\\
&+ \underbrace{\E_{n,k} \left[ \frac{1 - \hat{e}(X)}{\hat{e}(X)} Z ( \Lambda^{-1} Y + (1 - \Lambda^{-1})  [ \hat{Q}_+(X, 1) + \tfrac{1}{1 - \tau} \{ Y - \hat{Q}_+(X, 1) \}_+] - \varrho_+(X, 1; \hat{Q}_+)) \right]}_{= o_P(1) \text{ by Chebyshev's inequality conditional on $\F_{-k}$ + def. of $\varrho$}}\\
&+ \underbrace{\E_{n,k} \left[ \frac{1 - \hat{e}(X)}{\hat{e}(X)} Z ( \varrho_+(X, 1; \hat{Q}_+) - \hat{\rho}_+(X, 1) ) \right]}_{\precsim_P || \varrho_+(\cdot, \cdot; \hat{Q}_+) - \hat{\rho}_+ ||_2 / \epsilon = o_P(1) \text{ by Markov}}\\
&= \E[ ZY + (1 - Z) \hat{\rho}_+(X, 1) \mid \F_{-k}] + o_P(1)
\end{align*}
\Cref{prop:dm_valid} implies $\E[ZY + (1 - Z) \hat{\rho}_+(X, 1) \mid \F_{-k}] \geq \psi_1^+$, so $\hat{\psi}_{1,k}^+ \geq \psi_1^+ - o_P(1)$ even if $\hat{Q}_+$ is misspecified.  

If we also have $|| \hat{Q}_+(x, 1) - Q_+(x, 1) ||_2 = o_P(1)$, then \Cref{lemma:nuisance_facts}.\ref{item:kappa_lipschitz} gives $|| \varrho_+(x, 1; \hat{Q}_+) - \rho_+(x, 1) ||_2 = o_P(1)$.  As a consequence, $\E[ZY + (1 - Z) \varrho_+(X, 1; \hat{Q}_+) \mid \F_{-k}] = \E[ ZY + (1 - Z) \rho_+(X, 1) \mid \F_{-k}] + o_P(1) = \psi_1^+ + o_P(1)$.

\subsection{Proof of Corollary \ref{corollary:binary_double_valid_double_sharp}}

\begin{proof}
Throughout this proof, we use the same notation from the proof of \Cref{theorem:double_valid_double_sharp}.

If $\hat{e}$ is consistent, then validity follows immediately from \Cref{theorem:double_valid_double_sharp}.

If $\hat{\mu}$ is consistent, then sharpness requires a bit more argument. We cannot recover $\hat{Q}_+$ consistency if $\mu(X, Z)$ has an atom at $1 - \tau$. Still, the proof of \Cref{theorem:double_valid_double_sharp}'s second case uses $\varrho_+(\cdot, \cdot; \hat{Q}_+)$'s consistency rather than $\hat{Q}_+$'s consistency. Its argument will go through if we have $|| \hat{\rho}_+ - \rho_+ ||_2 = o_P(1)$ and $|| \varrho_+(\cdot, \cdot; \hat{Q}_+) - \hat{\rho}_+ ||_2 = o_P(1)$. The first of these follows from \Cref{lemma:nuisance_facts}.\ref{item:mu_to_kappa}, so we only need to verify the second.

The result $|| \varrho_+(\cdot, \cdot; \hat{Q}_+) - \hat{\rho}_+ ||_2 = o_P(1)$  follows by writing out the explicit formulae for $\hat{\rho}_+(x, 1)$, $\varrho_+(x, 1; \hat{Q}_+)$ and their difference:
\begin{align*}
\varrho_+(x, 1; \hat{Q}_+) &= \int \Lambda^{-1} y + (1 - \Lambda^{-1})[\hat{Q}_+(x, 1) + \tfrac{1}{1 - \tau} \{ y - \hat{Q}_+(x, 1) \}_+] \, \d F(y \mid x, z)\\
&= \left\{ 
\begin{array}{ll}
\Lambda^{-1} \mu(x, 1) + (1 - \Lambda^{-1}) &\text{if } \hat{\mu}(x, 1) > 1 - \tau\\
\Lambda^{-1} \mu(x, 1) + (\Lambda - \Lambda^{-1}) \mu(x, 1) &\text{if } \hat{\mu}(x, 1) \leq 1 - \tau
\end{array}
\right.\\
\hat{\rho}_+(x, 1) &= \left\{ 
\begin{array}{ll}
\Lambda^{-1} \hat{\mu}(x, 1) + (1 - \Lambda^{-1}) &\text{if } \hat{\mu}(x, 1) > 1 - \tau\\
\Lambda^{-1} \hat{\mu}(x, 1) + (\Lambda - \Lambda^{-1}) \hat{\mu}(x, 1) &\text{if } \hat{\mu}(x, 1) \leq 1 - \tau
\end{array}
\right.\\
\implies \hat{\rho}_+(x, 1) - \varrho_+(x, 1; \hat{Q}_+) &= \left\{ 
\begin{array}{ll}
\Lambda^{-1} [ \hat{\mu}(x, 1) - \mu(x, 1)] &\text{if } \hat{\mu}(x, 1) > 1 - \tau\\
\Lambda [ \hat{\mu}(x, 1) - \mu(x, 1)] &\text{if } \hat{\mu}(x, 1) \leq 1 - \tau
\end{array}
\right.
\end{align*}
Thus, $| \hat{\rho}_+(x, 1) - \varrho_+(x, 1; \hat{Q}_+)|^2 \leq \Lambda^2 | \hat{\mu}(x, 1) - \mu(x, 1)|^2$, meaning $|| \hat{\rho}_+(x, 1) - \varrho_+(x, 1; \hat{Q}_+) ||_2 \leq \Lambda || \hat{\mu}(x, 1) - \mu(x, 1) ||_2 = o_P(1)$. 
\end{proof}

\subsection{Proof of Theorem \ref{theorem:efficiency}}

Since the sum or difference of two efficient and asymptotically linear estimators is again efficient and asymptotically linear, we only need to prove the result for $\hat{\psi}_1^+$.  Results for the other bounds follow by symmetric arguments.

We break the proof into four parts:  asymptotic linearity for continuous outcomes (\ref{sec:asymptotic_linearity_continuous}), asymptotic linearity for binary outcomes (\ref{sec:asymptotic_linearity_binary}), efficiency for continuous outcomes (\ref{sec:efficiency_continuous}) and efficiency for binary outcomes (\ref{sec:efficiency_binary}).

\subsubsection{Asymptotic linearity for continuous outcomes} \label{sec:asymptotic_linearity_continuous}

\begin{proof}
For the first part of this proof, we fix a single fold $k \in \{ 1, \ldots, K \}$ and drop the $(\cdot)^{(-k)}$ superscripts on nuisance estimates.  Write the difference between the $\E_{n,k}[ \phi_1^+(X, Y, Z; \hat{\eta})]$ and the ``oracle" DVDS estimator $\E_{n,k}[ \phi_1^+(X, Y, Z; \eta)]$ as follows:
\begin{align}
&\E_{n,k}[ \phi_1^+(X, Y, Z;\hat{\eta})] - \E_{n,k}[ \phi_1^+(X, Y, Z;\eta)] \nonumber \\
&= \E_{n,k} \left[ \frac{1 - \hat{e}(X)}{\hat{e}(X)} Z (1 - \Lambda^{-1}) \left\{ \hat{Q}_+(X, 1) + \frac{\{ Y - \hat{Q}_+(X, 1) \}_+}{1 - \tau} - Q_+(X, 1) -  \frac{\{ Y - Q_+(X, 1) \}_+}{1 - \tau} \right\} \right] \label{eq:change_q}\\
&+ \E_{n,k} \left[ \left( \frac{1 - \hat{e}(X)}{\hat{e}(X)} - \frac{1 - e(X)}{e(X)} \right) Z \left\{ \Lambda^{-1} Y + (1 - \Lambda^{-1}) \left[ Q_+(X, 1) + \frac{\{ Y - Q_+(X, 1) \}_+}{1 - \tau} \right] - \rho_+(X, 1) \right\} \right] \label{eq:change_e}\\
& + \E_{n,k} \left[ \left( \frac{1 - \hat{e}(X)}{\hat{e}(X)} - \frac{1 - e(X)}{e(X)} \right) Z \{ \rho_+(X, 1)  - \hat{\rho}_+(X, 1) \} \right] \label{eq:error_product} \\
&+ \E_{n,k} \left[ \left( \frac{1 - e(X)}{e(X)} Z - [1 - Z] \right) \{ \rho_+(X, 1) - \hat{\rho}_+(X, 1) \} \right] \label{eq:change_kappa} 
\end{align}

Each of the terms in this decomposition will be shown to be $o_P(n^{-1/2})$.  For  (\ref{eq:change_q}), (\ref{eq:change_e}) and (\ref{eq:change_kappa}), we start by applying Chebyshev's inequality conditional on $\F_{-k}$:
\begin{align*}
|(\ref{eq:change_q})| &\precsim_P \left| \E \left[ \frac{1 - \hat{e}(X)}{\hat{e}(X)} Z \left\{ \hat{Q}_+(X, 1) + \frac{\{ Y - \hat{Q}_+(X, 1) \}_+}{1 - \tau} - Q_+(X, 1) - \frac{\{ Y - Q_+(X, 1) \}_+}{1 - \tau} \right\} \, \bigg| \, \F_{-k} \right] \right|\\
&+ \frac{1}{\sqrt{| \F_k|}} \underbrace{\E \left[ \left( \frac{1 - \hat{e}(X)}{\hat{e}(X)} Z  \left\{ \hat{Q}_+(X, 1) + \frac{\{ Y - \hat{Q}_+(X, 1) \}_+}{1 - \tau} - Q_+(X, 1) - \frac{\{ Y - Q_+(X, 1) \}_+}{1 - \tau} \right\} \right)^2 \, \bigg| \, \F_{-k} \right]^{1/2}}_{\leq \Lambda || \hat{Q}_+(x, 1) - Q_+(x, 1) ||_2 / \epsilon \text{ since $q \mapsto q + \tfrac{1}{1 - \tau} \{ y - q \}_+$ is $\Lambda$-Lipschitz}}\\
&\leq  \int_{\R^d} \frac{1}{\epsilon} \underbrace{\left| \int_{\R} \hat{Q}_+(x, 1) + \frac{\{ y - \hat{Q}_+(x, 1) \}_+}{1 - \tau} - Q_+(x, 1) - \frac{\{ y - Q_+(x, 1) \}_+}{1 - \tau} \, f(y \mid x, 1) \, \d y \right|}_{\leq \Lambda M | \hat{Q}_+(x, 1) - Q_+(x, 1)|^2 \text{ by \Cref{lemma:cvar_neyman_orthogonal}}} \, \d P_X(x) + o_P(1/\sqrt{|\F_k|})\\
&\leq \int_{\R^d} \frac{\Lambda M}{\epsilon} | \hat{Q}_+(x, 1) - Q_+(x, 1)|^2 \, \d P_X(x) + o_P(1/\sqrt{|\F_k|})\\
&\precsim || Q_+ - \hat{Q}_+ ||_2^2 + o_P(1/\sqrt{|\F_k|})\\
&= o_P(n^{-1/2})\\
|(\ref{eq:change_e})| &\precsim_P \bigg| \E \bigg[ \left( \frac{e(X) - \hat{e}(X)}{\hat{e}(X)} \right) \underbrace{\E \left[ \Lambda^{-1} Y + (1 - \Lambda^{-1}) \left[ Q_+(X, 1) + \frac{\{ Y - Q_+(X, 1) \}_+}{1 - \tau} \right] - \rho_+(X, 1) \, \bigg| \, X, Z = 1 \right]}_{= 0 \text{ by def. of $\rho_+$}} \, \bigg| \, \F_{-k} \bigg] \bigg|\\
&+ \frac{1}{\sqrt{| \F_k|}} \underbrace{\E \left[ \left( \left[ \frac{1}{\hat{e}(X)} - \frac{1}{e(X)} \right] \left\{ \Lambda^{-1} Y + (1 - \Lambda^{-1}) \left[ Q_+(X, 1) + \frac{\{ Y - Q_+(X, 1) \}_+}{1 - \tau}  \right] - \rho_+(X, 1) \right\} \right)^2 \, \bigg| \, \F_{-k} \right]^{1/2}}_{= o_P(1) \text{ by \Cref{lemma:nuisance_facts}.\ref{item:f_over_e}}}\\
&= o_P(n^{-1/2})\\
|(\ref{eq:change_kappa})| &\precsim_P \bigg| \E \bigg[ \{ \rho_+(X, 1) - \hat{\rho}_+(X, 1) \} \underbrace{\E \left[ \frac{1 - e(X)}{e(X)} Z - [1 - Z] \, \bigg| \, X \right]}_{= 0} \, \bigg| \, \F_{-k} \bigg] \bigg| \\
&+ \frac{1}{\sqrt{| \F_k|}} \underbrace{\E \left[ \{ \rho_+(X, 1) - \hat{\rho}_+(X, 1) \}^2 \left( \frac{1 - e(X)}{e(X)} Z - [1 - Z] \right)^2 \, \bigg| \, \right]^{1/2}}_{\precsim || \rho_+(x, 1) - \hat{\rho}_+(x, 1) ||_2^2 / \epsilon^2 = o_P(1)}\\
&= o_P(n^{-1/2}).
\end{align*}

Meanwhile, for the term (\ref{eq:error_product}), we simply use the Cauchy-Schwarz inequality:  $|(\ref{eq:error_product})| \leq || 1/\hat{e} - 1/e ||_{2,k} \times || \hat{\rho}_+(x, 1) - \rho_+(x, 1) ||_{2,k} \precsim_P || 1/\hat{e} - 1 / e||_2 \times || \hat{\rho}_+ - \rho_+ ||_2 = o_P(n^{-1/2})$.  Therefore, we have shown that $\E_{n,k}[ \phi_1^+(X, Y, Z; \hat{\eta})] = \E_{n,k}[\phi_1^+(X, Y, Z; \eta)] + o_P(n^{-1/2})$.  

Finally, combining over folds gives our desired asymptotic linearity:
\begin{align*}
\sqrt{n}( \hat{\psi}_1^+ - \psi_1^+) &= \frac{1}{\sqrt{n}} \sum_{i = 1}^n \{ \phi_1^+(X_i, Y_i, Z_i; \eta) - \psi_1^+ \} + o_P(n^{-1/2}).
\end{align*}
\end{proof}

\subsubsection{Asymptotic linearity for binary outcomes} \label{sec:asymptotic_linearity_binary}

\begin{proof}
We use the same decomposition as in the continuous case.  The arguments showing that terms (\ref{eq:change_e}), (\ref{eq:change_kappa}) are $o_P(n^{-1/2})$ do not rely on continuity or rate conditions.  Therefore, they continue to work in the binary case since \Cref{lemma:nuisance_facts}.\ref{item:mu_to_Q} shows that $\hat{Q}_+$ is consistent under the binary assumptions.  Meanwhile, \Cref{lemma:nuisance_facts}.\ref{item:mu_to_kappa} allows us to conclude $|| \hat{e} - e ||_2 \times || \hat{\rho}_+ - \rho_+ ||_2 \precsim || \hat{e} - e ||_2 \times || \hat{\mu} - \mu ||_2 = o_P(n^{-1/2})$, so $(\ref{eq:error_product}) = o_P(n^{-1/2})$ as well.  

The remaining term is (\ref{eq:change_q}), which cannot be directly handled using the arguments from the continuous case as \Cref{lemma:cvar_neyman_orthogonal} is no longer applicable.  Nevertheless, the continuous arguments still allow us to write:
\begin{align*}
| ( \ref{eq:change_q})| &\precsim_P \int \frac{1}{\epsilon} \left| \int \hat{Q}_+(x, 1) + \frac{\{ y - \hat{Q}_+(x, 1) \}_+}{1 - \tau} - Q_+(x, 1) - \frac{\{ y - Q_+(x, 1) \}_+}{1 - \tau} \, \d F(y \mid x, 1) \right| \, \d P_X(x) + o_P(n^{-1/2})
\end{align*}
The interior integral in the above display is zero if $\hat{\mu}(x, 1), \mu(x, 1)$ are both larger than $1 - \tau$ or both less than or equal to $1 - \tau$, since $\hat{Q}_+(x, 1) = Q_+(x, 1)$ on these events.  Meanwhile, if $\hat{\mu}(x, 1)$ and $\mu(x, 1)$ are on opposite sides of $1 - \tau$, the absolute value of the interior integral takes the value $|1 - \mu(x, 1)/(1 - \tau)|$.  Thus, the interior integral can be simplified as:
\begin{align*}
&\left| \int \hat{Q}_+(x, 1) + \frac{\{ y - \hat{Q}_+(x, 1) \}}{1 - \tau} - Q_+(x, 1) - \frac{\{ y - Q_+(x, 1) \}_+}{1 - \tau} \right|\\
&\quad = |1 - \mu(x, 1)/(1 - \tau)| \mathbb{I} \{ \mu(x, 1) \leq 1 - \tau < \hat{\mu}(x, 1) \text{ or } \hat{\mu}(x, 1) \leq 1 - \tau < \mu(x, 1) \}
\end{align*}
Since $|| \hat{\mu} - \mu ||_{\infty} = o_P(n^{-1/4})$, there exists a sequence $\delta_n = o(n^{-1/4})$ such that $\P( || \hat{\mu} - \mu ||_{\infty} \leq \delta_n) \rightarrow 1$.  On the event $|| \hat{\mu} - \mu ||_{\infty} \leq \delta_n$, the events $\mu(x, 1) \leq 1 - \tau < \hat{\mu}(x, 1)$ and $\hat{\mu}(x, 1) \leq 1 - \tau < \mu(x, 1)$ only occur if $\mu(x, 1)$ is within $\delta_n$ of $1 - \tau$.  Therefore, we may write:
\begin{align*}
(\ref{eq:change_q}) &\precsim_P \int_{\R^d} \frac{1}{\epsilon} | 1 - \mu(x, 1)/(1 - \tau)| \mathbb{I} \{ \mu(x, 1) \in [1 - \tau \pm \delta_n] \} \, \d P_X(x) + o_P(n^{-1/2})\\
&= \frac{1}{\epsilon} \delta_n \Pobs( \mu(X, 1) \in [1 - \tau \pm \delta_n]) + o_P(n^{-1/2})\\
&\precsim \delta_n^2/\epsilon + o_P(n^{-1/2}) &\text{Bounded density}\\
&= o_P(n^{-1/2}).
\end{align*}
In summary, $\E_{n,k}[ \phi_1^+(X, Y, Z; \hat{\eta}^{(-k)})] = \E_{n,k}[ \phi_1^+(X, Y, Z; \eta) + o_P(n^{-1/2})$ also in the binary case.  From here, the rest of the proof goes through as in the continuous case.
\end{proof}

\subsubsection{Efficiency for continuous outcomes} \label{sec:efficiency_continuous}

\begin{proof}
We start by defining the model.  Let $\nu$ be some $\sigma$-finite measure on $\R^d$ which dominates $P_X$ (e.g., $P_X$ itself), and let $\mathcal{P}$ be the set of distributions on $\mathcal{X} \times \R \times \{ 0, 1 \}$ with density of the form (\ref{eq:model_density}) with respect to $\nu \times (\textup{Lebesgue measure}) \times \textup{(Counting measure)}$. 
\begin{align}
f_{X, Y, Z}(x, y, z) = f_X(x) e(x)^z[1 - e(x)]^{1 - z} f_{Y | X, Z}(y | x, z) \label{eq:model_density}
\end{align}
Condition \ref{condition:density} (continuous case) implies that the observed-data distribution $P$ belongs to the model $\mathcal{P}$.

Now, we can begin computing the efficient influence function.  We write the functional $P \mapsto \psi_1^+(P)$ as a linear combination of three simpler functionals:
\begin{align*}
\psi_1^+(P) &= \E_P[ZY + (1 - Z) \rho_+(X, 1)]\\
&= \underbrace{\E_P[ZY]}_{\psi_a(P)} + \Lambda^{-1} \underbrace{\E_P[(1 - Z) \mu(X, 1)]}_{\psi_b(P)} + (1 - \Lambda^{-1}) \underbrace{\E_P[(1 - Z) \textup{CVaR}_{\tau}(X, 1)]}_{\psi_c(P)}.
\end{align*}
The functionals $\psi_a$ and $\psi_b$ are well-understood.  The results in \citet[Example 25.24]{asymptotic_statistics}, \citet[Theorem 1]{hahn1998}, and the elementary calculus of influence functions shows that they have the following efficient influence functions:
\begin{align*}
\phi_b(x, y, z) &= zy - \E_P[ZY]\\
\phi_b(x, y, z) &= (1 - z) \mu(x, 1) + \frac{1 - e(x)}{e(x)} z \{ y - \mu(x, 1) \} - \E[(1 - Z) \mu(X, 1)]
\end{align*}
However, $\psi_c$ may be less standard.  We claim its efficient influence function is given by the following expression:
\begin{align*}
\phi_c(x, y, z) &= (1 - z) \textup{CVaR}_{\tau}(x, 1) + \frac{1 - e(x)}{e(x)} z [ Q_{\tau}(x, 1) + \tfrac{1}{1 - \tau} \{ y - Q_{\tau}(x, 1) \}_+ - \textup{CVaR}_{\tau}(x, 1 ] - \psi_c(P)
\end{align*}
To verify this, we use the general strategy outlined in \cite{bkrw1998, newey1990, asymptotic_statistics}.  Let $\{ \d P_t = f_X(x; t) e(x;t)^z \{ 1 - e(x;t) \}^{1 - z} f_{Y|X,Z}(y | x, z;t) \}$ be a smoothly-parameterized submodel of $\mathcal{P}$ which passes through $P$ at $t = 0$.  Let $\ell_{X,Y,Z}(x, y, z)$ be the log likelihood function at $t = 0$.  In what follows, for any function $g(w;t)$, we denote by $\dot{g}$ the function $(x, y, z) \mapsto \tfrac{\d}{\d t} g(x, y, z; t) |_{t = 0}$.  In this notation, the score function of the submodel at $t = 0$ is:
\begin{align}
\dot{\ell}_{X,Y,Z}(x, y, z) = \dot{\ell}_X(x) + \frac{\dot{e}(x) \{ z - e(x) \}}{e(x) \{ 1 - e(x) \}}+ \dot{\ell}_{Y|X,Z}(y | x, z). \label{score_function}
\end{align}
The tangent space of the model $\mathcal{P}$ at $P$ consists of functions of the form (\ref{score_function}) as $\dot{\ell}_X, \dot{e}$ and $\dot{\ell}_{Y | X, Z}$ range over square-integrable functions satisfying $\E_P[\dot{\ell}_X(X)] = \E_P[\dot{\ell}_{Y | X, Z}(Y | X, Z) | X, Z] = 0$.  

Now, we compute the pathwise derivative of the map $t \mapsto \psi_c(P_t)$:
\begin{align*}
\frac{\d}{\d t} \psi_c(P_t) \, \bigg|_{t = 0} &= \frac{\d}{\d t} \E_t[ (1 - Z) \textup{CVaR}_{\tau}(X, 1)\, \bigg|_{t = 0}\\
&= \frac{\d}{\d t} \int \{ 1 - e(x;t) \} \textup{CVaR}_{\tau}(x, 1 ; t) f_X(x; t) \d \nu(x) \, \bigg|_{t = 0} \\
&= \int -\dot{e}(x) \textup{CVaR}_{\tau}(x, 1) f_X(x) \, \d \nu(x) \bigg|_{t = 0} + \int \{ 1 - e(x) \} \textup{CVaR}_{\tau}(x, 1) \dot{\ell}_X(x) f_X(x) \, \d \nu(x) \bigg|_{t = 0}\\
&+ \int \{ 1 - e(x) \} \dot{\textup{CVaR}}_{1 - \tau}(x, 1) f_X(x) \, \d \nu(x) \, \bigg|_{t = 0}\\
&= \E \left[ -\textup{CVaR}_{\tau}(X, 1) \{ Z - e(X) \} \dot{\ell}_{X,Y,Z}(X,Y,Z) \right]\\
&+ \E \left[ \left( \{1 - e(X) \} \textup{CVaR}_{\tau}(X, 1) - \psi_c(P) \right) \dot{\ell}_{X,Y,Z}(X,Y,Z) \right] \\
&+ \E[\{ 1 - e(X) \} \dot{\textup{CVaR}}_{1 - \tau}(X, 1) ]
\end{align*}
To evaluate the quantity $\E[ \{ 1 - e(X) \} \dot{\textup{CVaR}}_{\tau}(X, 1)]$ more explicitly, we compute $\dot{\textup{CVaR}}_{\tau}(x, 1)$ using the Leibniz rule and implicit differentiation:
\begin{align*}
\dot{\textup{CVaR}}_{\tau}(x, 1) &= \frac{\d}{\d t} \int Q_{\tau}(x, 1; t) + \tfrac{1}{1 - \tau} \{ y - Q_{\tau}(x, 1; t) \}_+ f_{Y|X,Z}(y \mid x, 1; t) \d y \bigg|_{t = 0}\\
&= \frac{\d}{\d t} \frac{1}{1 - \tau} \int_{Q_{\tau}(x, 1; t)}^{\infty} y f_{Y|X,Z}(y | x, 1;t) \d y \, \bigg|_{t = 0}\\
&= -\frac{1}{1 - \tau} \left\{ Q_{\tau}(x, 1) f_{Y|X,Z}(Q_{\tau}(x,1) \mid x, 1) \dot{Q}_{\tau}(x, 1) + \int_{Q_{\tau}(x, 1)}^{\infty} y f_{Y|X,Z}(y | x, 1) \dot{\ell}_{Y|X,Z}(y|x,1)\, \d y \right\}\\
&= \frac{1}{1 - \tau} \int_{\R} \{ y - Q_{\tau}(x, 1) \} \mathbb{I} \{ y > Q_{\tau}(x, 1) \} \dot{\ell}_{Y|X,Z}(y|x,1) f_{Y|X,Z}(y|x,1) \, \d y\\
&= \E[ \tfrac{1}{1 - \tau} \{ Y - Q_{\tau}(X, 1) \}_+ \dot{\ell}_{Y|X,Z}(Y|X,1) \mid X = x, Z = 1]
\end{align*}
Thus, we have:
\begin{align*}
\E[ \{ 1 - e(X) \} \dot{\textup{CVaR}}_{\tau}(X, 1)] &= \E [ \{ 1 - e(X) \} \E[ \tfrac{1}{1 - \tau} \{ Y - Q_{\tau}(X, 1) \}_+ \dot{\ell}_{Y | X, Z}(Y | X, 1) \mid X, Z = 1] ]\\
&= \E \left[ \frac{1 - e(X)}{e(X)} Z \tfrac{1}{1 - \tau} \{ Y - Q_{\tau}(X, 1) \}_+ \dot{\ell}_{Y|X,Z}(Y|X,Z) \right]\\
&= \E \left[ \frac{1 - e(X)}{e(X)} Z[ Q_{\tau}(X, 1) + \tfrac{1}{1 - \tau} \{ Y - Q_{\tau}(X, 1) \}_+ - \textup{CVaR}_{\tau}(X, 1) ] \dot{\ell}_{Y|X,Z}(Y | X, Z) \right]\\
&= \E \left[ \frac{1 - e(X)}{e(X)} Z [ Q_{\tau}(X, 1) + \tfrac{1}{1 - \tau} \{ Y - Q_{\tau}(X, 1) \}_+ - \textup{CVaR}_{\tau}(X, 1)] \dot{\ell}_{X,Y,Z}(X, Y, Z) \right].
\end{align*}
Plugging this into the above expression for $\frac{\d}{\d t} \psi_c(P_t) |_{t = 0}$ and simplifying gives:
\begin{align*}
\frac{\d}{\d t} \psi_c(P_t) &= \E[ \dot{\ell}_{X,Y,Z}(X, Y, Z) \phi_c(X, Y, Z)].
\end{align*}
This proves that the claimed function $\phi_c$ is an influence function for $\psi_c$.  Since the model $\mathcal{P}$ is locally saturated, $\phi_c$ must also be the efficient influence function.

By the algebra of influence functions, the efficient influence function for $\psi_1^+$ must be $\phi_a + \Lambda^{-1} \phi_b + (1 - \Lambda^{-1}) \phi_c$.  Some straightforward algebra shows that this is equal to $\phi_1^+(x, y, z; \eta) - \psi_1^+(P)$.  Thus, the influence function of the DVDS estimator $\hat{\psi}_1^+$ is the efficient one.
\end{proof}

\subsubsection{Efficiency for binary outcomes} \label{sec:efficiency_binary}

\begin{proof}
For binary outcomes, the model $\mathcal{P}$ is modified so that the $Y$-component of the likelihood is conditionally Bernoulli.  This means the new model consists of distributions with density of the form (\ref{eq:binary_model_density}) with respect to $\nu \times (\textup{Counting measure}) \times (\textup{Counting measure})$
\begin{align}
f_{X,Y,Z}(x, y, z) = f_X(x) \times e(x)^z \{ 1 - e(x) \}^{1 - z} \times \mu(x, z)^y \{ 1 - \mu(x, z) \}^{1 - y} \label{eq:binary_model_density}
\end{align}
We consider paths of the form $t \mapsto f_X(x; t) \times e(x;t)^z \{ 1 - e(x;t) \}^{1 - z} \times \mu(x, z; t)^y \{ 1 - \mu(x, z;t) \}^{1 - y}$, with the following score functions:
\begin{align*}
\dot{\ell}_{X, Y, Z}(x, y, z) &= \dot{\ell}_X(x) + \frac{\dot{e}(x) \{ z - e(x) \}}{e(x) \{ 1 - e(x) \}} + \frac{\dot{\mu}(x, z) \{ y - \mu(x, z) \}}{\mu(x, z) \{ 1 - \mu(x, z) \}}.
\end{align*}

Even in this new model, the decomposition $\psi_1^+ = \psi_a + \psi_b + \psi_c$ applies and the efficient influence functions for $\psi_a$ and $\psi_b$ remain the same.  Only the calculation of the influence function for $\psi_c$ needs to change.  To compute this influence function, we use the fact that the CVaR of a $\textup{Bernoulli}(\mu(x, z))$ distribution has the following formula:
\begin{align*}
\textup{CVaR}_{\tau}(x, 1) &= \E[ Q_{\tau}(x, 1) + \tfrac{1}{1 - \tau} \{ Y - Q_{\tau}(x, 1) \}_+ \mid X = x, Z = 1]\\
&= \E[ \mathbb{I} \{ \mu(x, 1) > 1 - \tau \} + \tfrac{1}{1 - \tau} \{ Y - \mathbb{I} \{ \mu(x, 1) > 1 - \tau \} \}_+ \mid X = x, Z = 1]\\
&= \min \{ \tfrac{1}{1 - \tau} \mu(x, 1), 1 \}
\end{align*}
Therefore, the derivative of the map $t \mapsto \psi_c(P_t)$ is given by:
\begin{align*}
\frac{\d}{\d t} \psi_c(P_t) \, \bigg|_{t = 0} &= \frac{\d}{\d t} \int_{\R^d} \{ 1 - e(x; t) \} \min \{ \tfrac{1}{1 - \tau} \mu(x, 1; t), 1 \} f_X(x; t) \, \d \nu(x)\\
&= \int_{\R^d} -\dot{e}(x) \textup{CVaR}_{\tau}(x, 1) f_X(x) \, \d \nu(x) + \int_{\R^d} \{ 1 - e(x) \} \textup{CVaR}_{\tau}(x, 1) \dot{\ell}_X(x) f_X(x) \, \d \nu(x)\\
&+ \int_{\R^d} \{1 - e(x) \} \tfrac{1}{1 - \tau} \dot{\mu}(x, 1) \mathbb{I} \{ \mu(x, 1) \leq 1 - \tau \} f_X(x) \, \d \nu(x)\\
&= \E\left[ \textup{CVaR}_{\tau}(X, 1) \{ Z - e(X) \} \dot{\ell}_{X,Y,Z}(X, Y, Z) \right] + \E \left[ \{ 1 - e(X) \} \textup{CVaR}_{\tau}(X, 1) \dot{\ell}_{X,Y,Z}(X, Y, Z) \right]\\
&+ \E \left[ \{ 1 - e(X) \} \E \left[ \tfrac{1}{1 - \tau} \dot{\mu}(X, Z) \mathbb{I} \{ \mu(X, 1) \leq 1 - \tau \} \mid X, Z = 1 \right] \right]
\end{align*}
Here, the assumption that $\mu(X, Z)$ has a density ensures that the map $t \mapsto \min \{ \mu(x, 1) / (1 - \tau), 1 \}$ is differentiable almost everywhere.  The expectation in the final line of the preceding display can be rewritten as follows: 
\begin{align*}
&\E \left[ \{ 1 - e(X) \} \E \left[ \tfrac{1}{1 - \tau} \dot{\mu}(X, Z) \mathbb{I} \{ \mu(X, 1) \leq 1 - \tau \} \mid X, Z = 1 \right] \right] \\
&\quad = \E \left[ \frac{1 - e(X)}{e(X)} Z \frac{1}{1 - \tau} \{ Y - \mu(X, 1) \} \mathbb{I} \{ \mu(X, 1) \leq 1 - \tau \} \frac{\dot{\mu}(X, Z) \{ Y - \mu(X, Z) \}}{\mu(X, Z) \{ 1 - \mu(X, Z) \}} \right]\\
&\quad = \E \left[ \frac{1 - e(X)}{e(X)} Z \left( Q_{\tau}(X, 1) + \tfrac{1}{1 - \tau} \{ Y - Q_{\tau}(X, 1) \}_+ - \textup{CVaR}_{\tau}(X, 1) \right) \frac{\dot{\mu}(X, Z) \{ Y - \mu(X, Z) \}}{\mu(X, Z) \{ 1 - \mu(X, Z) \}} \right]\\
&\quad = \E \left[ \frac{1 - e(X)}{e(X)} Z \left( Q_{\tau}(X, 1) + \tfrac{1}{1 - \tau} \{ Y - Q_{\tau}(X, 1) \}_+ - \textup{CVaR}_{\tau}(X, 1) \right) \dot{\ell}_{X,Y,Z}(X, Y, Z) \right]
\end{align*}
where in the second-to-last last line, we used the identity $\tfrac{1}{1 - \tau} \{ y - \mu(x, 1) \} \mathbb{I} \{ \mu(x, 1) \leq 1 - \tau \} = Q_{\tau}(x, 1) + \tfrac{1}{1 - \tau} \{ y - Q_{\tau}(x, 1) \}_+ - \textup{CVaR}_{\tau}(x, 1)$, which may be verified by separately considering the cases $\mu(x, 1) > 1 - \tau$ and $\mu(x, 1) \leq 1 - \tau$.

Finally, combining terms gives $\tfrac{\d}{\d t} \psi_t(P_t) |_{t = 0} = \E[ \phi_c(X, Y, Z) \dot{\ell}_{X,Y,Z}(X, Y, Z)]$.  The rest of the proof now goes through just as in the continuous case.
\end{proof}

\subsection{Proof of Theorem \ref{theorem:sharp_inference}}

\begin{proof}
We begin by proving that the proposed standard error $\hat{\sigma}_+$ satisfies $n  \hat{\sigma}_+^2 \xrightarrow{P} \Var( \phi_{\ATE}^+(X, Y, Z; \eta))$.  Start with the following identity:
\begin{align*}
n \hat{\sigma}_+^2 &= \frac{n}{n - 1} \sum_{k = 1}^K \frac{| \F_k|}{n} || \phi_{\ATE}^+(X, Y, Z ; \hat{\eta}^{(-k)}) - \hat{\psi}_{\ATE}^+ ||_{2,k}^2.
\end{align*}
By \Cref{lemma:influence_function_convergence} and \Cref{theorem:double_valid_double_sharp}, $|| \phi_{\ATE}^+(X, Y, Z; \hat{\eta}^{(-k)}) - \hat{\psi}_{\ATE}^+ ||_{2,k}^2 = || \phi_{\ATE}^+(X, Y, Z; \eta) - \psi_{\ATE}^+ ||_{2,k}^2 + o_P(1)$, and the right-hand side of this expression converges to $\Var( \phi_{\ATE}^+(X, Y, Z; \eta))$ by the law of large numbers.  Averaging over the $K$ folds gives $n \hat{\sigma}_+^2 = \Var( \phi_{\ATE}^+(X, Y, Z; \eta)) + o_P(1)$ as well.

Next, we observe that the conditions of \Cref{theorem:sharp_inference} imply that $\Var( \phi_{\ATE}^+(X, Y, Z; \eta))$ is positive.  Specifically, $\Var(Y \mid X, Z) > 0$ if $Y$ has a bounded conditional density or $Y$ is binary and $P(\mu(X, Z) \in \{ 0, 1 \}) = 0$.

Therefore, valid coverage follows from \Cref{theorem:efficiency} and Slutsky's theorem.
\end{proof}

\subsection{Proof of Theorem \ref{theorem:valid_inference}}

We analyze the Wald confidence interval based on $\hat{\psi}_1^+$, which takes the form $\hat{\psi}_1^+ + z_{1 - \alpha} \hat{\sigma}_{1,+}$ for $\hat{\sigma}_{1,+}$ defined as follows:
\begin{align*}
\hat{\sigma}_{1,+}^2 &= \frac{1}{n(n - 1)} \sum_{k = 1}^K \sum_{i \in \F_k} \{ \phi_1^+(X_i, Y_i, Z_i; \hat{\eta}^{(-k)}) - \hat{\psi}_1^+ \}^2.
\end{align*}
The argument for the ATE case is conceptually identical but is notationally more complex. 

We divide the proof into five parts.
\begin{enumerate}[itemsep=-1ex]
    \item In \ref{sec:bootstrap_wald_equivalence}, we show that the Wald upper confidence bound based on $\hat{\psi}_1^+$ is first-order equivalent to the (linearized, percentile) bootstrap upper confidence bound based on this estimator.
    \item In \ref{sec:stochastically_dominated_estimator}, we construct an estimator $\bar{\psi}_1^+$ whose (linearized, percentile) bootstrap distribution is stochastically dominated by that of $\hat{\psi}_1^+$.
    \item In \ref{sec:asymptotic_bootstrap_validity}, we study the asymptotics of $\bar{\psi}_1^+$ when $Y$ is continuous.
    \item In \ref{sec:valid_inference_summary}, we combine the preceding results to complete the proof in the continuous case.
    \item Finally, \ref{sec:valid_inference_binary_proof} proves the result in the binary case, which does not require the machinery from the earlier steps.
\end{enumerate}

\subsubsection{Bootstrap and Wald equivalence} \label{sec:bootstrap_wald_equivalence}

The first step is to relate the Wald upper confidence bound to a certain bootstrap upper confidence bound.  To define this bound, let $(M_{1,n}, \ldots, M_{n,n}) \sim \textup{Multinomial} \{ n, (1/n, \ldots, 1/n) \}$ independently of the original data.  For any function $f^* \equiv f^*(\{ (X_i, Y_i, Z_i, M_{i,n}) \}_{i \leq n})$, let $q_{1 - \alpha}(f)$ denote the $100(1 - \alpha)\%$ quantile of the distribution of $f$ given the original data:
\begin{align*}
q_{1 - \alpha}(f) &= \inf \{ t \in \R \, : \, \P( f( \{ (X_i, Y_i, Z_i, M_{i,n}) \}_{i \leq n}) \leq t \mid \{ (X_i, Y_i, Z_i) \}_{i \leq n}) \geq 1 - \alpha \}.
\end{align*}

The main result of this section shows that $\hat{\psi}_1^+ + z_{1 - \alpha} \hat{\sigma}_{1,+}$ is asymptotically equivalent to $q_{1 - \alpha}( \E_n[ M_{i,n} \phi_{i,n}])$ where, for each $k \in \{ 1, \ldots, K \}$ and each $i \in \F_k$, we set $\phi_{i,n} = \phi_1^+(X_i, Y_i, Z_i; \hat{\eta}^{(-k)})$.  This is the upper confidence bound that would be obtained by pretending that the nuisances $\hat{\eta}^{(-k)}$ were fixed rather than estimated and then applying the percentile bootstrap to the estimator $\hat{\psi}_1^+$.  For this reason, we call the distribution of $\E_n[ M_{i,n} \hat{\phi}_{i,n}]$ given $\{ (X_i, Y_i, Z_i) \}_{i \leq n}$ the ``linearized percentile bootstrap distribution" of $\hat{\psi}_1^+$.  This equivalence follows from a general fact on the bootstrap distribution of an average.

\begin{lemma}
\label{lemma:bootstrap_critical_value}
Let $h_n \equiv h_n( \{ (X_i, Y_i, Z_i) \}_{i \leq n}) \in \R^n$ be a random vector satisfying $\E_n[\{ h_{i,n} - h_{\infty}(X_i, Y_i, Z_i) \}^2] = o_P(1)$ for some square-integrable function $h_{\infty}$.  Then for any $\alpha \in (0, 1)$, we have the expansion
\begin{align*}
    q_{1 - \alpha}( \E_n[ M_{i,n} h_{i,n}] ) = \E_n[ h_{i,n}]  + z_{1 - \alpha} \sqrt{\frac{\Var(h_{\infty}(X, Y, Z))}{n}} + o_P(1/\sqrt{n})
\end{align*}
\end{lemma}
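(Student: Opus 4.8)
The plan is to reduce the statement to a conditional central limit theorem for the multinomial (nonparametric) bootstrap of a triangular-array mean. Write $\mathcal{D}_n=\{(X_i,Y_i,Z_i)\}_{i\le n}$, abbreviate $W_i=(X_i,Y_i,Z_i)$, and let $q_{1-\alpha}(\cdot)$ denote the conditional $(1-\alpha)$-quantile given $\mathcal{D}_n$. Since $\E[M_{i,n}\mid\mathcal{D}_n]=1$, one may write $\E_n[M_{i,n}h_{i,n}]=\E_n[h_{i,n}]+(\E_n[M_{i,n}h_{i,n}]-\E_n[h_{i,n}])$ with the first summand $\mathcal{D}_n$-measurable, so
\[
q_{1-\alpha}(\E_n[M_{i,n}h_{i,n}])=\E_n[h_{i,n}]+\tfrac{1}{\sqrt n}\,q_{1-\alpha}\!\left(\sqrt n\,(\E_n[M_{i,n}h_{i,n}]-\E_n[h_{i,n}])\right),
\]
and it suffices to show $q_{1-\alpha}(\sqrt n\,(\E_n[M_{i,n}h_{i,n}]-\E_n[h_{i,n}]))\xrightarrow{P}z_{1-\alpha}\sqrt{\Var(h_\infty(X,Y,Z))}$.

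First I would record the bootstrap variance. Using $\Var(M_{i,n}\mid\mathcal{D}_n)=1-1/n$ and $\Cov(M_{i,n},M_{j,n}\mid\mathcal{D}_n)=-1/n$, a direct computation gives $\Var(\sqrt n\,(\E_n[M_{i,n}h_{i,n}]-\E_n[h_{i,n}])\mid\mathcal{D}_n)=\hat v_n:=\E_n[h_{i,n}^2]-(\E_n[h_{i,n}])^2$. Moreover $\hat v_n\xrightarrow{P}\Var(h_\infty)$: by Cauchy--Schwarz, $|\E_n[h_{i,n}^2]-\E_n[h_\infty(W_i)^2]|\le\E_n[(h_{i,n}-h_\infty(W_i))^2]^{1/2}\E_n[(h_{i,n}+h_\infty(W_i))^2]^{1/2}$, the first factor being $o_P(1)$ by hypothesis and the second $O_P(1)$ by the law of large numbers and the hypothesis, and similarly $|\E_n[h_{i,n}]-\E_n[h_\infty(W_i)]|=o_P(1)$, while $\E_n[h_\infty(W_i)]$ and $\E_n[h_\infty(W_i)^2]$ converge by the law of large numbers.

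The main step is a conditional central limit theorem: conditionally on $\mathcal{D}_n$ and in probability, $\sqrt n\,(\E_n[M_{i,n}h_{i,n}]-\E_n[h_{i,n}])\rightsquigarrow N(0,\Var(h_\infty))$. Writing $M_{i,n}=\sum_{j=1}^n\mathbb{I}\{I_j=i\}$ for i.i.d.\ uniform $I_1,\dots,I_n$ on $\{1,\dots,n\}$ exhibits $\E_n[M_{i,n}h_{i,n}]$ as an average of $n$ conditionally i.i.d.\ draws from the empirical measure of $(h_{1,n},\dots,h_{n,n})$, so the conditional Lindeberg--Feller theorem applies once one verifies $\E_n[(h_{i,n}-\bar h_n)^2\,\mathbb{I}\{|h_{i,n}-\bar h_n|>\epsilon\sqrt n\}]\xrightarrow{P}0$ for every $\epsilon>0$, where $\bar h_n=\E_n[h_{i,n}]$. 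I would prove this by combining $(h_{i,n}-\bar h_n)^2\le 2h_{i,n}^2+O_P(1)$, $h_{i,n}^2\le 2(h_{i,n}-h_\infty(W_i))^2+2h_\infty(W_i)^2$, and a matching splitting of the truncation event, reducing it to: (a) $\E_n[(h_{i,n}-h_\infty(W_i))^2]=o_P(1)$; (b) uniform integrability of $h_\infty^2$ (valid since $h_\infty\in L^2(P)$), which kills $\E_n[h_\infty(W_i)^2\mathbb{I}\{|h_\infty(W_i)|>T\}]$ for large $T$; and (c) $T^2\cdot n^{-1}\#\{i:|h_{i,n}-h_\infty(W_i)|>\epsilon\sqrt n/2\}=o_P(1)$, by Markov's inequality applied to $\E_n[(h_{i,n}-h_\infty(W_i))^2]$. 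A standard subsequence argument then upgrades these in-probability facts to conditional weak convergence in probability.

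It remains to convert conditional weak convergence into convergence of conditional quantiles. When $\Var(h_\infty)>0$, the limit law has a continuous strictly increasing c.d.f., so evaluating the (random) conditional c.d.f.\ at $z_{1-\alpha}\sqrt{\Var(h_\infty)}\pm\epsilon$ and invoking the conditional CLT shows its $(1-\alpha)$-quantile tends to $z_{1-\alpha}\sqrt{\Var(h_\infty)}$ in probability. When $\Var(h_\infty)=0$ we have $\hat v_n=o_P(1)$, so $\sqrt n\,(\E_n[M_{i,n}h_{i,n}]-\E_n[h_{i,n}])$ has conditional mean $0$ and conditional variance $o_P(1)$, and Cantelli's inequality forces its conditional $(1-\alpha)$-quantile to be $o_P(1)=z_{1-\alpha}\cdot 0+o_P(1)$; in either case the displayed limit holds, completing the proof. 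I expect the conditional Lindeberg verification to be the main obstacle: because $h_{i,n}$ is a triangular array controlled only in $L^2$, no off-the-shelf i.i.d.\ bootstrap CLT applies, and one must run the truncation argument above to transfer the uniform integrability of $h_\infty^2$ onto $h_{i,n}^2$.
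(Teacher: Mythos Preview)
Your proposal is correct and shares the paper's opening reduction: write $q_{1-\alpha}(\E_n[M_{i,n}h_{i,n}]) = \E_n[h_{i,n}] + n^{-1/2}q_{1-\alpha}(\sqrt{n}\,\E_n[(M_{i,n}-1)h_{i,n}])$ and then establish a conditional CLT for the centred bootstrap mean. The difference is in how that CLT is obtained. The paper invokes Theorem~15.4.3 of Lehmann and Romano's \emph{Testing Statistical Hypotheses}, which delivers the bootstrap CLT once one checks that the empirical distribution $\hat F_n$ of $(h_{1,n},\dots,h_{n,n})$ converges to the law of $h_\infty$ weakly and in the first two moments; the paper verifies the moment convergence essentially as you do and verifies weak convergence by a pointwise-CDF argument followed by a subsequence/diagonalization step. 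You instead run a direct Lindeberg--Feller verification, representing the multinomial bootstrap as $n$ conditionally i.i.d.\ draws from $\hat F_n$ and transferring the uniform integrability of $h_\infty^2$ to the triangular array through your splitting (a)--(c). Your route is more self-contained and handles the degenerate case $\Var(h_\infty)=0$ explicitly via Cantelli's inequality, whereas the paper's route is shorter by outsourcing the CLT to a citation but in exchange must argue weak convergence of $\hat F_n$. Both arrive at the same conclusion by essentially equivalent means.
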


\begin{proof}
Let $\hat{F}_n(t) = \E_n[ \mathbb{I} \{ h_{i,n} \leq t \}]$ and $F(t) = P( h_{\infty}(X, Y, Z) \leq t)$.  Since $q_{1 - \alpha}( \E_n[ M_{i,n} h_{i,n}]) = \E_n[ h_{i,n}] + q_{1 - \alpha}( \sqrt{n}(\E_n[ (M_{i,n} - 1) h_{i,n}]))/\sqrt{n}$, it suffices to prove that $q_{1 - \alpha}( \sqrt{n}( \E_n [ (M_{i,n} - 1)] h_{i,n}]))$ converges to $z_{1 - \alpha} \sqrt{\Var(h_{\infty}(X, Y, Z))}$.  This will follow from the in-probability version of Theorem 15.4.3 in \cite{tsh} as long as we can verify two conditions.

First, we must check that the first two moments of $\hat{F}_n$ converge to those of $F$.  The following calculation shows that this is the case:
    \begin{align*}
    \E_n[ h_{i,n} ] - \E[ h_{\infty}(X, Y, Z)] &= \underbrace{\E_n[ h_{i,n} - h_{\infty}(X_i, Y_i, Z_i)]}_{\leq \E_n[ \{ h_{i,n} - h_{\infty}(X_i, Y_i, Z_i) \}^2]^{1/2} = o_P(1)} + \underbrace{\E_n[ h_{\infty}(X_i, Y_i, Z_i)] - \E[ h_{\infty}(X_i, Y_i, Z_i)]}_{= o_P(1) \text{ by law of large numbers}}\\
    \E_n[h_{i,n}^2] - \E[ h_{\infty}(X, Y, Z)^2] &= \underbrace{\E_n[ \{ h_{i,n} - h_{\infty}(X_i, Y_i, Z_i) \} \{ h_{i,n} + h_{\infty}(X_i, Y_i, Z_i) \}]}_{\leq \E_n[\{ h_{i,n} - h_{\infty}(X_i, Y_i, Z_i) \}^2]^{1/2} ( \E_n[h_{i,n}^2] + \E_n[h_{\infty}(X_i, Y_i, Z_i)^2])^{1/2} = o_P(1) O_P(1)}\\
    &- \underbrace{\E_n[ h_{\infty}(X_i, Y_i, Z_i)^2] - \E[ h_{\infty}(X, Y, Z)^2]}_{= o_P(1) \text{ by law of large numbers}}
    \end{align*}
    
Next, we must check that $\rho( \hat{F}_n, F) \xrightarrow{P} 0$ for some distance $\rho$ metrizing weak convergence.  To prove this, let $t \in \R$ be any continuity point of $F$.  Let $\epsilon > 0$ be arbitrary.  Since $F$ is continuous at $t$, there exists $\delta > 0$ such that $|t' - t| < \delta$ implies $|F(t) - F(t')| < \epsilon$.  Thus, we may write:
    \begin{align*}
    \hat{F}_n(t) - F(t) &= \underbrace{\E_n[ \mathbb{I} \{ h_{i,n} \leq t \} - \mathbb{I} \{ h_{\infty}(X, Y, Z) \leq t + \delta \}]}_{\leq \E_n[ \mathbb{I} \{ | h_{i,n} - h_{\infty}(X_i, Y_i, Z_i)| > \delta \}] \leq \E_n[ \{ h_{i,n} - h_{\infty}(X_i, Y_i, Z_i) \}^2]/\delta^2 = o_P(1)} + \underbrace{\E_n[ \mathbb{I} \{ h_{\infty}(X_i, Y_i, Z_i) \leq t + \delta \}] - F(t + \delta)}_{= o_P(1) \text{ by law of large numbers}}\\
    &+ F(t + \delta) - F(t)\\
    &\leq o_P(1) + \epsilon.
    \end{align*}
    A completely analogous argument shows that $\hat{F}_n(t) - F(t) \geq o_P(1) - \epsilon$.  Since $\epsilon$ is arbitrary, we conclude that $| \hat{F}_n(t) - F(t)| = o_P(1)$.  In particular, for any subsequence $\{ n_k \}$ we may find a further subsequence $\{ n_{k_j} \}$ along which $\hat{F}_n(t) \xrightarrow{\text{a.s.}} F(t)$.  By a diagonalization argument, we may choose this sub-subsequence so that $\hat{F}_n(t) \xrightarrow{\text{a.s.}} F(t)$ for all $t$ in a countable, dense collection of $F$-continuity points $\{ t_j \}_{j \geq 1}$.  This implies $\hat{F}_n \overset{\text{a.s.}}{\rightsquigarrow} F$ along the sub-subsequence.  By \cite[Theorem 2.3.2]{durrett2010probability}, we may conclude $\rho( \hat{F}_n, F) = o_P(1)$ along the entire sequence $n = 1, 2, 3, \ldots$.  Since $q_{1 - \alpha}( \E_n[ M_{i,n} h_{i,n}]) = \E_n[ h_{i,n}] + q_{1 - \alpha}( \sqrt{n}(\E_n[ (M_{i,n} - 1) h_{i,n}]))/\sqrt{n}$, this completes the proof. 
\end{proof}

\begin{lemma}
\label{lemma:dvds_bootstrap_equivalence}
The assumptions of \Cref{theorem:valid_inference} (continuous case) imply $\hat{\psi}_1^+ + z_{1 - \alpha} \hat{\sigma}_{1,+} = q_{1 - \alpha}( \E_n[ M_{i,n} \hat{\phi}_{i,n}]) + o_P(1/\sqrt{n})$.
\end{lemma}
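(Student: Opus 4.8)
The plan is to invoke \Cref{lemma:bootstrap_critical_value} with $h_{i,n} = \hat{\phi}_{i,n} = \phi_1^+(X_i, Y_i, Z_i; \hat{\eta}^{(-k)})$ for $i \in \F_k$ and limiting function $h_\infty = \phi_1^+(\cdot, \cdot, \cdot; \bar{\eta})$, where $\bar{\eta} = (e, \bar{Q}_+, \bar{Q}_-, \bar{\varrho}_+, \bar{\varrho}_-)$ is the limiting nuisance vector of \Cref{lemma:influence_function_convergence}. First I would check that the continuous-case hypotheses of \Cref{theorem:valid_inference} are exactly the hypotheses of \Cref{lemma:influence_function_convergence}: they supply $\|\hat{e} - e\|_2 = o_P(1)$, $\|\hat{\rho}_\pm - \varrho_\pm(\cdot,\cdot;\hat{Q}_\pm)\|_2 = o_P(1)$ and $\|\hat{Q}_\pm - \bar{Q}_\pm\|_2 = o_P(1)$. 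Hence \Cref{lemma:influence_function_convergence} gives $\|\bar{Q}_\pm\|_2 < \infty$ and $\| \phi_1^+(\cdot; \hat{\eta}^{(-k)}) - \phi_1^+(\cdot; \bar{\eta}) \|_{2,k} = o_P(1)$ for each of the finitely many folds $k$; averaging the squared fold norms then yields $\E_n[\{ \hat{\phi}_{i,n} - \phi_1^+(X_i, Y_i, Z_i; \bar{\eta}) \}^2] = o_P(1)$. Moreover $\phi_1^+(\cdot; \bar{\eta})$ is square-integrable since $\E|Y|^q < \infty$ for some $q > 2$, $e$ is bounded away from $0$ and $1$, and the remaining components of $\bar{\eta}$ lie in $L^2$. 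This verifies the hypothesis of \Cref{lemma:bootstrap_critical_value}.

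Applying \Cref{lemma:bootstrap_critical_value} and using $\E_n[\hat{\phi}_{i,n}] = \hat{\psi}_1^+$, we obtain
\begin{align*}
q_{1 - \alpha}\big( \E_n[ M_{i,n} \hat{\phi}_{i,n}]\big) = \hat{\psi}_1^+ + z_{1 - \alpha} \sqrt{ \Var\big( \phi_1^+(X, Y, Z; \bar{\eta})\big) / n } + o_P(1/\sqrt{n}).
\end{align*}
It then remains to show that $\hat{\sigma}_{1,+} = \sqrt{ \Var( \phi_1^+(X, Y, Z; \bar{\eta}))/n } + o_P(1/\sqrt{n})$, which I would establish by mimicking the variance-consistency argument in the proof of \Cref{theorem:sharp_inference}. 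Writing $n \hat{\sigma}_{1,+}^2 = \tfrac{n}{n - 1} \sum_{k} \tfrac{|\F_k|}{n} \| \phi_1^+(\cdot; \hat{\eta}^{(-k)}) - \hat{\psi}_1^+ \|_{2,k}^2$, I decompose the centered influence function as $[\phi_1^+(\cdot; \hat{\eta}^{(-k)}) - \phi_1^+(\cdot; \bar{\eta})] + [\phi_1^+(\cdot; \bar{\eta}) - \bar{\psi}_1^+] + [\bar{\psi}_1^+ - \hat{\psi}_1^+]$, where $\bar{\psi}_1^+ := \E[\phi_1^+(X, Y, Z; \bar{\eta})]$. The first bracket has fold-$k$ $L^2$ norm $o_P(1)$ by \Cref{lemma:influence_function_convergence}; the third bracket is a data-measurable constant equal to $o_P(1)$ because $\hat{\psi}_1^+ = \E_n[\hat{\phi}_{i,n}] = \E_n[\phi_1^+(\cdot; \bar{\eta})] + o_P(1) = \bar{\psi}_1^+ + o_P(1)$ by \Cref{lemma:influence_function_convergence} and the law of large numbers; and the middle bracket has $\| \phi_1^+(\cdot; \bar{\eta}) - \bar{\psi}_1^+ \|_{2,k}^2 \to \Var(\phi_1^+(X, Y, Z; \bar{\eta}))$, again by the law of large numbers. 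The triangle inequality for $\| \cdot \|_{2,k}$ then gives $\| \phi_1^+(\cdot; \hat{\eta}^{(-k)}) - \hat{\psi}_1^+ \|_{2,k}^2 = \Var( \phi_1^+(X, Y, Z; \bar{\eta})) + o_P(1)$ for each $k$, and averaging over folds (with $\tfrac{|\F_k|}{n} \to 1/K$ and $\tfrac{n}{n-1} \to 1$) yields $n \hat{\sigma}_{1,+}^2 \xrightarrow{P} \Var( \phi_1^+(X, Y, Z; \bar{\eta}))$. Taking square roots (by continuity of $\sqrt{\cdot}$ at the nonnegative limit, including the degenerate case where the limit is $0$) gives $\sqrt{n}\, \hat{\sigma}_{1,+} = \sqrt{ \Var( \phi_1^+(X, Y, Z; \bar{\eta}))} + o_P(1)$; substituting into the displayed expansion finishes the proof.

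The only real subtlety — more a matter of care than a genuine obstacle — is that under the weak assumptions of \Cref{theorem:valid_inference} the DVDS estimator need not be consistent for $\psi_1^+$, so one must consistently center at the possibly-biased limit $\bar{\psi}_1^+$ and use the limiting variance $\Var(\phi_1^+(X, Y, Z; \bar{\eta}))$ throughout, rather than the oracle quantities $\psi_1^+$ and $\Var(\phi_1^+(X, Y, Z; \eta))$ that appear in the proof of \Cref{theorem:sharp_inference}. Everything else is routine bookkeeping: checking that the hypotheses feed into \Cref{lemma:influence_function_convergence,lemma:bootstrap_critical_value}, and repeating the standard sample-variance law-of-large-numbers argument fold by fold.
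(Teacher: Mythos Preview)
Your proposal is correct and follows essentially the same approach as the paper: apply \Cref{lemma:bootstrap_critical_value} with $h_\infty = \phi_1^+(\cdot;\bar\eta)$, verify its hypothesis via (the proof of) \Cref{lemma:influence_function_convergence}, and then show $n\hat\sigma_{1,+}^2 \xrightarrow{P} \Var(\phi_1^+(X,Y,Z;\bar\eta))$ by the same sample-variance argument used for \Cref{theorem:sharp_inference}. The paper handles the last step with a one-line reference to ``(the proof of) \Cref{theorem:sharp_inference},'' whereas you spell out the decomposition and correctly note that centering must be at $\bar\psi_1^+ := \E[\phi_1^+(X,Y,Z;\bar\eta)]$ rather than $\psi_1^+$; this extra care is appropriate but does not constitute a different method.
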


\begin{proof}
In the continuous case, the assumptions of \Cref{theorem:valid_inference} combined with (the proof of) \Cref{lemma:influence_function_convergence} imply that $\E_n[ \{ \hat{\phi}_{i,n} - \phi_1^+(X_i, Y_i, Z_i; \bar{\eta}) \}^2] = o_P(1)$.  Thus, we may apply \Cref{lemma:bootstrap_critical_value} with $h_{\infty}(x, y, z) = \phi_1^+(x, y, z; \bar{\eta})$ to conclude:
\begin{align*}
q_{1 - \alpha}( \E_n[ M_{i,n} \hat{\phi}_{i,n}]) &= \E_n[ \hat{\phi}_{i,n} ] + z_{1 - \alpha} \sqrt{\frac{\Var( \phi_1^+(X, Y, Z; \bar{\eta}))}{n}} + o_P(1/\sqrt{n})
\end{align*}
At the same time, (the proof of) \Cref{theorem:sharp_inference} implies that $n \hat{\sigma}_{1,+}^2 \xrightarrow{P} \Var( \phi(X, Y, Z; \bar{\eta}))$, so the Wald upper confidence limit has the same expansion as above.
\end{proof}

\subsubsection{Constructing a stochastically dominated oracle} \label{sec:stochastically_dominated_estimator}

The second step is to construct an ``oracle" estimator $\bar{\psi}_1^+$ whose linearized percentile bootstrap distribution is stochastically dominated by that of the DVDS estimator $\hat{\psi}_1^+$.  For each $i \in \F_k$, define $\hat{\ell}_{i,n}$ by:
\begin{align*}
\hat{\ell}_{i,n} &= \hat{\phi}_{i,n} - \frac{1 - e(X_i)}{e(X_i)} Z_i \{ Y_i - \hat{Q}_+(X_i, 1) \} ( \Lambda^{\sign \{ Y_i - \hat{Q}_+(X_i, 1) \}} - \Lambda^{\sign \{ Y_i - Q_+(X_i, 1) \}}).
\end{align*}
and set $\bar{\psi}_1^+ = \E_n[ \hat{\ell}_{i,n}]$.  This estimator depends on oracle knowledge of the true quantile $Q_+$ and thus cannot be computed by the statistician.  Nevertheless, it is useful to consider because of the following result:

\begin{lemma}
\label{lemma:bootstrap_stochastic_dominance}
The inequality $q_{1 - \alpha}( \E_n[ M_{i,n} \hat{\ell}_{i,n}]) \leq q_{1 - \alpha}( \E_n[ M_{i,n} \hat{\phi}_{i,n}])$ holds surely.
\end{lemma}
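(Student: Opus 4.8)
The plan is to deduce the quantile inequality from a pointwise inequality --- pointwise in $i$ and in the bootstrap randomness --- so that nothing probabilistic is needed beyond monotonicity of quantiles.

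First I would isolate the correction term $D_{i,n} := \hat\phi_{i,n} - \hat\ell_{i,n}$, which by the definition of $\hat\ell_{i,n}$ equals $\frac{1-e(X_i)}{e(X_i)}\,Z_i\,\{Y_i - \hat{Q}_+(X_i,1)\}\,(\Lambda^{\sign\{Y_i - \hat{Q}_+(X_i,1)\}} - \Lambda^{\sign\{Y_i - Q_+(X_i,1)\}})$, and show $D_{i,n}\ge 0$ for every $i$ on the probability-one event (\Cref{condition:primitives}) that $e(X_i)\in(0,1)$. This is exactly the sign-matching observation already used in the proof of \Cref{prop:ipw_valid}: since $\Lambda\ge 1$, one has $\{Y_i-\hat{Q}_+(X_i,1)\}\Lambda^{\sign\{Y_i-\hat{Q}_+(X_i,1)\}}\ge\{Y_i-\hat{Q}_+(X_i,1)\}\Lambda^{\sign\{Y_i-Q_+(X_i,1)\}}$ for any value of the true quantile $Q_+(X_i,1)$, because $\Lambda^{\sign(\cdot)}$ takes its larger value $\Lambda$ precisely where $Y_i-\hat{Q}_+(X_i,1)>0$ and its smaller value $\Lambda^{-1}$ precisely where $Y_i-\hat{Q}_+(X_i,1)<0$, with both sides vanishing when $Y_i=\hat{Q}_+(X_i,1)$. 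Multiplying through by the nonnegative factors $Z_i$ and $(1-e(X_i))/e(X_i)$ preserves the sign, so $\hat\ell_{i,n}\le\hat\phi_{i,n}$ for all $i\le n$.

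Next I would transfer this to the weighted bootstrap sums: conditionally on the data the multinomial counts satisfy $M_{i,n}\ge 0$, so on every realization of $(M_{1,n},\dots,M_{n,n})$ we have $\E_n[M_{i,n}\hat\ell_{i,n}]\le\E_n[M_{i,n}\hat\phi_{i,n}]$. I would then finish with the elementary fact that if $U\le V$ surely then $q_{1-\alpha}(U)\le q_{1-\alpha}(V)$ for every $\alpha$: indeed $\{V\le t\}\subseteq\{U\le t\}$, hence $\P(U\le t\mid\text{data})\ge\P(V\le t\mid\text{data})$, so the infimum defining $q_{1-\alpha}(U)$ ranges over a larger set of admissible thresholds than the one defining $q_{1-\alpha}(V)$. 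Applying this with $U=\E_n[M_{i,n}\hat\ell_{i,n}]$ and $V=\E_n[M_{i,n}\hat\phi_{i,n}]$ yields the claim. I do not expect any real obstacle here: the entire content of the lemma is the first-step sign-matching inequality, which is the same mechanism behind the single-validity guarantee of \Cref{prop:ipw_valid}, and the remaining steps are bookkeeping about quantiles.
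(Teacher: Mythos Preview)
Your proposal is correct and follows essentially the same approach as the paper: establish the pointwise inequality $\hat\ell_{i,n}\le\hat\phi_{i,n}$ via the sign-matching argument, carry it through the nonnegative multinomial weights to get $\E_n[M_{i,n}\hat\ell_{i,n}]\le\E_n[M_{i,n}\hat\phi_{i,n}]$ surely, and then invoke monotonicity of conditional quantiles under stochastic dominance. The paper's version is slightly more terse, but the content is identical.
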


\begin{proof}
The sign-matching argument from \Cref{sec:eform} implies that $\{ Y_i - \hat{Q}_+(X_i, 1) \} \Lambda^{\sign \{ Y_i - \hat{Q}_+(X_i, 1) \}}$ is always larger than $\{ Y_i - \hat{Q}_+(X_i, 1) \} \Lambda^{\sign \{ Y_i - Q_+(X_i, 1) \}}$.  Thus, the quantity subtracted from $\hat{\phi}_{i,n}$ in the definition of $\hat{\ell}_{i,n}$ is nonnegative and so $\E_n [M_{i,n} \hat{\ell}_{i,n}] \leq \E_n[ M_{i,n} \hat{\phi}_{i,n}]$ with probability one.  This implies that the distribution of $\E_n[ M_{i,n} \hat{\ell}_{i,n}]$ conditional on $\{ (X_i, Y_i, Z_i) \}_{i \leq n}$ is stochastically dominated by that of $\E_n [ M_{i,n} \hat{\phi}_{i,n}]$.  Stochastic dominance implies domination of all quantiles, so the conclusion follows.
\end{proof}

\subsubsection{Asymptotics of the oracle estimator in the continuous case} \label{sec:asymptotic_bootstrap_validity}

The third step is to study the behavior of the oracle estimator $\bar{\psi}_1^+$ and its linearized percentile bootstrap critical value.  In both of these results, the following function plays an important role:
\begin{align*}
\ell_{\infty}(x, y, z) &= zy + (1 - z) \bar{\varrho}_+ + \frac{1 - e(x)}{e(x)} z [ \bar{Q}_+(x, 1) + \Lambda^{\sign \{y - Q_+(x, 1) \}} \{ y - \bar{Q}_+(x, 1) \} - \bar{\varrho}_+(x, 1)]
\end{align*}
Here, we recall that $\bar{Q}_+$ is the limit of $\hat{Q}_+$ and $\bar{\varrho}_+$ is defined in \Cref{lemma:nuisance_facts}.  The quantity $\ell_{\infty}(x, y, z)$ is similar to $\phi_1^+(x, y, z; \bar{\eta})$ except the true quantile $Q_+$ is used in the ``sign" function.

\begin{lemma}
\label{lemma:psibar_bootstrap_equivalence}
Under the conditions of \Cref{theorem:valid_inference} (continuous case), the following holds for every $\alpha \in (0, 1)$:
\begin{align*}
q_{1 - \alpha}( \E_n[ M_{i,n} \hat{\ell}_{i,n}]) &= \bar{\psi}_1^+ + z_{1 - \alpha} \sqrt{\frac{\Var( \ell_{\infty}(X, Y, Z))}{n}} + o_P(n^{-1/2}).
\end{align*}
\end{lemma}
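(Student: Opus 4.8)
The plan is to derive the expansion directly from the general bootstrap fact proved in Lemma~\ref{lemma:bootstrap_critical_value}, applied with per-observation weights $h_{i,n} = \hat{\ell}_{i,n}$ and limiting function $h_\infty = \ell_\infty$. Since $\bar{\psi}_1^+ = \E_n[\hat{\ell}_{i,n}]$ by definition, once the hypotheses of that lemma are verified the conclusion is immediate. Thus the task reduces to checking that $\ell_\infty$ is square-integrable --- which follows from Condition~\ref{condition:primitives}, the overlap bound $e(X) \ge \epsilon$, and the finiteness of $\|\bar{Q}_\pm\|_2$ and hence (via Lemma~\ref{lemma:nuisance_facts}.\ref{item:kappa_lipschitz}) of $\|\bar{\varrho}_\pm\|_2$, both recorded in the proof of Lemma~\ref{lemma:influence_function_convergence} --- and that $\E_n[\{\hat{\ell}_{i,n} - \ell_\infty(X_i,Y_i,Z_i)\}^2] = o_P(1)$.

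For the latter, since there are only finitely many folds it suffices to establish $\|\hat{\ell}_{i,n} - \ell_\infty\|_{2,k} = o_P(1)$ for each $k$, where on $\F_k$ the quantity $\hat{\ell}_{i,n}$ is formed from the cross-fitted nuisances $\hat{\eta}^{(-k)}$. The first step is to expand $\hat{\ell}_{i,n}$ by substituting the definition of $\phi_1^+(\cdot;\hat{\eta}^{(-k)})$ from Section~\ref{sec:estimator} and collecting the two terms carrying a factor $\Lambda^{\sign\{Y - \hat{Q}_+^{(-k)}(X,1)\}}$. Doing so reveals that the correction subtracted in the definition of $\hat{\ell}_{i,n}$ is engineered precisely so as to replace $\Lambda^{\sign\{Y - \hat{Q}_+^{(-k)}(X,1)\}}$ by $\Lambda^{\sign\{Y - Q_+(X,1)\}}$ inside the inverse-propensity-weighted term, leaving behind only a residual summand proportional to $(1/\hat{e}^{(-k)}(X) - 1/e(X))\, Z\, \Lambda^{\sign\{Y - \hat{Q}_+^{(-k)}(X,1)\}}\{Y - \hat{Q}_+^{(-k)}(X,1)\}$.

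After this rewriting, $\hat{\ell}_{i,n} - \ell_\infty(X_i,Y_i,Z_i)$ splits into a fixed number of summands, each of one of two types. Type (a): a product of the nuisance difference $1/\hat{e}^{(-k)} - 1/e$, which vanishes in $L^2$, with a factor bounded in $L^q$ for some $q > 2$ --- such as $Z\hat{Q}_+^{(-k)}(X,1)$, $Z\hat{\rho}_+^{(-k)}(X,1)$, or $Z\,\Lambda^{\sign\{Y - \hat{Q}_+^{(-k)}(X,1)\}}\{Y - \hat{Q}_+^{(-k)}(X,1)\}$, all $O_P(1)$ in $L^q$ by Conditions~\ref{condition:primitives} and \ref{condition:first_stage} --- which is $o_P(1)$ in empirical $L^2$ norm by Lemma~\ref{lemma:nuisance_facts}.\ref{item:f_over_e} together with Markov's inequality conditional on $\F_{-k}$. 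Type (b): a summand dominated pointwise by a constant multiple of $|\hat{Q}_+^{(-k)}(X,1) - \bar{Q}_+(X,1)|$ or $|\hat{\rho}_+^{(-k)}(X,1) - \bar{\varrho}_+(X,1)|$ (using $e \ge \epsilon$, $\Lambda^{\sign(\cdot)} \le \Lambda$, and the $\Lambda$-Lipschitz continuity of $q \mapsto q + \tfrac{1}{1-\tau}\{y - q\}_+$), which is $o_P(1)$ by the rate assumptions of Theorem~\ref{theorem:valid_inference} and Lemma~\ref{lemma:nuisance_facts}.\ref{item:Q1_convergence}, \ref{item:kappa1_convergence}, and \ref{item:kappa_lipschitz}. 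Here we use that Theorem~\ref{theorem:valid_inference} (continuous case) provides $\|\hat{e}^{(-k)} - e\|_2 = o_P(1)$, $\|\hat{\rho}_\pm^{(-k)} - \varrho_\pm(\cdot,\cdot;\hat{Q}_\pm^{(-k)})\|_2 = o_P(1)$, and $\|\hat{Q}_\pm^{(-k)} - \bar{Q}_\pm\|_2 = o_P(1)$, the last of which combined with Lemma~\ref{lemma:nuisance_facts}.\ref{item:kappa_lipschitz} also yields $\|\hat{\rho}_\pm^{(-k)} - \bar{\varrho}_\pm\|_2 = o_P(1)$.

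The step I expect to be most delicate is the first expansion: verifying that the correction subtracted in $\hat{\ell}_{i,n}$ interacts with the $\hat{e}^{(-k)}$ appearing in $\phi_1^+$ in such a way that the surviving $\Lambda^{\sign\{Y - \hat{Q}_+^{(-k)}(X,1)\}}$ factor multiplies the small quantity $1/\hat{e}^{(-k)} - 1/e$ rather than an $O_P(1)$ weight; everything downstream is the same bookkeeping already carried out in the proof of Lemma~\ref{lemma:influence_function_convergence}. Feeding $h_{i,n} = \hat{\ell}_{i,n}$ and $h_\infty = \ell_\infty$ into Lemma~\ref{lemma:bootstrap_critical_value} then yields $q_{1-\alpha}(\E_n[M_{i,n}\hat{\ell}_{i,n}]) = \E_n[\hat{\ell}_{i,n}] + z_{1-\alpha}\sqrt{\Var(\ell_\infty(X,Y,Z))/n} + o_P(n^{-1/2})$, which is the claimed identity.
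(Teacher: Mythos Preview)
Your proposal is correct and follows essentially the same route as the paper: invoke \Cref{lemma:bootstrap_critical_value} with $h_{i,n}=\hat{\ell}_{i,n}$ and $h_\infty=\ell_\infty$, then verify $\|\hat{\ell}_{i,n}-\ell_\infty\|_{2,k}=o_P(1)$ fold-by-fold via a term-by-term decomposition controlled by \Cref{lemma:nuisance_facts}. Your preliminary algebraic observation---that the correction subtracted in $\hat{\ell}_{i,n}$ is designed so the surviving $\Lambda^{\sign\{Y-\hat Q_+\}}$ factor multiplies only $(1/\hat e-1/e)$---is a clean way to organize the telescoping that the paper carries out more explicitly in six displayed pieces, but the underlying argument is the same.
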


\begin{proof}
By \Cref{lemma:bootstrap_critical_value}, it suffices to prove that $\E_n[ \{ \hat{\ell}_{i,n} - \phi_1^+(X_i, Y_i, Z_i; \bar{\eta}) \}^2] = o_P(1)$.  We do this fold-by-fold, suppressing the $(\cdot)^{(-k)}$ superscripts for readability:
\begin{align*}
&|| \hat{\ell}_{i,n} - \ell_{\infty}(x, y, z) ||_{2,k}\\
&\leq
\underbrace{|| (1 - z) \{ \hat{\rho}_+(x, 1) - \bar{\varrho}_+(x, 1) \} ||_{2,k}}_{= o_P(1) \text{ by \Cref{lemma:nuisance_facts}.\ref{item:kappa_lipschitz}}} + \underbrace{\left| \left| \frac{1 - \hat{e}(x)}{\hat{e}(x)} z [\hat{\rho}_+(x, 1) - \bar{\varrho}_+(x, 1)] \right| \right|_{2,k}}_{\precsim_P || \hat{\rho}_+(x, 1) - \bar{\varrho}_+(x, 1) ||_2 / \epsilon = o_P(1) \text{ by \Cref{lemma:nuisance_facts}.\ref{item:kappa_lipschitz}}} \\
&+ \underbrace{\left| \left| \frac{1 - \hat{e}(x)}{\hat{e}(x)} [ \hat{Q}_+(x, 1) + \Lambda^{\sign \{ y - \hat{Q}_+(x, 1) \}} \{ y - \hat{Q}_+(x, 1) \} - \bar{Q}_+(x, 1) + \Lambda^{\sign \{ y - \hat{Q}_+(x, 1) \}} \{ y - \bar{Q}_+(x, 1) \} ] \right| \right|_{2,k}}_{\precsim_P \Lambda || \hat{Q}_+(x, 1) - \bar{Q}_+(x, 1) ||_2 / \epsilon = o_P(1) \text{ as $q \mapsto q + \Lambda^{\sign \{ y - \hat{Q}_+(x, 1) \}} \{ y - q \}$ is $\Lambda$-Lipschitz}}\\
&+ \underbrace{\left| \left| \left( \frac{1 - \hat{e}(x)}{\hat{e}(x)} - \frac{1 - e(x)}{e(x)} \right) z [ \bar{Q}_+(x, 1) + \Lambda^{\sign \{ y - \hat{Q}_+(x, 1) \}}\{ y - \bar{Q}_+(x, 1) \}] \right| \right|_{2,k}}_{\precsim || \{ 1/\hat{e}(x) - 1/e(x) \} \{ | \bar{Q}_+(x, 1)| + \Lambda |y - \bar{Q}_+(x, 1)| \} ||_{2,k} = o_P(1) \text{ by Markov + \Cref{lemma:nuisance_facts}.\ref{item:f_over_e}}}\\
&+ \underbrace{\left| \left| \frac{1 - e(x)}{e(x)} \Lambda^{\sign \{ y - Q_+(x, 1) \}} \{ \hat{Q}_+(x, 1) - \bar{Q}_+(x, 1) \} \right| \right|_{2,k}}_{\precsim_P || \hat{Q}_+(x, 1) - \bar{Q}_+(x, 1) ||_2 / \epsilon = o_P(1)}\\
&+ \underbrace{\left| \left| \left( \frac{1 - \hat{e}(x)}{\hat{e}(x)} - \frac{1 - e(x)}{e(x)} \right) \bar{\varrho}_+(x, 1) \right| \right|_{2,k}}_{\precsim_P || \bar{\varrho}_+(x, 1) \{ 1/\hat{e}(x) - 1/e(x) \} ||_2 = o_P(1) \text{ by \Cref{lemma:nuisance_facts}.\ref{item:f_over_e}}}\\
&\xrightarrow{P} 0.
\end{align*}
Combining over the $K$ folds and applying \Cref{lemma:bootstrap_critical_value} gives the conclusion.
\end{proof}

\begin{lemma}
\label{lemma:psibar_asymptotically_linear}
The conditions of \Cref{theorem:valid_inference} (continuous case) imply $\sqrt{n}( \bar{\psi}_1^+ - \psi_1^+) \rightsquigarrow N(0, \Var(\ell_{\infty}(X, Y, Z)))$.
\end{lemma}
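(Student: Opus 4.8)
The plan is to show that $\bar\psi_1^+ = \E_n[\hat\ell_{i,n}]$ is asymptotically linear with influence function $\ell_\infty - \psi_1^+$; the stated central limit theorem then follows from the ordinary CLT once we verify that $\ell_\infty$ has mean $\psi_1^+$ and finite variance. First I would check $\E[\ell_\infty(X,Y,Z)] = \psi_1^+$. Writing $\E[\ell_\infty]$ out and using iterated expectations, the $(1-Z)\bar\varrho_+(X,1)$ term cancels the $-\bar\varrho_+(X,1)$ inside the weighting correction, leaving $\E[ZY] + \E[(1-e(X))\,\E[\bar Q_+(X,1) + \Lambda^{\sign\{Y - Q_+(X,1)\}}(Y - \bar Q_+(X,1)) \mid X, Z=1]]$. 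The key observation is that, because $Y$ is continuous and $\tau = \Lambda/(\Lambda+1)$, $\E[\Lambda^{\sign\{Y - Q_+(X,1)\}} \mid X, Z=1] = \Lambda(1-\tau) + \Lambda^{-1}\tau = 1$; hence the inner conditional expectation does not depend on $\bar Q_+$ and, by \Cref{lemma:transformed_outcome_identity}, equals $\E[\Lambda^{\sign\{Y-Q_+(X,1)\}}Y \mid X, Z=1] = \rho_+(X,1)$. Thus $\E[\ell_\infty] = \E[ZY] + \E[(1-e(X))\rho_+(X,1)] = \E[ZY + (1-Z)\rho_+(X,1)] = \psi_1^+$ by \Cref{prop:dm}. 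Finiteness of $\Var(\ell_\infty)$ follows from \Cref{condition:primitives}, the overlap bound, and the square-integrability of $\bar Q_+$ and $\bar\varrho_+$ recorded in \Cref{lemma:influence_function_convergence}.

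Next I would write $\sqrt n(\bar\psi_1^+ - \psi_1^+) = \sqrt n\,\E_n[\ell_\infty(X_i,Y_i,Z_i) - \psi_1^+] + \sqrt n\,\E_n[\hat\ell_{i,n} - \ell_\infty(X_i,Y_i,Z_i)]$. The first term converges weakly to $N(0,\Var(\ell_\infty))$ by the CLT, so it remains to show the second term is $o_P(1)$. Working one fold at a time and conditioning on $\F_{-k}$, the fluctuation of the fold average about its conditional mean is $O_P(\|\hat\ell_{i,n} - \ell_\infty\|_{2,k}/\sqrt{|\F_k|}) = o_P(n^{-1/2})$, using $\|\hat\ell_{i,n} - \ell_\infty\|_{2,k} = o_P(1)$ --- precisely the bound established in the proof of \Cref{lemma:psibar_bootstrap_equivalence}. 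The substance is to control the conditional bias $\E[\hat\ell_{i,n} - \ell_\infty(X,Y,Z) \mid \F_{-k}]$. Here I would decompose $\hat\ell_{i,n} - \ell_\infty$ as the difference of weighting corrections --- replacing $\hat\rho_+$ by $\bar\varrho_+$, $\hat e$ by $e$, and $\hat Q_+$ by $\bar Q_+$ while keeping the sign pinned to the true quantile $Q_+$ --- plus the explicit correction term subtracted in the definition of $\hat\ell_{i,n}$. Taking conditional expectations, two cancellations occur: (i) the $\bar\varrho_+$ contributions from the $(1-Z)$-piece and the $Z/\hat e$-piece cancel because $\E[(1-Z)g(X) \mid X] = \E[\tfrac{1-e(X)}{e(X)}Z\,g(X) \mid X]$; and (ii) the $\hat Q_+$- and $\bar Q_+$-dependence of the conditional means drops out, since $\E[Q'(X,1) + \Lambda^{\sign\{Y - Q_+(X,1)\}}(Y - Q'(X,1)) \mid X, Z=1] = \rho_+(X,1)$ for \emph{every} $Q'$, by the same identity as in the first step. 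Using $(\tfrac{1-\hat e}{\hat e} - \tfrac{1-e}{e})e = \tfrac{e - \hat e}{\hat e}$ together with \Cref{lemma:transformed_outcome_identity} to rewrite the surviving pieces, one is left with $\E[\hat\ell_{i,n} - \ell_\infty \mid \F_{-k}] = \E[\tfrac{\hat e(X) - e(X)}{\hat e(X)}(\hat\rho_+(X,1) - \varrho_+(X,1;\hat Q_+)) \mid \F_{-k}]$, which by Cauchy--Schwarz and the overlap bound is $O_P(\|\hat e - e\|_2 \cdot \|\hat\rho_\pm - \varrho_\pm(\cdot,\cdot;\hat Q_\pm)\|_2) = o_P(n^{-1/2})$ under the assumptions of \Cref{theorem:valid_inference} (continuous case). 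Summing over the finitely many folds gives the claim.

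The main obstacle is the bookkeeping of the several terms in $\hat\ell_{i,n} - \ell_\infty$ and recognizing the two cancellations --- in particular, seeing that pinning the sign argument to the true quantile $Q_+$ (which is exactly what the correction term subtracted in $\hat\ell_{i,n}$ accomplishes) makes the quantile direction Neyman orthogonal via $\E[\Lambda^{\sign\{Y - Q_+(X,1)\}} \mid X, Z=1] = 1$. This is the only place continuity of $Y$ is used, and it is what collapses the conditional bias to the familiar product of a propensity-score error and a transformed-outcome-regression error, so that no rate condition on $\hat Q_+$ is needed.
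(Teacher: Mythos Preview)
Your proposal is correct and follows essentially the same approach as the paper: the same identity $\E[\Lambda^{\sign\{Y-Q_+(X,1)\}}\mid X,Z=1]=1$ is the key cancellation, and the conditional bias reduces to the same product-of-errors term that the paper isolates as its fourth piece. The only difference is organizational---you compute $\E[\ell_\infty]=\psi_1^+$ first rather than last, and you handle the fluctuation in one shot by citing the $L^2$ bound from \Cref{lemma:psibar_bootstrap_equivalence} instead of re-deriving it term-by-term via conditional Chebyshev.
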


\begin{proof}
We begin by proving that $\E_{n,k}[ \hat{\ell}_{i,n}] = \E_{n,k}[ \ell_{\infty}(X, Y, Z)] + o_P(n^{-1/2})$ for any $k \in \{ 1, \ldots, K \}$.  We prove this fold-by-fold, dropping the $(\cdot)^{(-k)}$ superscripts for readability.  We use the following decomposition:
\begin{align}
&\E_{n,k}[ \hat{\ell}_{i,n}] - \E_{n,k}[ \ell_{\infty}(X, Y, Z)] \nonumber \\
&= \E_{n,k} \left[ \left( \frac{1 - \hat{e}(X)}{\hat{e}(X)} - \frac{1 - e(X)}{e(X)} \right) Z [ \hat{Q}_+(X, 1) + \Lambda^{\sign \{ Y - \hat{Q}_+(X, 1) \}} \{ Y - \hat{Q}_+(X, 1) \} - \varrho_+(X, 1; \hat{Q}_+)] \right] \label{eq:lowerbound_change_e}\\
&+ \E_{n,k} \left[  \frac{1 - e(X)}{e(X)} [ \hat{Q}_+(X, 1) - \bar{Q}_+(X, 1)] Z (1 - \Lambda^{\sign \{ Y - Q_+(X, 1) \}}) \right] \label{eq:lowerbound_change_Q}\\
&+ \E_{n,k} \left[ \left( \frac{1 - e(X)}{e(X)} Z - [1 - Z] \right) \{ \bar{\varrho}_+(X, 1) - \hat{\rho}_+(X, 1) \} \right] \label{eq:lowerbound_change_kappa}\\
&+ \E_{n,k} \left[ \left( \frac{1 - \hat{e}(X)}{\hat{e}(X)} - \frac{1 - e(X)}{e(X)} \right) Z \{ \varrho_+(X, 1; \hat{Q}_+) - \hat{\rho}_+(X, 1) \} \right]  \label{eq:lowerbound_error_product}
\end{align}
The terms (\ref{eq:lowerbound_change_e}), (\ref{eq:lowerbound_change_Q}), (\ref{eq:lowerbound_change_kappa}) are all $o_P(n^{-1/2})$ by the same conditional Chebyshev argument used in the proof of \Cref{theorem:double_valid_double_sharp}:
\begin{align*}
|(\ref{eq:lowerbound_change_e})| &\precsim_P \bigg| \E \bigg[ \frac{e(X) - \hat{e}(X)}{e(X)} \underbrace{\E \bigg[ \hat{Q}_+(X, 1) + \Lambda^{\sign \{ Y - \hat{Q}_+(X, 1) \}} \{ Y - \hat{Q}_+(X, 1) \} - \varrho_+(X, 1; \hat{Q}_+) \, \bigg| \, X, Z = 1, \F_{-k} \bigg]}_{= 0 \text{ by \Cref{lemma:transformed_outcome_identity} + def. of $\varrho_+(\cdot, \cdot; \hat{Q}_+)$}} \, \bigg| \, \F_{-k} \bigg] \bigg| \\
&+ \frac{1}{\sqrt{| \F_k|}} \underbrace{\E \left[ \left( \frac{1}{\hat{e}(X)} - \frac{1}{e(X)} \right)^2 Z [\hat{Q}_+(X, 1) + \Lambda^{\sign \{ Y - \hat{Q}_+(X, 1) \}} \{ Y - \hat{Q}_+(X, 1) \} - \varrho_+(X, 1; \hat{Q}_+) ]^2 \, \bigg| \, \F_{-k} \right]^{1/2}}_{= o_P(1) \text{ by \Cref{lemma:nuisance_facts}.\ref{item:f_over_e}}}\\
&= o_P(n^{-1/2})\\
| (\ref{eq:lowerbound_change_Q})| &\precsim_P \bigg|  \E \bigg[ \frac{1 - e(X)}{e(X)} [ \hat{Q}_+(X, 1) - \bar{Q}_+(X, 1)] e(X) \underbrace{\E[ 1 - \Lambda^{\sign \{ Y - Q_+(X, 1) \}} \mid X, Z = 1]}_{= 1 - \Lambda (1 - \tau) - \Lambda^{-1} \tau = 0} \, \bigg| \, \F_{-k} \bigg] \bigg| \\
&+ \frac{1}{\sqrt{| \F_k|}} \underbrace{\E \left[ \left( \frac{1 - e(X)}{e(X)} [ \hat{Q}_+(X, 1) - \bar{Q}_+(X, 1)] Z (1 - \Lambda^{\sign \{ Y - Q_+(X, 1) \}}) \right)^2 \, \bigg| \, \F_{-k} \right]^{1/2}}_{\precsim_P (\Lambda/\epsilon) || \hat{Q}_+(x, 1) - \bar{Q}_+(x, 1) ||_2 = o_P(1)}\\
&= o_P(n^{-1/2})\\
|(\ref{eq:lowerbound_change_kappa})| &\precsim_P \bigg| \E \bigg[ \{ \bar{\varrho}_+(X, 1) - \hat{\rho}_+(X, 1) \} \underbrace{\E \bigg[ \frac{1 - e(X)}{e(X)} Z - [1 - Z] \, \bigg| \, X \bigg]}_{= 0} \, \bigg| \, \F_{-k} \bigg] \bigg|\\
&+ \frac{1}{\sqrt{|\F_k|}} \underbrace{\E \left[ \left( \frac{1 - e(X)}{e(X)} Z - [1 - Z] \right)^2 \{ \bar{\varrho}_+(X, 1) - \varrho_+(X, 1; \hat{Q}_+) \}^2 \, \bigg| \, \F_{-k} \right]^{1/2}}_{\precsim_P \Lambda || \bar{\varrho}_+(x, 1) - \varrho_+(\cdot, \cdot; \hat{Q}_+) ||_2 / \epsilon = o_P(1) \text{ by \Cref{lemma:nuisance_facts}.\ref{item:kappa_lipschitz}}}\\
&= o_P(n^{-1/2}).
\end{align*}
Meanwhile, the term (\ref{eq:lowerbound_error_product}) is $o_P(n^{-1/2})$ by the Cauchy-Schwarz inequality and our rate assumption.  

Now, we combine this conclusion over folds to obtain:
\begin{align*}
\sqrt{n}( \bar{\psi}_1^+ - \E[ \ell_{\infty}(X, Y, Z)] &= \sqrt{n}( \E_n[ \hat{\ell}_{i,n}] - \E[ \ell_{\infty}(X, Y, Z)])\\
&= \sqrt{n}( \E_n[ \ell_{\infty}(X, Y, Z)] - \E[ \ell_{\infty}(X, Y, Z)]) + o_P(n^{-1/2})\\
&\rightsquigarrow N(0, \Var( \ell_{\infty}(X, Y, Z))).
\end{align*}
It only remains to show that $\E[ \ell_{\infty}(X, Y, Z)] = \psi_1^+$.  This is done in the following calculation:
\begin{align*}
\E[ \ell_{\infty}(X, Y, Z)] &= \E \left[ ZY + \frac{1 - e(X)}{e(X)} Z [ \bar{Q}_+(X, 1) + \Lambda^{\sign \{ Y - Q_+(X, 1) \}} \{ Y - \bar{Q}_+(X, 1) \} \right]\\
&+ \underbrace{\E \left[ (1 - Z) \bar{\varrho}_+(X, 1) - \frac{1 - e(X)}{e(X)} Z \bar{\varrho}_+(X, 1) \right]}_{= 0}\\
&= \E \left[ ZY\right] + \E [ \{ 1 - e(X) \} \E[ \bar{Q}_+(X, 1) + \Lambda^{\sign \{ Y - Q_+(X, 1) \}} \{ Y - \bar{Q}_+(X, 1) \} \mid X, Z = 1] ]\\
&= \E[ZY] + \E[ \{ 1 - e(X) \} \E[ \Lambda^{\sign \{ Y - Q_+(X, 1) \}} Y \mid X, Z = 1]]\\
&+ \E[ \{ 1 - e(X) \} \bar{Q}_+(X, 1) \underbrace{\E[ 1 - \Lambda^{\sign \{ Y - Q_+(X, 1) \}} \mid X, Z = 1]}_{= 1 - \Lambda (1 - \tau) - \tau \Lambda^{-1} = 0}]\\
&= \E[ ZY] + \E[ \{ 1 - e(X) \} \E[ \Lambda^{\sign \{ Y - Q_+(X, 1) \}} Y \mid X, Z = 1]]\\
&+ \E[ \{ 1 - e(X) \} Q_+(X, 1)(1 - \Lambda^{\sign \{ Y - Q_+(X, 1) \}} \mid X, Z = 1]\\
&= \E[ZY] + \E[ \{ 1 - e(X) \} \underbrace{\E[ Q_+(X, 1) + \Lambda^{\sign \{ Y - Q_+(X, 1) \}} \{ Y - Q_+(X, 1) \} \mid X, Z = 1]}_{= \rho_+(X, 1) \text{ by \Cref{lemma:transformed_outcome_identity}}}]\\
&= \E[ZY + (1 - Z) \rho_+(X, 1)]\\
&= \psi_1^+.
\end{align*}
\end{proof}

\subsubsection{Putting things together for the continuous case} \label{sec:valid_inference_summary}

\begin{proof}
Now, we combine the results of the previous steps to prove \Cref{theorem:valid_inference} in the continuous case.  
\begin{align*}
\P( \psi_1^+ \leq \hat{\psi}_1^+ + z_{1 - \alpha}\hat{\sigma}_{1,+}) &= \P( \psi_1^+ \leq q_{1 - \alpha}( \E_n[ M_{i,n} \hat{\phi}_{i,n}]) + o_P(1/\sqrt{n})) &\text{\Cref{lemma:dvds_bootstrap_equivalence}}\\
&\leq \P( \psi_1^+ \leq q_{1 - \alpha}( \E_n[ M_{i,n} \hat{\ell}_{i,n}]) + o_P(1/\sqrt{n})) &\text{\Cref{lemma:bootstrap_stochastic_dominance}}\\
&= \P \left( \psi_1^+ \leq \bar{\psi}_1^+ + z_{1 - \alpha}\sqrt{\frac{\Var( \ell_{\infty}(X, Y, Z))}{n}} + o_P(1/\sqrt{n}) \right) &\text{\Cref{lemma:psibar_bootstrap_equivalence}}\\
&= \P( \sqrt{n}( \psi_1^+ - \bar{\psi}_1^+) \leq z_{1 - \alpha} \sqrt{\Var(\ell_{\infty}(X, Y, Z))} + o_P(1))\\
&\rightarrow 1 - \alpha &\text{\Cref{lemma:psibar_asymptotically_linear} + Slutsky}.
\end{align*}
In the last step, the assumptions in Condition \ref{condition:density} suffice to ensure $\Var( \ell_{\infty}(X, Y, Z)) > 0$ so that Slutsky's theorem may be applied.  Thus, $\liminf \P( \psi_1^+ \leq \hat{\psi}_1^+ + z_{1 - \alpha} \hat{\sigma}_{1,+}) \geq 1 - \alpha$.

We briefly mention why this guarantee holds uniformly over contiguous neighborhoods of $P$.  When $|| \bar{Q}_+(x, 1) - Q_+(x, 1) ||_2 \neq 0$, this is not surprising.  The proof of \Cref{theorem:double_valid_double_sharp} shows that $\hat{\psi}_1^+$ converges to $\E[ZY + (1 - Z) \bar{\varrho}_+(X, 1)]$ where $\bar{\varrho}_+$ is defined in \Cref{lemma:nuisance_facts}.  Since $Q_+(x, 1)$ is the unique minimizer of the map $q \mapsto \int q + \{ y - q \}_+ \, \d F(y \mid x, 1)$ and $\bar{Q}_+(x, 1)$ differs from $Q_+(x, 1)$ on a set of positive probability, we must have $\bar{\varrho}_+(X, 1) > \rho_+(X, 1)$ with positive probability.  Hence, $\E[ZY + (1 - Z) \bar{\varrho}_+(X, 1)] > \E[ZY + (1 - Z) \rho_+(X, 1)] = \psi_1^+$ so $\hat{\psi}_1^+$ converges to a limit that is strictly too large.  As a consequence, the coverage of the Wald upper confidence bound tends to one under $P$.  By the definition of contiguity, the same holds under any sequence $\{ P_n^n \}$ that is contiguous to $\{ P^n \}$.  

The case where $\bar{Q}_+ = Q_+$ but $|| \hat{Q}_+ - Q_+ ||_2 \neq o_P(n^{-1/4})$ is more subtle.  The key insight is that $\ell_{\infty}(x, y, z) = \phi_1^+(x, y, z; \eta)$ when $\bar{Q}_+ = Q_+$, so the lower bound $\bar{\psi}_1^+$ is semiparametrically efficient.  Therefore, the convergence in the final line of the preceding display is locally uniform by the regularity of $\bar{\psi}_1^+$.
\end{proof}

\subsubsection{Proof for the binary case} \label{sec:valid_inference_binary_proof}

\begin{proof}
In the binary case, the proof is considerably simpler because \Cref{lemma:nuisance_facts}.\ref{item:mu_to_Q} still assures quantile consistency, albeit without any rate.  In particular, the proof of \Cref{theorem:sharp_inference} still applies and guarantees that $n \hat{\sigma}_{1,+}$ consistently estimates $\Var( \phi_1^+(X, Y, Z; \eta))$.

To analyze the behavior of the estimator itself, start from the decomposition from \Cref{sec:asymptotic_linearity_continuous}:
\begin{align*}
\E_{n,k}[ \phi_1^+(X, Y, Z; \hat{\eta})] - \E_{n,k}[\phi_1^+(X, Y, Z; \eta)] &= (\ref{eq:change_q}) + (\ref{eq:change_e}) + (\ref{eq:change_kappa}) + (\ref{eq:error_product}).
\end{align*}
The arguments used in \Cref{sec:asymptotic_linearity_binary} to show that (\ref{eq:change_e}), (\ref{eq:change_kappa}), (\ref{eq:error_product}) are $o_P(n^{-1/2})$ did not rely on the $L^{\infty}$ rate condition, so they continue to apply under the weaker assumptions of \Cref{theorem:valid_inference}.  

Meanwhile, we will employ a version of the sign-matching argument from \Cref{sec:eform} to show that the term $(\ref{eq:change_q})$ is \emph{greater} than a quantity that is $o_P(n^{-1/2})$.  Define the function $S(x, y)$ as follows:
\begin{align*}
S(x, y) &= 
\left\{ 
\begin{array}{ll}
1 &\text{if } y > Q_+(x, 1)\\
\frac{1}{\log \Lambda} \left[ \log \left( \frac{1 - \mu(x, 1)/\Lambda)}{\log \mu} \right) \mathbb{I} \{ \mu(x, 1) > 1 - \tau \} + \log \left( \frac{1 - \Lambda \mu(x, 1)}{1 - \mu(x, 1)}\right) \mathbb{I} \{ \mu(x, 1) \leq 1 - \tau \} \right) &\text{if } y = Q_+(x, 1)\\
-1 &\text{if } y < Q_+(x, 1)
\end{array}
\right.
\end{align*}
The weight on the boundary is chosen so that $S(x, y) \in [-1, 1]$ and $\E[ 1 - \Lambda^{S(X, Y)} \mid X, Z = 1] = 0$.  Using this function, we may use \Cref{lemma:transformed_outcome_identity} to rewrite (\ref{eq:change_q}) as follows:
\begin{align*}
(\ref{eq:change_q}) &= \E_{n,k} \left[ \frac{1 - \hat{e}(X)}{\hat{e}(X)} Z (1 - \Lambda^{-1}) [ \hat{Q}_+(X, 1) + \Lambda^{\sign \{ Y - \hat{Q}_+(X, 1) \}}\{ Y - \hat{Q}_+(X, 1) \} \right]\\
&- \E_{n,k} \left[ \frac{1 - \hat{e}(X)}{\hat{e}(X)} Z (1 - \Lambda^{-1})[ Q_+(X, 1) + \Lambda^{\sign \{ Y - Q_+(X, 1) \}} \{ Y - Q_+(X, 1) \}] \right]\\
&\geq \E_{n,k} \left[ \frac{1 - \hat{e}(X)}{\hat{e}(X)} Z (1 - \Lambda^{-1}) [ \hat{Q}_+(X, 1) + \Lambda^{S(X, Y)} \{ Y - \hat{Q}_+(X, 1) \} \right]\\
&- \E_{n,k} \left[ \frac{1 - \hat{e}(X)}{\hat{e}(X)} Z (1 - \Lambda^{-1}) [ Q_+(X, 1) + \Lambda^{S(X, Y)} \{ Y - Q_+(X, 1) \}] \right]\\
&= \E_{n,k} \left[ \frac{1 - \hat{e}(X)}{\hat{e}(X)} Z (1 - \Lambda^{-1}) [ \hat{Q}_+(X, 1) - Q_+(X, 1)] [1 - \Lambda^{S(X, Y)}] \right].
\end{align*}
The lower bound in the final line of the preceding display is $o_P(n^{-1/2})$, which can be seen by applying Chebyshev's inequality conditionally on $\F_{-k}$.
\begin{align*}
&\E_{n,k} \left[ \frac{1 - \hat{e}(X)}{\hat{e}(X)} Z(1 - \Lambda^{-1}) [ \hat{Q}_+(X, 1) - Q_+(X, 1)] [1 - \Lambda^{S(X, Y)}] \right]\\
&\precsim_P \bigg| \E \bigg[ \frac{1 - \hat{e}(X)}{\hat{e}(X)} e(X) (1 - \Lambda^{-1}) [ \hat{Q}_+(X, 1) - Q_+(X, 1)] \underbrace{\E[ 1 - \Lambda^{S(X, Y)} \mid X, Z = 1]}_{= 0} \, \bigg| \, \F_{-k} \bigg] \bigg|\\
&+ \frac{1}{\sqrt{| \F_k|}} \underbrace{\E \left[ \left( \frac{1 - \hat{e}(X)}{\hat{e}(X)} Z (1 - \Lambda^{-1}) [ \hat{Q}_+(X, 1) - Q_+(X, 1)] [ 1 - \Lambda^{S(X, Y)}] \right)^2 \right]^{1/2}}_{\precsim (\Lambda/\epsilon) || \hat{Q}_+ - Q_+ ||_2 = o_P(1) \text{ by \Cref{lemma:nuisance_facts}.\ref{item:mu_to_Q}}}\\
&= o_P(n^{-1/2}).
\end{align*}
Thus we have shown $\E_{n,k}[ \phi_1^+(X, Y, Z; \hat{\eta})] \geq \E_{n,k}[ \phi_1^+(X, Y, Z; \eta)] - o_P(n^{-1/2})$.

Combining over folds gives the following conclusion:
\begin{align*}
\P( \psi_1^+ \leq \hat{\psi}_1^+ + z_{1 - \alpha} \hat{\sigma}_{1,+}) &\geq \P \left( \psi_1^+ \leq \E_n[ \phi_1^+(X, Y, Z; \eta)] + z_{1 - \alpha} \hat{\sigma}_{1,+} - o_P(1/\sqrt{n}) \right)\\
&= \P( \sqrt{n} ( \psi_1^+ - \E_n[ \phi_1^+(X, Y, Z; \eta)]) \leq z_{1 - \alpha}( n \hat{\sigma}_{1,+})^2 + o_P(1))\\
&= \P( \sqrt{n}( \psi_1^+ - \E_n[ \phi_1^+(X, Y, Z; \eta)]) \leq z_{1 - \alpha} (\Var( \phi_1^+(X, Y, Z; \eta) + o_P(1))^{1/2} + o_P(1))\\
&\rightarrow 1 - \alpha.
\end{align*}
Thus, valid inference holds also in the binary case.
\end{proof}

\subsection{Proof of \Cref{theorem:transformed_outcome_rate}}

The proof is broken into two parts.  The first part collects the implications of our assumptions for the curvature and concentration of the loss function.  The second part uses the standard empirical risk minimization results to get the rate of convergence.  

In both parts, we use the following quantities and notations.  $N$ denotes the size of the dataset $\mathcal{I}_2$, and $\E_N[\cdot] = \tfrac{1}{N} \sum_{i \in \mathcal{I}_2} [ \cdot ]_i$.  Moreover, we define the following functions:
\begin{align*}
\ell(x, y, z; g) &= \{ \Lambda^{-1} y + (1 - \Lambda^{-1})[ \hat{Q}_+(x, z) + \tfrac{1}{1 - \tau} \{ y - \hat{Q}_+(x, z) \}_+] - g(x, z) \}^2\\
\hat{L}_n(g) &= \E_N[ \ell(X, Y, Z; g)]\\
L_n(g) &= \int \ell(x, y, z; g) \, \d P(x, y, z).
\end{align*}
Finally, we let $\delta_n$ be a sequence with $n^{-1/\alpha} \ll \delta_n \ll n^{-1/\alpha} \log(n)^{2 \gamma / \alpha}$, where $a_n \ll b_n$ means $a_n = o(b_n)$.

\subsubsection{Curvature and concentration} \label{sec:curvature_concentration}

\begin{lemma} \label{lemma:argmin_convergence}
Assume the conditions of \Cref{theorem:transformed_outcome_rate}.  Let $g_n = \argmin_{g \in \mathcal{G}_n} L_N(g)$.  Then $|| g_n - \varrho_+(\cdot, \cdot; \hat{Q}_+) ||_2 \leq \delta_n$ with probability approaching one.
\end{lemma}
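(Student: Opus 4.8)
The plan is to exploit the fact that, given the subsample $\mathcal{I}_1$ used to fit $\hat Q_+$, the population risk $L_n$ is an ordinary squared-error objective whose unconstrained minimizer over \emph{all} functions is the conditional mean of the estimated transformed outcome, which is exactly $\varrho_+(\cdot,\cdot;\hat Q_+)$. Writing $\tilde Y$ for the estimated transformed outcome appearing in $\ell$, a bias--variance (Pythagorean) identity gives, for every $g$,
\begin{align*}
L_n(g) = \E\bigl[\{\tilde Y - \varrho_+(X,Z;\hat Q_+)\}^2 \,\big|\, \mathcal{I}_1\bigr] + \|g - \varrho_+(\cdot,\cdot;\hat Q_+)\|_2^2 ,
\end{align*}
and the first term is finite by the boundedness in Condition~\ref{condition:erm}(a). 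Consequently $g_n$ is simply the best $L^2(P)$ approximation to $\varrho_+(\cdot,\cdot;\hat Q_+)$ within $\mathcal{G}_n$, i.e. $\|g_n - \varrho_+(\cdot,\cdot;\hat Q_+)\|_2 = \inf_{g\in\mathcal{G}_n}\|g-\varrho_+(\cdot,\cdot;\hat Q_+)\|_2$. Since $n^{-1/\alpha}\ll\delta_n$, it therefore suffices to show this infimum is $O_P(n^{-1/\alpha})$, as that forces $\P(\|g_n-\varrho_+(\cdot,\cdot;\hat Q_+)\|_2\le\delta_n)\to1$.

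To control the infimum I would pick the near-optimal sieve approximant $g^\ast\in\mathcal{G}_n$ to the \emph{limiting} target $\bar\varrho_+$ furnished by Condition~\ref{condition:erm}(d), so $\|g^\ast-\bar\varrho_+\|_2 = O(n^{-1/\alpha})$, and apply the triangle inequality
\begin{align*}
\inf_{g\in\mathcal{G}_n}\|g-\varrho_+(\cdot,\cdot;\hat Q_+)\|_2 \;\le\; \|g^\ast-\bar\varrho_+\|_2 + \|\bar\varrho_+-\varrho_+(\cdot,\cdot;\hat Q_+)\|_2 .
\end{align*}
The first summand is $O(n^{-1/\alpha})$ by construction, so the whole argument reduces to bounding the stability term $\|\bar\varrho_+-\varrho_+(\cdot,\cdot;\hat Q_+)\|_2$ — that is, to quantifying how far the conditional mean of the transformed outcome moves when $\hat Q_+$ is substituted for its limit $\bar Q_+$.

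This stability term is the only real obstacle, and it is handled in two regimes matching the hypotheses of \Cref{theorem:transformed_outcome_rate}. When the assumed rate is in $L^2$, i.e. $\|\hat Q_+-\bar Q_+\|_2 = O_P(n^{-1/\alpha})$ (as in part (i) and the second branch of part (ii), allowing a misspecified quantile limit), I would invoke the Lipschitz bound of \Cref{lemma:nuisance_facts}.\ref{item:kappa_lipschitz}, which gives $\|\bar\varrho_+-\varrho_+(\cdot,\cdot;\hat Q_+)\|_2 \le \Lambda\|\hat Q_+-\bar Q_+\|_2 = O_P(n^{-1/\alpha})$. When instead the quantile limit is correct ($\bar Q_+=Q_+$, so $\bar\varrho_+=\rho_+$) and only the weaker rate $\|\hat Q_+-Q_+\|_4 = O_P(n^{-1/2\alpha})$ is available, the Lipschitz bound is too lossy, and I would use the Neyman orthogonality of the CVaR objective from \Cref{lemma:cvar_neyman_orthogonal}: applied at each treatment level it yields the pointwise bound $|\varrho_+(x,z;\hat Q_+)-\rho_+(x,z)| \le \Lambda M\,|\hat Q_+(x,z)-Q_+(x,z)|^2$, and squaring and integrating gives $\|\bar\varrho_+-\varrho_+(\cdot,\cdot;\hat Q_+)\|_2 \le \Lambda M\,\|\hat Q_+-Q_+\|_4^2 = O_P(n^{-1/\alpha})$. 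In both regimes the stability term is $O_P(n^{-1/\alpha})$, hence so is $\|g_n-\varrho_+(\cdot,\cdot;\hat Q_+)\|_2$, which completes the proof. The subtlety worth emphasizing is exactly this dichotomy: orthogonality (which squares the quantile error, so a slower $L^4$ rate suffices) when the quantile regression is consistent for the truth, and plain Lipschitz continuity together with the faster assumed $L^2$ rate when it is not.
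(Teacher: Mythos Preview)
Your proof is correct and follows essentially the same route as the paper: a triangle-inequality split into the sieve approximation error for $\bar\varrho_+$ plus the stability term $\|\bar\varrho_+-\varrho_+(\cdot,\cdot;\hat Q_+)\|_2$, followed by the same two-case analysis (Lipschitz via \Cref{lemma:nuisance_facts}.\ref{item:kappa_lipschitz} when an $L^2$ rate on $\hat Q_+-\bar Q_+$ is available, Neyman orthogonality via \Cref{lemma:cvar_neyman_orthogonal} when only the slower $L^4$ rate to the true quantile holds). Your version is in fact slightly cleaner than the paper's, since you make the Pythagorean identity explicit and thereby justify directly that $g_n$ is the $L^2(P)$-projection of $\varrho_+(\cdot,\cdot;\hat Q_+)$ onto $\mathcal G_n$; the paper asserts $\|g_n-\bar\varrho_+\|_2=O(n^{-1/\alpha})$ from the approximation condition without spelling out this step.
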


\begin{proof}
Since $|| g_n - \varrho_+(\cdot, \cdot; \hat{Q}_+) ||_2 \leq || g_n - \varrho_+(\cdot, \cdot; \bar{Q}_+) ||_2 + || \varrho_+(\cdot, \cdot; \bar{Q}_+) - \varrho_+(\cdot, \cdot; \hat{Q}_+) ||_2$, it suffices to show both terms in the upper bound are less than $\delta_n$ with high probability.

For the first term, this follows from our approximation condition:  $|| g_n - \varrho_+(\cdot, \cdot; \bar{Q}_+) ||_2 = O(n^{-1/\alpha}) = o(\delta_n)$.

For the second term, we consider two cases.  If $|| \hat{Q}_+ - Q_+ ||_2 = O_P(n^{-1/\alpha})$, then \Cref{lemma:nuisance_facts}.\ref{item:kappa_lipschitz} implies $|| \varrho_+(\cdot, \cdot; \bar{Q}_+) - \varrho_+(\cdot, \cdot; \hat{Q}_+) ||_2 \leq \Lambda || \hat{Q}_+ - \bar{Q}_+ ||_2 = O_P(n^{-1/\alpha}) = o_P(\delta_n)$.  If $|| \hat{Q}_+ - \bar{Q}_+ ||_4 = O_P(n^{-1/2\alpha})$, then \Cref{lemma:cvar_neyman_orthogonal} gives $|| \varrho_+(\cdot, \cdot; \bar{Q}_+) - \varrho_+(\cdot, \cdot; \hat{Q}_+) ||_2 \leq \Lambda M || \hat{Q}_+ - Q_+ ||_4^2 = O_P(n^{-1/\alpha}) = o_P(\delta_n)$.  Since the conditions of \Cref{theorem:transformed_outcome_rate} implies one of these must happen, we conclude $|| \varrho_+(\cdot, \cdot; \bar{Q}_+) - \varrho_+(\cdot, \cdot; \hat{Q}_+) ||_2 = o_P(\delta_n)$.
\end{proof}

\begin{lemma} \label{lemma:curvature}
Assume the conditions of \Cref{theorem:transformed_outcome_rate}.  Let $g_n = \argmin_{g \in \mathcal{G}_n} L_n(g)$.  Then with probability approaching one, the following holds:
\begin{align}
L_n(g) - L_n(g_n) \geq \tfrac{1}{2} || g - g_n ||_2^2 \text{ for all $g$ satisfying $|| g - g_n ||_2 \geq \delta_n$} \label{eq:curvature}
\end{align}
\end{lemma}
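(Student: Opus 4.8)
\emph{Proof sketch.} The plan is to reduce \cref{eq:curvature} to a bias--variance decomposition of the squared loss together with the projection estimate already recorded in \Cref{lemma:argmin_convergence}; no empirical-process machinery is needed here, since $L_n$ is a population functional.

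Write $T(x,y,z) = \Lambda^{-1} y + (1 - \Lambda^{-1})[\hat{Q}_+(x,z) + \tfrac{1}{1-\tau}\{y - \hat{Q}_+(x,z)\}_+]$ for the estimated transformed outcome, so that $\ell(x,y,z;g) = (T(x,y,z) - g(x,z))^2$ and, by the very definition of $\varrho_+$, the conditional mean $\E[T(X,Y,Z)\mid X=x,Z=z]$ equals $\bar g(x,z) := \varrho_+(x,z;\hat{Q}_+)$. The boundedness assumptions in \Cref{condition:erm}(a) make $T$, $\bar g$, and every $g \in \mathcal{G}_n$ square-integrable. First I would note that the cross term in the expansion of $L_n(g) = \E_P[(T - g(X,Z))^2]$ vanishes by iterated expectations, so for every realization of $\hat Q_+$ and every $g$,
\[
L_n(g) = \E_P\big[(T - \bar g(X,Z))^2\big] + \| g - \bar g \|_2^2 .
\]
Consequently $g_n = \argmin_{g \in \mathcal{G}_n} L_n(g) = \argmin_{g \in \mathcal{G}_n} \| g - \bar g \|_2^2$, and for every $g$,
\[
L_n(g) - L_n(g_n) = \| g - \bar g \|_2^2 - \| g_n - \bar g \|_2^2 = \| g - g_n \|_2^2 + 2\langle g - g_n,\, g_n - \bar g\rangle ,
\]
where $\langle\cdot,\cdot\rangle$ denotes the $L^2(P)$ inner product; note this identity uses no convexity of $\mathcal{G}_n$.

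Next I would control the cross term via \Cref{lemma:argmin_convergence}, whose proof in fact delivers the quantitative statement $\| g_n - \bar g \|_2 = o_P(\delta_n)$, i.e.\ there is a deterministic $\epsilon_n \downarrow 0$ with $\P(\| g_n - \bar g \|_2 \le \epsilon_n \delta_n) \to 1$. On that event, Cauchy--Schwarz gives $|2\langle g - g_n, g_n - \bar g\rangle| \le 2\epsilon_n \delta_n \| g - g_n \|_2$ simultaneously for all $g$, and for any $g$ with $\| g - g_n \|_2 \ge \delta_n$ this is at most $2\epsilon_n \| g - g_n \|_2^2$, whence
\[
L_n(g) - L_n(g_n) \ge (1 - 2\epsilon_n)\, \| g - g_n \|_2^2 \ge \tfrac12 \| g - g_n \|_2^2
\]
once $n$ is large enough that $\epsilon_n \le \tfrac14$. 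Because the qualifying event concerns only $\hat{Q}_+$ and not $g$, this proves \cref{eq:curvature} uniformly over the admissible $g$ with probability tending to one.

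I do not anticipate a real obstacle: the substantive work has already been absorbed into \Cref{lemma:argmin_convergence} (the sieve approximation property and the Neyman-orthogonality exploited through \Cref{lemma:cvar_neyman_orthogonal}), and the uniform control of $\hat L_n - L_n$ over $\mathcal{G}_n$ via the bracketing-entropy bound in \Cref{condition:erm}(b) is deferred to the empirical-risk step that follows. The only point needing care is using the sharp rate $\| g_n - \bar g \|_2 = o_P(\delta_n)$ rather than the bare bound $\| g_n - \bar g \|_2 \le \delta_n$; with only the latter the same argument yields \cref{eq:curvature} on the slightly smaller range $\| g - g_n \|_2 \ge 4\delta_n$, after which one would simply enlarge $\delta_n$ by a constant factor.
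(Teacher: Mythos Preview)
Your proof is correct and follows essentially the same route as the paper's: both use the bias--variance identity $L_n(g)-L_n(g_n)=\|g-\bar g\|_2^2-\|g_n-\bar g\|_2^2$ and then bound the cross contribution by $2\|g_n-\bar g\|_2\,\|g-g_n\|_2$ (the paper via the reverse triangle inequality and expanding the square, you via polarization and Cauchy--Schwarz), invoking the $o_P(\delta_n)$ rate from the proof of \Cref{lemma:argmin_convergence} to absorb it. Your closing remark about needing the sharp $o_P(\delta_n)$ rate rather than the bare $\le\delta_n$ is exactly the point behind the paper's use of $\delta_n/4$.
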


\begin{proof}
\Cref{lemma:argmin_convergence} implies that $|| g_n - \varrho_+(\cdot, \cdot; \hat{Q}_+) ||_2 \leq \delta_n/4$ with probability approaching one.  On this event, the following inequalities hold for all $g$ satisfying $|| g - g_n ||_2 \geq \delta_n$:
\begin{align*}
L_n(g) - L_n(g_n) &= [L_n(g) - L_n(\varrho_+(\cdot, \cdot; \hat{Q}_+))] - [L_n(g_n) - L_n( \varrho_+(\cdot, \cdot; \hat{Q}_+))]\\
&= || g - \varrho_+(\cdot, \cdot; \hat{Q}_+) ||_2^2 - || g_n - \varrho_+(\cdot, \cdot; \hat{Q}_+) ||_2^2\\
&\geq ( || g - g_n ||_2 - || g_n - \varrho_+(\cdot, \cdot; \hat{Q}_+) ||_2)^2 - || g_n - \varrho_+(\cdot, \cdot; \hat{Q}_+) ||_2^2\\
&= || g - g_n ||_2^2 - 2 || g_n - \varrho_+(\cdot, \cdot; \hat{Q}_+) ||_2 || g - g_n ||_2\\
&\geq \tfrac{1}{2} || g - g_n ||_2^2.
\end{align*}
\end{proof}

\begin{lemma}
Assume the conditions of \Cref{theorem:transformed_outcome_rate}.  Then for all large $n$, there exists a constant $C \equiv C(\alpha, \Lambda, B) < \infty$ such that (\ref{eq:modulus}) holds with probability one for all $\delta \in [\delta_n/2, 1]$.
\begin{align}
\E \bigg[ \sup_{\substack{|| g - g_n ||_2 \leq \delta\\ g \in \mathcal{G}_n}} | \sqrt{N}[\hat{L}_n(g) - L_n(g)] - \sqrt{n} [ \hat{L}_n(g_n) - L_n(g)]| \, \bigg| \mathcal{I}_1 \bigg] \leq C \log(n)^{\gamma} \delta^{2 - \alpha/2} \label{eq:modulus}
\end{align}
Here, we abuse notation and use $\mathcal{I}_1$ as shorthand for $\sigma(\{ X_i, Y_i, Z_i) \} \, : \, i \in \mathcal{I}_1 \}$.
\end{lemma}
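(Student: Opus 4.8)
The plan is to read the quantity inside the supremum as a localized empirical process on the sub-sample $\mathcal{I}_2$ and to bound it by a standard bracketing-entropy maximal inequality, after conditioning on $\mathcal{I}_1$. Conditioning on $\mathcal{I}_1$ freezes $\hat Q_+$, hence also the transformed outcome $W(x,y,z):=\Lambda^{-1}y+(1-\Lambda^{-1})[\hat Q_+(x,z)+\tfrac{1}{1-\tau}\{y-\hat Q_+(x,z)\}_+]$, the sieve $\mathcal{G}_n$, and the population minimizer $g_n=\argmin_{g\in\mathcal{G}_n}L_n(g)$; given $\mathcal{I}_1$, the points $(X_i,Y_i,Z_i)_{i\in\mathcal{I}_2}$ are i.i.d.\ $\sim P$, $\ell(x,y,z;g)=(W(x,y,z)-g(x,z))^2$, and the displayed centered fluctuation equals $\mathbb{G}_N(\ell(\cdot;g)-\ell(\cdot;g_n))$ with $\mathbb{G}_N:=\sqrt N(\E_N-P)$ the $\mathcal{I}_2$-empirical process. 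By the boundedness clause of \Cref{condition:erm} and \Cref{lemma:transformed_outcome_identity} (which gives $W=\hat Q_++\Lambda^{\sign(y-\hat Q_+)}(y-\hat Q_+)$), $|W|$ and $\sup_{g\in\mathcal{G}_n}\|g\|_\infty$ are bounded, almost surely, by a constant $B'=B'(\Lambda,B)$. Writing $\ell(\cdot;g)-\ell(\cdot;g_n)=(g_n-g)(2W-g-g_n)$, the loss differences are uniformly bounded by a constant $M=M(\Lambda,B)$ and satisfy $\|\ell(\cdot;g)-\ell(\cdot;g_n)\|_{P,2}\le 2B'\|g-g_n\|_2$; in particular the localized centered class $\mathcal{F}_\delta:=\{\ell(\cdot;g)-\ell(\cdot;g_n):g\in\mathcal{G}_n,\ \|g-g_n\|_2\le\delta\}$ has $L^2(P)$-radius at most $\delta':=2B'\delta$.

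Next I would transfer the bracketing bound of \Cref{condition:erm}(b) from $\mathcal{G}_n$ to $\mathcal{F}_\delta$. Given an $L^2(P)$ $\epsilon$-bracket $[l,u]$ for $g$, elementary manipulation of the quadratic $g\mapsto(W-g)^2$ shows it is sandwiched between two explicit functions whose pointwise gap is at most $|u-l|\cdot|2W-u-l|\le 2B'|u-l|$; subtracting the fixed function $(W-g_n)^2$ and localizing only shrinks the bracketing numbers, so $\log N_{[]}(s,\mathcal{F}_\delta,L^2(P))\lesssim\log(n)^\gamma s^{-(\alpha-2)}$ with a constant depending only on $\Lambda,B$. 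Since $\alpha\in(2,4)$ the exponent $(\alpha-2)/2$ lies in $(0,1)$, so the bracketing integral is finite and
\[
J_{[]}\bigl(\delta',\mathcal{F}_\delta,L^2(P)\bigr):=\int_0^{\delta'}\!\sqrt{1+\log N_{[]}(s,\mathcal{F}_\delta,L^2(P))}\,\d s\;\lesssim\;\log(n)^{\gamma/2}\,(\delta')^{2-\alpha/2}
\]
for all $\delta'\le 1$ and $n$ large (using $2-\alpha/2<1$ to dominate the ``$1$'' in the integrand). I would then invoke the localized maximal inequality for classes with polynomial bracketing entropy — e.g.\ Lemma~3.4.2 of van der Vaart and Wellner (1996), applicable conditionally on $\mathcal{I}_1$ since $\mathcal{F}_\delta$ is uniformly bounded by $M$ and has $L^2(P)$-radius $\le\delta'$ — to obtain, almost surely,
\[
\E\Bigl[\sup\nolimits_{f\in\mathcal{F}_\delta}\bigl|\mathbb{G}_N f\bigr|\ \Bigm|\ \mathcal{I}_1\Bigr]\;\lesssim\;J_{[]}(\delta',\mathcal{F}_\delta,L^2(P))\Bigl(1+\tfrac{M\,J_{[]}(\delta',\mathcal{F}_\delta,L^2(P))}{(\delta')^2\sqrt N}\Bigr).
\]

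Finally, substituting the integral estimate and using $\delta'\asymp\delta$ and $N\asymp n$, the right-hand side is $\lesssim\log(n)^{\gamma/2}\delta^{2-\alpha/2}+M\log(n)^\gamma\,\delta^{2-\alpha}/\sqrt n$. The first term is at most $C\log(n)^\gamma\delta^{2-\alpha/2}$. For the second, factor $\delta^{2-\alpha}/\sqrt n=\delta^{2-\alpha/2}\cdot(\delta^{-\alpha/2}/\sqrt n)$; on $\delta\in[\delta_n/2,1]$ we have $\delta^{-\alpha/2}/\sqrt n\le(\delta_n/2)^{-\alpha/2}/\sqrt n\to 0$, because $\delta_n\gg n^{-1/\alpha}$ forces $\delta_n^{-\alpha/2}=o(n^{1/2})$, so for $n$ large this factor is $\le 1$ uniformly in $\delta$ and the second term is also $\le C\log(n)^\gamma\delta^{2-\alpha/2}$. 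This gives the claim with $C=C(\alpha,\Lambda,B)$. I expect the only genuinely delicate point to be this last absorption of the $\delta^{2-\alpha}$ term into $\delta^{2-\alpha/2}$ — exactly where the lower-bound half of the two-sided condition on $\delta_n$ enters; everything else is a routine bracketing computation plus an off-the-shelf maximal inequality, with measurability handled by the pointwise-measurability clause of \Cref{condition:erm}.
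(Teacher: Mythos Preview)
Your proposal is correct and follows essentially the same approach as the paper: both condition on $\mathcal{I}_1$, rewrite the supremum as a localized empirical process over loss differences $\ell(\cdot;g)-\ell(\cdot;g_n)$, transfer the bracketing entropy bound from $\mathcal{G}_n$ to the loss-difference class via the Lipschitz property of the quadratic, invoke \citet[Lemma 3.4.2]{vdv_wellner}, and then use the lower bound $\delta\ge\delta_n/2\gg n^{-1/\alpha}$ to absorb the second-order $\delta^{2-\alpha}/\sqrt{n}$ term into the leading $\delta^{2-\alpha/2}$ term. Your log-power bookkeeping (carrying $\log(n)^{\gamma/2}$ through the bracketing integral before crudely bounding it by $\log(n)^\gamma$) is in fact a bit tidier than the paper's.
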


\begin{proof}
Let $\Delta(x, y, z; g) = \ell(x, y, z; g) - \ell(x, y, z; g_n)$ and let $\mathcal{L}_n = \{ \Delta(\cdot, \cdot, \cdot; g) \, : \,  g \in \mathcal{G}_n \}$.  We start by bounding the bracketing entropy $H(s, \mathcal{L}_n)$.  For any $s > 0$ and any $g, g' \in \mathcal{G}_n$ we have:
\begin{align*}
&| \Delta(x, y, z; g) - \Delta(x, y, z; g') |\\
&\quad = |\{ 2[ \hat{Q}_+(x, z) + \Lambda^{\sign \{ y - \hat{Q}_+(x, z) \}} \{ y - \hat{Q}_+(x, z) \}]  - g(x, z) - g'(x, z) \} \{ g(x, z) - g'(x, z) \}|\\
&\quad \leq 2 \Lambda B | g(x, z) - g'(x, z)|
\end{align*}
Thus, any $s$-bracketing of $\mathcal{G}_n$ yields an $s/(2 \Lambda B)$ bracketing of $\mathcal{L}_n$, yielding the bound $H(s, \mathcal{L}_n) \leq (2 \Lambda B)^{\alpha - 2} \log(n)^{\gamma} / s^{\alpha - 2}$.  Plugging this bound in gives control of the bracketing integral of $\mathcal{L}_n$:
\begin{align*}
J_{[]}(\delta, \mathcal{L}_n, || \cdot ||_2) &= \int_0^{\delta} 1 + \sqrt{H(s, \mathcal{L}_n)} \, \d s \leq \delta + \frac{(2 \Lambda B)^{\alpha - 2}}{4 - \alpha} \log(n)^{\gamma} \delta^{2 - \alpha/2}
\end{align*}
For any $\delta \leq 1$ and all large $n$, we may further simplify the bound to $J_{[]}(\delta, \mathcal{L}_n, || \cdot ||_2) \leq \frac{2 (2 \Lambda B)^{\alpha - 2}}{4 - \alpha} \log(n)^{\gamma} \delta^{2 - \alpha/2}$.

Now, we apply \citet[Lemma 3.4.2]{vdv_wellner}, which implies that the following inequalities hold for some universal constant $c < \infty$:
\begin{align*}
&\E \bigg[ \sup_{\substack{|| g - g_n ||_2 < \delta\\ g \in \mathcal{G}_n}} | \sqrt{N}[ \hat{L}_n(g) - L_n(g)] - \sqrt{N} [ \hat{L}_n(g_n) - L_n(g)]| \, \bigg| \, \mathcal{I}_1 \bigg]\\
&\quad = \E \bigg[ \sup_{\substack{|| g - g_n ||_2 < \delta\\ g \in \mathcal{G}_n}} \sqrt{N} \left| \E_N[ \Delta(X, Y, Z; g)] - \int \Delta(x, y, z; g) \, \d P(x, y, z) \right| \, \bigg| \, \mathcal{I}_1  \bigg]\\
&\quad \leq \E \bigg[ \sup_{\substack{|| \Delta ||_2 < (2 \Lambda B) \delta\\ \Delta \in \mathcal{L}_n}} \sqrt{n} \left| \E_n[ \Delta(X, Y, Z; g)] - \int \Delta(x, y, z; g) \, \d P(x, y, z) \right| \, \bigg| \, \mathcal{I}_1 \bigg]\\
&\quad \leq c \left\{ J_{[]}(2 \Lambda B \delta, \mathcal{L}_n, || \cdot ||_2) + \frac{B J_{[]}(2 \Lambda B \delta, \mathcal{L}_n, || \cdot ||_2)^2}{(2 \Lambda B \delta)^2 \sqrt{N}} \right\} \\
&\quad \leq c \left\{ 
\frac{(2 \Lambda B)^{\alpha/2}}{4 - \alpha} \log(n)^{\gamma} \delta^{2 - \alpha/2} + \frac{4 B (2 \Lambda B)^{\alpha - 2}}{(4 - \alpha)^2} \frac{\log(n)^{2 \gamma}}{\sqrt{N}} \delta^{2 - \alpha} 
\right\}.
\end{align*}
For any $\delta < \delta_n$ and all large $n$, the second term in the upper bound is smaller than the first.  Therefore, for large enough $n$, this upper bound can be further simplified to $C(\alpha, \Lambda, B) \log(n)^{\gamma} \delta^{2 - \alpha/2}$.
\end{proof}

\subsubsection{Proof of \Cref{theorem:transformed_outcome_rate}}

\begin{proof}
First, we consider the convergence of $|| \hat{\rho}_+ - g_n ||_2$ to zero, which can be handled by standard M-estimator theory.  Specifically, we apply \citet[Theorem 3.4.1]{vdv_wellner}, with $M_n = L_n$, $\mathbb{M}_n = \hat{L}_n$, $\eta = \infty$, $\phi_n(\delta) = \log(n)^{\gamma} \delta^{2 - \alpha/2}$, and $\delta_n$ as introduced earlier in this section to conclude:
\begin{align*}
|| \hat{\rho}_+ - g_n ||_2 = O_P( n^{-1/\alpha} \log(n)^{2 \gamma / \alpha}).
\end{align*}
The conditions of that Theorem are verified in \Cref{sec:curvature_concentration}, except the result of \Cref{lemma:curvature} holds only with high probability whereas \citet[Theorem 3.4.1]{vdv_wellner} requires it to hold surely.  Nevertheless, the proof still goes through by handling the event where curvature fails in the same way that the proof of \citet[Theorem 3.2.5]{vdv_wellner} handles the event $2d(\hat{\theta}_n, \theta_0) \geq \eta$.

Now, we consider three cases.  First, suppose that $|| \hat{Q}_+ - Q_+ ||_2 = O_P(n^{-1/\alpha})$.  Then $\hat{Q}_+$ is correctly-specified meaning $\bar{\varrho}_+ = \rho_+$, which allows us to write:
\begin{align*}
|| \hat{\rho}_+ - \rho_+ ||_2 &= || \hat{\rho}_+ - \bar{\varrho}_+ ||_2\\
&\leq || \hat{\rho}_+ - g_n ||_2 + || g_n - \bar{\varrho}_+ ||_2\\
&= || \hat{\rho}_+ - g_n ||_2 + \inf_{g \in \mathcal{G}_n} || g - \bar{\varrho}_+ ||_2\\
&= O_P(n^{-1/\alpha} \log(n)^{2 \gamma / \alpha}) + O(n^{-1/\alpha})\\
&= o_P(n^{-1/4}).
\end{align*}
Note that in this case, the convergence rate $|| \hat{Q}_+ - Q_+ || = O_P(n^{-1/\alpha})$ is only used in \Cref{lemma:curvature} to verify the condition of \citet[Theorem 3.4.1]{vdv_wellner}.

Next, consider the case where $|| \hat{Q}_+ - Q_+ ||_4 = O_P(n^{-1/2 \alpha})$.  Then we may write:
\begin{align*}
|| \hat{\rho}_+ - \varrho_+(\cdot, \cdot; \hat{Q}_+) ||_2 &\leq || \hat{\rho}_+ - g_n ||_2 + || g_n - \bar{\varrho}_+ ||_2 + || \bar{\varrho}_+ - \varrho_+(\cdot, \cdot; \hat{Q}_+) ||_2\\
&= O_P(n^{-1/\alpha} \log(n)^{2 \gamma / \alpha}) + O_P(n^{-1/\alpha}) + O_P( || \hat{Q}_+ - Q_+ ||_4^2) &\text{\Cref{lemma:cvar_neyman_orthogonal}}\\
&= O_P(n^{-1/\alpha} \log(n)^{2 \gamma /\alpha}) + O_P(n^{-1/\alpha}) + O_P(n^{-1/\alpha})\\
&= o_P(n^{-1/4}).
\end{align*}

Finally, consider the misspecified case $|| \hat{Q}_+ - \bar{Q}_+ ||_2 = O_P(n^{-1/\alpha})$.
\begin{align*}
|| \hat{\rho}_+ - \varrho_+(\cdot, \cdot; \hat{Q}_+) ||_2 &= || \hat{\rho}_+ - g_n ||_2 + || g_n - \bar{\varrho}_+ ||_2 + || \bar{\varrho}_+ - \varrho_+(\cdot, \cdot; \hat{Q}_+) ||_2\\
&= O_P(n^{-1/\alpha} \log(n)^{2 \gamma / \alpha}) + O_P(n^{-1/\alpha}) + O_P( || \hat{Q}_+ - \bar{Q}_+ ||_2) &\text{\Cref{lemma:nuisance_facts}.\ref{item:kappa_lipschitz}}\\
&= O_P(n^{-1/\alpha} \log(n)^{2 \gamma / \alpha}) + O_P(n^{-1/\alpha}) + O_P(n^{-1/\alpha})\\
&= o_P(n^{-1/4})
\end{align*}

Thus, in all the cases considered in \cref{theorem:transformed_outcome_rate}, least-squares regression on the estimated transformed outcome achieves a convergence rate faster than $n^{-1/4}$.
\end{proof}

\end{document}